\newcommand{\p}[1]{\left(#1\right)}
\renewcommand{\b}[1]{\left[#1\right]}
\newcommand{\E}[1]{\mathbb{E}\b{#1}}
\def\given{\;|\;}
\newcommand\numberthis{\addtocounter{equation}{1}\tag{\theequation}}
\def\dx{\;dx}
\def\dr{\;dr}
\newcommand{\ind}[1]{\mathbbm{1}_{#1}}
\def\R{\mathbb{R}}
\newcommand{\mc}[1]{\mathcal{#1}}
\DeclareMathOperator\erf{erf}
\DeclareMathOperator\cont{cont}
\DeclareMathOperator\swap{swap}
\DeclareMathOperator\inv{inv}
\newtheorem{theorem}{Theorem}
\newtheorem{lemma}{Lemma}
\newtheorem{definition}{Definition}
\theoremstyle{definition}
\title{Algorithmic Monoculture and Social
Welfare\thanks{A version of this paper appears in Proceedings of the National
Academy of Sciences at \url{https://www.pnas.org/content/118/22/e2018340118}}}
\author{Jon Kleinberg \and Manish Raghavan}
\date{}
\begin{document}
\maketitle
\begin{abstract}
  As algorithms are increasingly applied to screen applicants for high-stakes
  decisions in employment, lending, and other domains, concerns have been raised
  about the effects of {\em algorithmic monoculture}, in which many
  decision-makers all rely on the same algorithm. This concern invokes analogies
  to agriculture, where a monocultural system runs the risk of severe harm from
  unexpected shocks. Here we show that the dangers of algorithmic monoculture
  run much deeper, in that monocultural convergence on a single algorithm by a
  group of decision-making agents, even when the algorithm is more accurate for
  any one agent in isolation, can reduce the overall quality of the decisions
  being made by the full collection of agents. Unexpected shocks are therefore
  not needed to expose the risks of monoculture; it can hurt accuracy even under
  ``normal'' operations, and even for algorithms that are more accurate when
  used by only a single decision-maker. Our results rely on minimal assumptions,
  and involve the development of a probabilistic framework for analyzing systems
  that use multiple noisy estimates of a set of alternatives.
\end{abstract}
 \section{Introduction}

The 
rise of algorithms used to shape societal choices has been accompanied by
concerns over \textit{monoculture}---the notion that choices and preferences
will become homogeneous in the face of algorithmic curation. 
One of many canonical articulations of 
this concern was expressed in
the New York Times by Farhad Manjoo, who wrote, 
``Despite the barrage of choice, more of us are
enjoying more of the same songs, movies and TV shows''~\cite{manjoo2019this}.
Because of algorithmic curation, trained on collective social
feedback \cite{salganik2006experimental}, our choices are converging.

When we move from the influence of algorithms on media consumption
and entertainment to their influence on high-stakes 
screening decisions about whom to offer a job or whom to offer a loan,
the concerns about algorithmic monoculture become even starker.
Even if algorithms are more accurate on a case-by-case basis,
a world in which everyone uses the same algorithm is susceptible to
correlated failures when the algorithm finds itself in adverse conditions.
This type of concern invokes an analogy to agriculture, where
monoculture makes crops susceptible to the attack of a single pathogen
\cite{power1987monoculture}; the analogy has become a mainstay of
the computer security literature \cite{birman2009monoculture}, and
it has recently become a source of concern about screening decisions
for jobs or loans as well.
Discussing the post-recession
financial system, Citron and Pasquale write, ``Like monocultural-farming
technology vulnerable to one unanticipated bug, the converging methods of credit
assessment failed spectacularly when macroeconomic conditions
changed''~\cite{citron2014scored}.

The narrative around algorithmic monoculture thus suggests a trade-off:
in ``normal'' conditions, a more accurate algorithm will improve the
average quality of screening decisions, but when conditions change through
an unexpected shock, the results can be dramatically worse.
But is this trade-off genuine?  In the absence of shocks, 
does monocultural convergence on a single, more accurate screening algorithm
necessarily lead to better average outcomes?

In this work, we show that algorithmic monoculture poses risks
even in the absence of shocks.
We investigate a model involving minimal assumptions, in which two
competing firms can either use their own independent
heuristics to perform screening decisions 
or they can use
a more accurate algorithm that is accessible to both of them.
(Again, we think of screening job applicants or loan
applicants as a motivating scenario.)
We find that even though 
it would be rational for each firm in isolation to adopt the algorithm,
it is possible for the use of the algorithm by both firms
to result in decisions that are {\em worse} on average.
This in turn leads, in the language of game theory,
to a type of ``Braess' paradox'' 
\cite{braess1968paradoxon}
for screening algorithms:
the introduction of a more accurate algorithm can drive the firms into
a unique equilibrium that is worse for society than the one that was present
before the algorithm existed.

Note that the harm here is to {\em overall} performance.
Another common concern about algorithmic monoculture in screening 
decisions is the harm it can cause to specific individuals:
if all employers or lenders use the same
algorithm for their screening decisions, then particular applicants
might find themselves locked out of the market when this shared algorithm
doesn't like their application for some reason.
While this is clearly also a significant concern, 
our results show that it would be a mistake
to view the harm to particular applicants
as necessarily balanced against the gains in overall accuracy ---
rather, it is possible for algorithmic monoculture to cause harm not just
to particular applicants but also
to the {\em average} quality of decisions as well.

Our results thus have a counterintuitive flavor to them: if an
algorithm is clearly more accurate than the alternatives
when one entity uses it, why
does the accuracy become worse than the alternatives
when multiple entities use it?
The analysis relies on deriving some novel probabilistic properties of
rankings, establishing that when we are
constructing a ranking from a probability distribution representing a
``noisy'' version of a true ordering, we can sometimes achieve less error
through an incremental construction of the ranking --- building it
one element at a time --- than we can 
by constructing it in a single draw from the distribution.
We now set up the basic model, and then frame the probabilistic
questions that underpin its analysis.

\begin{figure*}[ht]
  \centering
  \includegraphics[width=0.8\linewidth]{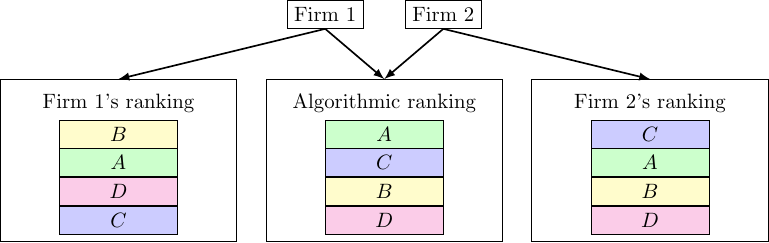}
  \caption{Each firm has the choice to use either their own private ranking or
    the common algorithmic ranking to order the $n$ candidates. In a random order,
    each firm hires the highest-ranked available candidate according to the
    ranking they chose. For example, if Firm 1 uses their private ranking and Firm
    2 uses the algorithmic ranking, then Firm 1 hires candidate $B$ and Firm 2
    hires candidate $A$. If both firms use the algorithmic ranking, then the
    firm randomly selected to hire first hires candidate $A$ and the firm
  randomly selected to hire second hires candidate $C$.}%
  \label{fig:example}
\end{figure*}

\section{Algorithmic hiring as a case study}

To instantiate the ideas introduced thus far, 
we'll focus on the case of algorithmic hiring,
where recruiters make decisions based in part on scores or recommendations
provided by data-driven algorithms. In this setting, we'll propose and analyze a
stylized model of algorithmic hiring with which we can begin to investigate the
effects of algorithmic monoculture.

Informally, we can think of a simplified hiring process as follows: rank all of
the candidates and select the first available one.
We suppose that each firm has two options to form this ranking: either develop
their own, private ranking (which we will refer to as using a ``human
evaluator''), or use an algorithmically produced ranking. We assume that there
is a single vendor of algorithmic rankings, so all firms choosing to use the
algorithm receive the same ranking. The firms proceed in a random order, each hiring
their favorite remaining candidate according to the ranking they're
using---human-generated or algorithmic (see Figure~\ref{fig:example} for an
example). Thus, we can frame the effects of
monoculture as follows: are firms better off using the more accurate, common
algorithm, or should they instead employ their own less accurate, but private,
evaluations?

In what follows, we'll introduce a formal model of evaluation and selection,
using it to analyze a setting in which firms seek to hire candidates.

\subsection{Modeling ranking}

More formally, we model the $n$ candidates as having intrinsic values $x_1,
\dots, x_n$, where any employer would derive utility $x_i$ from hiring candidate
$i$. Throughout the paper, we assume without loss of generality that $x_1 > x_2
> \dots > x_n$. These values, however, are unknown to the employer; instead,
they must use some noisy procedure to rank the candidates. We model such a
procedure as a randomized mechanism $\mc R$ that takes in the true candidate
values and draws a permutation $\pi$ over those candidates from some
distribution. Our main results hold for families of distributions over
permutations as defined below:
\begin{definition}[Noisy permutation family]
  \label{def:family}
  A noisy permutation family $\mc F_\theta$ is a family of
  distributions over permutations that satisfies the following conditions for
  any $\theta > 0$ and set of candidates $\mathbf{x}$:
  \begin{enumerate}
    \item \textbf{(Differentiability)} For any permutation $\pi$, $\Pr_{\mc
      F_\theta}[\pi]$ is continuous and differentiable in $\theta$.
    \item \textbf{(Asymptotic optimality)} For the true ranking $\pi^*$,
      $\lim_{\theta \to \infty} \Pr_{\mc F_\theta}[\pi^*] = 1$.
    \item \textbf{(Monotonicity)} For any (possibly empty) $S \subset
      \mathbf{x}$, let $\pi^{(-S)}$ be the partial ranking produced by removing
      the items in $S$ from $\pi$. Let $\pi_1^{(-S)}$ denote the value of the
      top-ranked candidate according to $\pi^{(-S)}$. For any $\theta' >
      \theta$,
      \begin{equation}
        \label{eq:monotonicity}
        \mathbb{E}_{\mc F_{\theta'}}\b{\pi_1^{(-S)}} \ge \mathbb{E}_{\mc
        F_{\theta}}\b{\pi_1^{(-S)}}.
      \end{equation}
      Moreover, for $S = \emptyset$,~\eqref{eq:monotonicity} holds with strict
      inequality.
  \end{enumerate}
\end{definition}

$\theta$ serves as an ``accuracy parameter'': for large $\theta$, the noisy
ranking converges to the true ranking over candidates. The monotonicity
condition states that a higher value of $\theta$ leads to a better first choice,
even if some of the candidates are removed after ranking. Removal after ranking
(as opposed to before) is important because some of the ranking models we will
consider later do not satisfy Independence of Irrelevant Alternatives. Examples
of noisy permutation families include Random Utility
Models~\cite{thurstone1927law} and the Mallows Model~\cite{mallows1957non}, both
of which we will discuss in detail later.

As an objective function to evaluate the effects of different approaches
to ranking and selection, we'll consider each individual employer's utility
as well as the sum of employers' utilities.  We think of this latter
sum as the {\em social welfare}, since it represents the total quality
of the applicants who are hired by any firm.
(For example, if all firms deterministically used the correct ranking,
then the top applicants would be the ones hired, leading to the highest
possible social welfare.)

\subsection{Modeling selection}

Each firm in our model has access to the same underlying pool of $n$ candidates,
which they rank using a randomized mechanism $\mc R$ to get a permutation $\pi$
as described above. Then, in a random order, each firm hires the highest-ranked
remaining candidate according to their ranking. Thus, if two firms both rank
candidate $i$ first, only one of them can hire $i$; the other must hire the next
available candidate according to their ranking. In our model, candidates
automatically accept the offer they get from a firm. For the sake of simplicity,
throughout this paper, we restrict ourselves to the case where there are two
firms hiring one candidate each, although our model readily generalizes to more
complex cases.

As described earlier, each firm can choose to use either a private ``human
evaluator'' or an algorithmically generated ranking as its randomized mechanism
$\mc R$. We assume that both candidate mechanisms come from a 
noisy permutation family $F_\theta$, with differing values of
the accuracy parameter $\theta$: human evaluators all have the same accuracy
$\theta_H$, and the algorithm has accuracy $\theta_A$. However, while the human
evaluator produces a ranking independent of any other firm, the algorithmically
generated ranking is identical for all firms who choose to use it. In other
words, if two firms choose to use the algorithmically generated ranking, they
will both receive the same permutation $\pi$.

The choice of which ranking mechanism to use leads to a game-theoretic setting:
both firms know the accuracy parameters of the human evaluators ($\theta_H$) and
the algorithm ($\theta_A$), and they must decide whether to use a human
evaluator or the algorithm. This choice introduces a subtlety: for many ranking
models, a firm's rational behavior depends not only on the accuracy of the
ranking mechanism, but also on the underlying candidate values $x_1, \dots,
x_n$. Thus, to fully specify a firm's behavior, we assume that $x_1, \dots, x_n$
are drawn from a known joint distribution $\mc D$. Our main results will hold
for any $\mc D$, meaning they apply even when the candidate values (but not
their identities) are deterministically known.

\subsection{Stating the main result}

Our main result is a pair of intuitive conditions under which a Braess'
Paradox-style result occurs---in other words, conditions under which 
there are accuracy parameters for which both firms
rationally choose to use the algorithmic ranking, but social welfare (and each
individual firm's utility) would be higher if both firms used independent human
evaluators.
Recall that the two firms hire in a random order. For a permutation $\pi$, let
$\pi_i$ denote the value of the $i$th-ranked candidate according to $\pi$.

We first state the two conditions, and then the theorem based on them.

\begin{definition}[Preference for the first position.]
A candidate distribution $\mc D$ and
noisy permutation family $\mc F_\theta$
exhibits a {\em preference for the first position} if 
for all
      $\theta > 0$, if $\pi, \sigma \sim \mc F_\theta$,
      \begin{equation*}
        \E{\pi_1 - \pi_2 \given \pi_1 \ne \sigma_1} > 0.
      \end{equation*}
\label{def:first}
\end{definition}
In other words, for any $\theta > 0$, suppose we draw two permutations $\pi$ and
$\sigma$ independently from $F_\theta$, and suppose that the first-ranked
candidates differ in $\pi$ and $\sigma$. Then the expected value of the
first-ranked candidate in $\pi$ is strictly greater than the expected value of
the second-ranked candidate in $\pi$.

\begin{definition}[Preference for weaker competition.]
A candidate distribution $\mc D$ and
noisy permutation family $\mc F_\theta$,
exhibits a {\em preference for weaker competition} if the following holds:
for all $\theta_1 > \theta_2$, $\sigma \sim \mc F_{\theta_1}$ and
      $\pi, \tau \sim \mc F_{\theta_2}$,
      \begin{equation*}
        \E{\pi_1^{(-\{\sigma_1\})}} < \E{\pi_1^{(-\{\tau_1\})}}.
      \end{equation*}
\label{def:weaker}
\end{definition}

Intuitively, suppose we have a higher accuracy parameter $\theta_1$ and a lower
accuracy parameter $\theta_2 < \theta_1$; we draw a permutation $\pi$ from
$\mc F_{\theta_2}$; and we then derive two permutations from $\pi$:
$\pi^{(-\{\sigma_1\})}$ obtained by deleting the first-ranked element of a
permutation $\sigma$ drawn from the more accurate distribution $\mc F_{\theta_1}$,
and $\pi^{(-\{\tau_1\})}$ obtained by deleting the first-ranked element of a
permutation $\tau$ drawn from the less accurate distribution $\mc F_{\theta_2}$.

Then the expected value of the first-ranked candidate in
$\pi^{(-\{\tau_1\})}$ is strictly greater than the 
expected value of the first-ranked candidate in
$\pi^{(-\{\sigma_1\})}$ --- that is,
when a random candidate is removed from $\pi$,
the best remaining candidate is better in
expectation when the randomly removed candidate is chosen based on a
noisier ranking.

Using these two conditions, we can state our theorem.

\begin{theorem}
  \label{thm:main}
  Suppose that a given candidate distribution $\mc D$ and 
  noisy permutation family $\mc F_\theta$
  satisfy Definition \ref{def:first} (preference for the first position) 
  and Definition \ref{def:weaker} (preference for weaker competition).

  Then, for any $\theta_H$, there exists $\theta_A > \theta_H$ such that
  using the algorithmic ranking is a strictly dominant strategy for both firms,
  but social welfare would be higher if both firms used human evaluators.
\end{theorem}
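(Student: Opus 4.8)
The plan is to reduce the theorem to a pair of scalar inequalities in $\theta_A$ (with $\theta_H$ held fixed) and then locate a value of $\theta_A$ satisfying both, via a continuity argument anchored at the welfare-indifference point. First I would write out the full $2\times 2$ payoff table. Abbreviating $a(\theta) = \mathbb{E}_{\mc F_\theta}\b{\pi_1}$, $s(\theta) = \mathbb{E}_{\mc F_\theta}\b{\pi_2}$, and $b(\theta \mid \theta') = \mathbb{E}\b{\pi_1^{(-\{\rho_1\})}}$ for independent $\pi \sim \mc F_\theta$ and $\rho \sim \mc F_{\theta'}$, a direct computation over the random hiring order gives each firm's utility in the four profiles: in $(H,H)$ it is $\tfrac12 a(\theta_H) + \tfrac12 b(\theta_H \mid \theta_H)$; in $(A,A)$ it is $\tfrac12 a(\theta_A) + \tfrac12 s(\theta_A)$; and the deviator's utility when one firm plays $H$ against an $A$-opponent is $\tfrac12 a(\theta_H) + \tfrac12 b(\theta_H \mid \theta_A)$. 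The social welfares (twice the symmetric per-firm utility) are $W_{HH} = a(\theta_H) + b(\theta_H \mid \theta_H)$ and $W_{AA}(\theta_A) = a(\theta_A) + s(\theta_A)$. Strict dominance of $A$ then amounts to two inequalities: (a) $A$ beats $H$ against an $H$-opponent, and (b) $A$ beats $H$ against an $A$-opponent; the welfare claim is $W_{HH} > W_{AA}(\theta_A)$.

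Next I would dispatch the easy pieces. Inequality (a) compares $\tfrac12 a(\theta_A) + \tfrac12 b(\theta_A \mid \theta_H)$ with $\tfrac12 a(\theta_H) + \tfrac12 b(\theta_H \mid \theta_H)$; both terms increase as the firm's own accuracy rises from $\theta_H$ to $\theta_A$ --- $a(\theta_A) > a(\theta_H)$ by the strict $S=\emptyset$ case of Monotonicity, and $b(\theta_A \mid \theta_H) \ge b(\theta_H \mid \theta_H)$ by conditioning on the competitor's removed candidate $\rho_1$ and applying Monotonicity with that fixed singleton $S$ --- so (a) holds for every $\theta_A > \theta_H$. For the welfare endpoints I would establish $W_{HH} > W_{AA}(\theta_H)$ and $W_{HH} < \lim_{\theta_A \to \infty} W_{AA}(\theta_A)$. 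The first reduces, after cancelling $a(\theta_H)$, to $b(\theta_H \mid \theta_H) > s(\theta_H)$; writing $\pi_1^{(-\{\tau_1\})}$ as $\pi_1$ when $\tau_1 \ne \pi_1$ and $\pi_2$ otherwise turns the gap into $\Pr\b{\tau_1 \ne \pi_1}\cdot\E{\pi_1 - \pi_2 \given \tau_1 \ne \pi_1}$, which is positive by preference for the first position --- this is precisely the cost of the forced collision on the shared top candidate. The second holds because $W_{AA}(\theta_A) \to \mathbb{E}_{\mc D}\b{x_1 + x_2}$ as the shared ranking converges to $\pi^*$ (Asymptotic optimality), while $W_{HH} < \mathbb{E}_{\mc D}\b{x_1 + x_2}$ strictly for finite $\theta_H$, since the two hired candidates are distinct and hence their total value is at most $x_1 + x_2$.

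By continuity of $W_{AA}$ in $\theta_A$ (Differentiability) I would set $\theta_A^{\star} = \inf\{\theta > \theta_H : W_{HH} = W_{AA}(\theta)\}$; the endpoint computation makes $\theta_A^{\star}$ finite with $\theta_A^{\star} > \theta_H$, and $W_{HH} > W_{AA}(\theta_A)$ on $[\theta_H, \theta_A^{\star})$ with $W_{HH} = W_{AA}(\theta_A^{\star})$. The crux is to verify inequality (b) at this welfare-indifference point. Multiplying (b) by $2$ gives $W_{AA}(\theta_A) = a(\theta_A) + s(\theta_A) > a(\theta_H) + b(\theta_H \mid \theta_A)$; substituting $W_{AA}(\theta_A^{\star}) = W_{HH} = a(\theta_H) + b(\theta_H \mid \theta_H)$ collapses this to $b(\theta_H \mid \theta_H) > b(\theta_H \mid \theta_A^{\star})$, which is exactly preference for weaker competition with $\theta_1 = \theta_A^{\star} > \theta_2 = \theta_H$. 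Hence (b) holds strictly at $\theta_A^{\star}$, and by continuity on an interval just below it; choosing $\theta_A \in (\theta_A^{\star} - \epsilon, \theta_A^{\star})$ then satisfies (a), (b), and $W_{HH} > W_{AA}(\theta_A)$ simultaneously, which proves the theorem.

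I expect the main obstacle to be the bookkeeping of the final paragraph --- specifically, recognizing that the individual-rationality inequality (b) and the social inequality both constrain the single scalar $a(\theta_A) + s(\theta_A)$ from opposite sides, sandwiching it between $a(\theta_H) + b(\theta_H \mid \theta_A)$ and $a(\theta_H) + b(\theta_H \mid \theta_H)$, and that the gap between these two bounds is furnished precisely by preference for weaker competition. Evaluating (b) exactly at $\theta_A^{\star}$ is the device that converts this sandwich into the two named hypotheses; everything remaining is a continuity argument for which the Differentiability condition suffices.
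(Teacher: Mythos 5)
Your architecture is a genuine dual of the paper's, not a restatement of it: the paper anchors at the \emph{dominance}-indifference point (the $\theta_A^*$ where $U_A + U_{AA} = U_H + U_{AH}$) and then invokes preference for weaker competition to produce the welfare gap there, whereas you anchor at the \emph{welfare}-indifference point (where $W_{AA} = W_{HH}$) and invoke preference for weaker competition to produce strict dominance there. Your payoff table is right, your dispatch of inequality (a) via Monotonicity matches the paper's treatment of its inequality~\eqref{eq:dom2}, your identity $b(\theta_H \mid \theta_H) - s(\theta_H) = \Pr[\pi_1 \ne \tau_1]\cdot\E{\pi_1 - \pi_2 \given \pi_1 \ne \tau_1}$ is exactly the paper's~\eqref{eq:UAH_UAA_diff}, and your substitution trick at $\theta_A^\star$ correctly collapses inequality (b) to Definition~\ref{def:weaker}. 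A pleasant side effect of your anchoring is that your chosen $\theta_A$ is \emph{strictly} dominant immediately, so you avoid the paper's final perturbation step from weak to strict dominance.

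However, there is a genuine gap at your far endpoint, and your construction stands on it: you need $W_{HH} < \mathbb{E}_{\mc D}\b{x_1 + x_2}$ \emph{strictly}, since otherwise $W_{AA}(\theta) \le W_{HH}$ for every finite $\theta$ and the crossing point $\theta_A^\star$ that anchors everything never exists. Your justification --- the two hired candidates are distinct, hence their total value is at most $x_1 + x_2$ --- yields only the weak inequality, and the strict version does \emph{not} follow from distinctness or from the axioms of Definition~\ref{def:family} alone. Concretely, take the family supported on just two permutations, $\pi^*$ and the permutation swapping $x_1$ and $x_2$, with $\Pr[\pi^*] = p(\theta)$ smooth and strictly increasing to $1$: this satisfies differentiability, asymptotic optimality, and monotonicity (strictly for $S = \emptyset$), yet under $HH$ the hired pair is always exactly $\{x_1, x_2\}$, so $W_{HH} = \mathbb{E}\b{x_1+x_2}$ with equality. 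What rescues your claim is Definition~\ref{def:first}: if the hired pair were almost surely $\{x_1,x_2\}$, then via the positive-probability event $\pi_1 = \tau_1$ every permutation in the support would have $\{\pi_1, \pi_2\} = \{x_1, x_2\}$, and then conditioned on $\pi_1 \ne \sigma_1$ exchangeability forces $\E{\pi_1 - \pi_2 \given \pi_1 \ne \sigma_1} = 0$, contradicting preference for the first position (indeed the two-permutation family above violates it). So your endpoint claim is true under the theorem's hypotheses, but it needs this extra argument invoking Definition~\ref{def:first}, which you never cite at this step. Note that the paper's anchoring sidesteps this entirely: its far-endpoint comparison only requires $U_H(\theta_A,\theta_H) < \mathbb{E}_{\mc D}\b{x_1}$, which follows from strict monotonicity plus asymptotic optimality alone --- this asymmetry in how hard the two endpoints are to certify is the real cost of your otherwise elegant dual route.
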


\subsection{A Preference for Independence}

Before we prove Theorem~\ref{thm:main}, we provide some intuition for the two
conditions in Definitions~\ref{def:first} and~\ref{def:weaker}. 
The second condition essentially says that it is better to have a
worse competitor: the firm randomly selected to hire second is better off if the
firm that hires first uses a less accurate ranking (in this case, a human
evaluator instead of the algorithmic ranking).

The first condition states that when two identically distributed permutations
disagree on their first element, the first-ranked candidate according to either
permutation is still better, in expectation, than the second-ranked candidate
according to either permutation. In what follows, we'll demonstrate that this
condition implies that firms in our model rationally prefer to make decisions
using independent (but equally accurate) rankings.

To do so, we need to introduce some notation. Recall that the two firms hire in
a random order. Given a candidate distribution $\mc D$, let $U_{s}(\theta_A,
\theta_H)$ denote the expected utility of the first firm to hire a candidate
when using ranking $s$, where $s \in \{A, H\}$ is either the algorithmic ranking
or the ranking generated by a human evaluator respectively. Similarly, let
$U_{s_1 s_2}(\theta_A, \theta_H)$ be the expected utility of the second firm to
hire given that the first firm used strategy $s_1$ and the second firm uses
strategy $s_2$, where again $s_1, s_2 \in \{A, H\}$. Finally, let $\pi, \sigma
\sim \mc F_{\theta}$.

In what follows, we will show that for any $\theta$,
\begin{equation}
  \E{\pi_1 - \pi_2 \given \pi_1 \ne \sigma_1} > 0
  \Longleftrightarrow U_{AH}(\theta, \theta) > U_{AA}(\theta, \theta).
  \label{eq:AH_AA_cond}
\end{equation}
In other words, whenever a ranking model meets Definition~\ref{def:first}, the
firm chosen to select second will prefer to use an \textit{independent} ranking
mechanism from it's competitor, given that the ranking mechanisms are equally
accurate.

First, we can write
\begin{align*}
  U_{AH}(\theta_A, \theta_H) &= \E{\pi_1 \cdot \ind{\pi_1 \ne \sigma_1} + \pi_2 \cdot \ind{\pi_1 =
  \sigma_1}} \\
  U_{AA}(\theta_A, \theta_H) &= \E{\sigma_2} \\
  &= \E{\sigma_2 \cdot \ind{\pi_1 \ne \sigma_1} + \sigma_2 \cdot \ind{\pi_1 =
  \sigma_1}}
\end{align*}
Thus,
\begin{align*}
  U_{AH}(\theta_A, \theta_H) &- U_{AA}(\theta_A, \theta_H) \\
  &= \E{(\pi_1 - \sigma_2) \cdot \ind{\pi_1 \ne \sigma_1} + (\pi_2 - \sigma_2)
  \cdot \ind{\pi_1 = \sigma_1}}.
\end{align*}
Conditioned on either $\pi_1 = \sigma_1$ or $\pi_1 \ne \sigma_1$, $\pi_2$ and
$\sigma_2$ are identically distributed and therefore have equal expectations.
As a result,
\begin{equation}
  \label{eq:UAH_UAA_diff}
  U_{AH}(\theta_A, \theta_H) - U_{AA}(\theta_A, \theta_H) = \E{(\pi_1 - \pi_2)
  \cdot \ind{\pi_1 \ne \sigma_1}},
\end{equation}
which implies~\eqref{eq:AH_AA_cond}. Thus, whenever a ranking model meets
Definition~\ref{def:first}, firms rationally prefer independent assessments, all
else equal.

To provide some intuition for what this preference for independence entails,
consider a setting where a hiring committee seeks to hire two candidates. They
meet, produce a ranking $\sigma$, and hire $\sigma_1$ (the best candidate
according to $\sigma$). Suppose they have the option to either hire $\sigma_2$ or
reconvene the next day to form an independent ranking $\pi$ and hire the best
remaining candidate according to $\pi$; which option should they choose? It's
not immediately clear why one option should be better than the other. However,
whenever Definition~\ref{def:first} is met, the committee should prefer to
reconvene and make their second hire according to a new ranking $\pi$. After
proving Theorem~\ref{thm:main}, we will provide natural ranking models that meet
Definition~\ref{def:first}, implying that under these ranking models,
independent re-ranking can be beneficial.

\subsection{Proving Theorem~\ref{thm:main}}

With this intuition, we are ready to prove Theorem~\ref{thm:main}.

\begin{proof}[Proof of Theorem~\ref{thm:main}]
  For given values of $\theta_A$ and $\theta_H$, using the algorithmic ranking
  is a strictly dominant strategy as long as
  \begin{align}
    U_A(\theta_A, \theta_H) + U_{AA}(\theta_A, \theta_H) &> U_H(\theta_A,
    \theta_H) + U_{AH}(\theta_A, \theta_H) \label{eq:dom1} \\
    U_A(\theta_A, \theta_H) + U_{HA}(\theta_A, \theta_H) &> U_H(\theta_A,
    \theta_H) + U_{HH}(\theta_A, \theta_H) \label{eq:dom2}
  \end{align}
  Note that~\eqref{eq:dom2} is always true for $\theta_A > \theta_H$ by the
  monotonicity assumption on $\mc F_\theta$: $U_A(\theta_A, \theta_H) \ge
  U_H(\theta_A, \theta_H)$ because a more accurate ranking produces a top-ranked
  candidate with higher expected value, and $U_{HA}(\theta_A, \theta_H) \ge
  U_{HH}(\theta_A, \theta_H)$ because this holds even conditioned on removing
  any candidate from the pool (in this case, the candidate randomly selected by
  the firm that hires first). Crucially, in~\eqref{eq:dom2}, the first firm's
  random selection is independent from the second firm's selection; the same
  logic could not be used to argue that~\eqref{eq:dom1} always holds for
  $\theta_A \ge \theta_H$. Moreover, when $\theta_A > \theta_H$, $U_A(\theta_A,
  \theta_H) > U_H(\theta_A, \theta_H)$ by the monotonicity assumption,
  meaning~\eqref{eq:dom2} holds.

  Let $W_{s_1 s_2}(\theta_A, \theta_H)$ denote social welfare when the two firms
  employ strategies $s_1, s_2 \in \{A, H\}$. Then, when both firms use the
  algorithmic ranking, social welfare is
  \begin{align*}
    W_{AA}(\theta_A, \theta_H) = U_A(\theta_A, \theta_H) + U_{AA}(\theta_A,
    \theta_H).
  \end{align*}

  By~\eqref{eq:AH_AA_cond}, Definition~\ref{def:first} implies that for any
  $\theta$, $U_{AA}(\theta, \theta) < U_{AH}(\theta, \theta)$, implying
  \begin{equation*}
    U_A(\theta_H, \theta_H) + U_{AA}(\theta_H, \theta_H) < U_H(\theta_H,
    \theta_H) + U_{AH}(\theta_H, \theta_H).
  \end{equation*}
  However, by the optimality assumption
  on $\mc F_\theta$ in Definition~\ref{def:family}, for sufficiently large $\hat
  \theta_A$,
  \begin{equation*}
    U_A(\hat \theta_A, \theta_H) + U_{AA}(\hat \theta_A, \theta_H) > U_H(\hat
    \theta_A, \theta_H) + U_{AH}(\hat \theta_A, \theta_H).
  \end{equation*}

  Note that $U_{s_1}(\theta_A, \theta_H)$ and $U_{s_1 s_2}(\theta_A,
  \theta_H)$ are continuous with respect to $\theta_A$ for any $s_1, s_2 \in \{A,
  H\}$ since they are expectations over discrete distributions with
  probabilities that are by assumption differentiable with respect to
  $\theta_A$. Therefore, by the Differentiability assumption on $\mc F_\theta$
  from Definition~\ref{def:family}, there is some $\theta_A^* > \theta_H$ such
  that
  \begin{equation}
    U_A(\theta_A^*, \theta_H) + U_{AA}(\theta_A^*, \theta_H) = U_H(\theta_A^*,
    \theta_H) + U_{AH}(\theta_A^*, \theta_H), \label{eq:indifferent}
  \end{equation}
  i.e., given that its competitor uses the algorithmic ranking, a firm is
  indifferent between the two strategies. For such $\theta_A^*$, using the
  algorithmic ranking is still a weakly dominant strategy. By definition of
  $W_{AA}$,
  \begin{align*}
    W_{AA}(\theta_A^*, \theta_H)
    &= U_H(\theta_A^*, \theta_H) + U_{AH}(\theta_A^*, \theta_H).
  \end{align*}
  If both firms had instead used human evaluators, social welfare would be
  \begin{align*}
    W_{HH}(\theta_A^*, \theta_H) = U_H(\theta_A^*, \theta_H) +
    U_{HH}(\theta_A^*, \theta_H).
  \end{align*}
  By Definition~\ref{def:weaker}, for $\sigma \sim \mc F_{\theta_{A^*}}$ and
  $\pi, \tau \sim \mc F_{\theta_H}$,
  \begin{equation*}
    \E{\pi_1^{(-\{\sigma_1\})}} < \E{\pi_1^{(-\{\tau_1\})}}.
  \end{equation*}
  Note that
  \begin{align*}
    U_{AH}(\theta_A^*, \theta_H) &= \E{\pi_1^{(-\{\sigma_1\})}} \\
    U_{HH}(\theta_A^*, \theta_H) &= \E{\pi_1^{(-\{\tau_1\})}}
  \end{align*}
  Thus, Definition~\ref{def:weaker} implies that for $\theta_{A^*} > \theta_H$,
  $U_{HH}(\theta_A^*, \theta_H) > U_{AH}(\theta_A^*, \theta_H)$. As a result for
  $\theta_{A^*} > \theta_H$, using the algorithmic ranking is a weakly dominant
  strategy, but
  \begin{align*}
    W_{HH}(\theta_A^*, \theta_H) 
    &= U_H(\theta_A^*, \theta_H) + U_{HH}(\theta_A^*, \theta_H) \\
    &> U_H(\theta_A^*, \theta_H) + U_{AH}(\theta_A^*, \theta_H) \\
    &= U_A(\theta_A^*, \theta_H) + U_{AA}(\theta_A^*, \theta_H) \\
    &= W_{AA}(\theta_A^*, \theta_H),
  \end{align*}
  meaning social welfare would have been higher had both firms used human
  evaluators.

  We can show that this effect persists for a value $\theta_A'$ such that using
  the algorithmic ranking is a \textit{strictly} dominant strategy. Intuitively,
  this is simply by slightly increasing $\theta_A^*$ so the algorithmic ranking
  is strictly dominant. For fixed $\theta_H$, define
  \begin{align*}
    f(\theta_A) &= U_A(\theta_A, \theta_H) + U_{AA}(\theta_A, \theta_H) \\
    g(\theta_A) &= U_H(\theta_A, \theta_H) + U_{AH}(\theta_A, \theta_H) \\
    h(\theta_A) &= U_H(\theta_A, \theta_H) + U_{HH}(\theta_A, \theta_H)
  \end{align*}
  Because~\eqref{eq:dom2} always holds for $\theta_A > \theta_H$, it suffices to
  show that there exists $\theta_A'$ such that $g(\theta_A') < f(\theta_A') <
  h(\theta_A')$. This is because $g(\theta_A') < f(\theta_A')$ is equivalent
  to~\eqref{eq:dom1} and $f(\theta_A') < h(\theta_A')$ is equivalent to
  $W_{AA}(\theta_A', \theta_H) < W_{HH}(\theta_A', \theta_H)$.

  First, note that $h(\theta_A)$ is a constant, and by
  Definition~\ref{def:weaker}, $g(\theta_A) < h(\theta_A)$ for all $\theta_A >
  \theta_H$. By the optimality assumption of Definition~\ref{def:family}, there
  exists sufficiently large $\hat \theta_A$ such that $f(\hat \theta_A) > g(\hat
  \theta_A)$. Recall that by definition of $\theta_A^*$, $f(\theta_A^*) =
  g(\theta_A^*)$. Both $f$ and $g$ are continuous by the Differentiability
  assumption in Definition~\ref{def:family}. Thus, there must exist some
  $\theta_A' > \theta_A^*$ such that $g(\theta_A') < f(\theta_A') <
  h(\theta_A')$. This means that for $\theta_A'$, using the algorithmic ranking
  is a strictly dominant strategy, but social welfare would still be larger if
  both firms used human evaluators.
\end{proof}

\section{Instantiating with Ranking Models}

Thus far, we have described a general set of conditions under which algorithmic
monoculture can lead to a reduction in social welfare. Under which ranking
models do these conditions hold? In the remainder of this paper, we instantiate
the model with two well-studied ranking models: Random Utility Models
(RUMs)~\cite{thurstone1927law} and the Mallows Model~\cite{mallows1957non}.
While RUMs do not always satisfy Definitions~\ref{def:first}
and~\ref{def:weaker}, they do under some realistic parameterizations, regardless
of the candidate distribution $\mc D$. Under the Mallows Model,
Definitions~\ref{def:first} and~\ref{def:weaker} are always met, meaning that
for any candidate distribution $\mc D$ and human evaluator accuracy $\theta_H$,
there exists an accuracy parameter $\theta_A$ such that a common algorithmic
ranking with accuracy $\theta_A$ decreases social welfare.

\begin{figure*}[ht]
  \centering
  \includegraphics[width=0.8\linewidth]{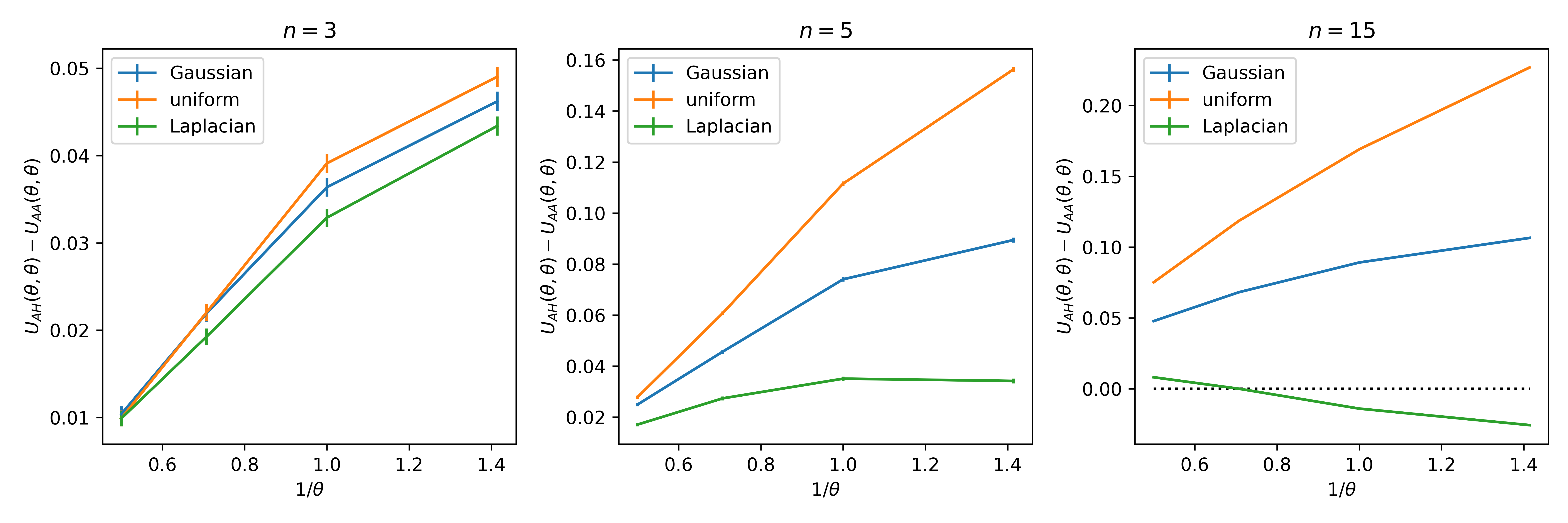}
  \caption{$U_{AH}(\theta, \theta) - U_{AA}(\theta, \theta)$ for three noise
    models with $n$ candidates whose utilities are drawn from a uniform
    distribution with unit variance for $n=3$, $n=5$, and $n=15$. Note that for
    $n=15$, $U_{AH}(\theta, \theta) - U_{AA}(\theta, \theta) < 0$ for Laplacian
  noise, meaning Definition~\ref{def:first} is not met.}
  \label{fig:UAH_UAA}
\end{figure*}

\begin{figure*}[ht]
  \centering
  \includegraphics[width=0.7\linewidth]{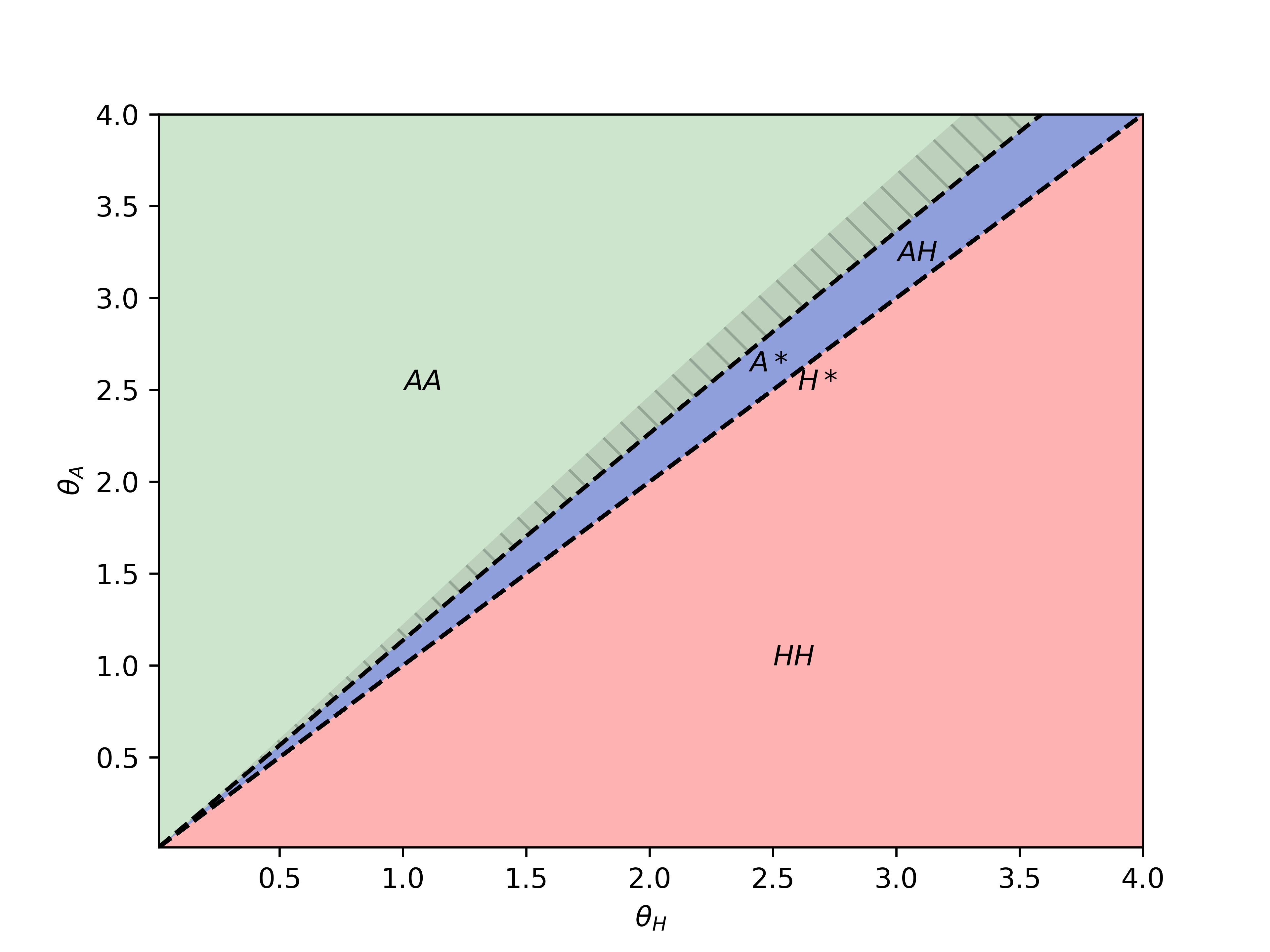}
  \caption{Regions for different equilibria. When human evaluators are more
    accurate than the algorithm, both firms decide to employ humans (HH). When
    the algorithm is significantly more accurate, both firms use the algorithm
    (AA). When the algorithm is slightly more accurate than human evaluators,
    two possible equilibria exist: (1) one firm uses the algorithm and the other
    employs a human (AH), or (2) both decide whether to use the algorithm with
    some probability $p$. The shaded portion of the green $AA$ region depicts
    where social welfare is smaller at the $AA$ equilibrium than it would be if
  both firms used human evaluators.}
  \label{fig:plane}
\end{figure*}

\subsection{Random Utility Models}

In Random Utility Models, the underlying candidate values $x_i$ are perturbed by
independent and identically distributed noise $\varepsilon_i \sim \mc E$, and
the perturbed values are ranked to produce $\pi$. Originally conceived in the
psychology literature~\cite{thurstone1927law}, this model has been well-studied
over nearly a
century,~\cite{daniels1950rank,block1960random,joe2000inequalities,yellott1977relationship,manski1977structure,strauss1979some},
including more recently in the computer science and machine learning
literature~\cite{azari2012random,soufiani2013generalized,ragain2016pairwise,zhao2018learning,makhijani2019parametric}.

First, we must define a family of RUMs that satisfies the conditions of
Definition~\ref{def:family}. Assume without loss of generality that the noise
distribution $\mc E$ has unit variance. Then, consider the family of RUMs
parameterized by $\theta$ in which candidates are ranked according to $x_i +
\frac{\varepsilon_i}{\theta}$. By this definition, the standard deviation of the
noise for a particular value of $\theta$ is simply $1/\theta$. Intuitively,
larger values of $\theta$ reduce the effect of the noise, making the ranking
more accurate.
In Theorem~\ref{thm:RUM_family} in Appendix~\ref{app:RUM_family}, we show as
long as the noise distribution $\mc E$ has positive support on $(-\infty,
\infty)$, this definition of $\mc F_\theta$ meets the differentiability,
asymptotic optimality, and monotonicity conditions in
Definition~\ref{def:family}. For distributions with finite support, many of our
results can be generalized by relaxing strict inequalities in
Definition~\ref{def:family} and Theorem~\ref{thm:main} to weak inequalities.

Because RUMs are notoriously difficult to work with analytically, we restrict
ourselves to the case where $n = 3$, i.e., there are 3 candidates. Under this
restriction, we can show that for Gaussian and Laplacian noise distributions,
Definition \ref{def:first} and \ref{def:weaker} --- the two
conditions of Theorem~\ref{thm:main} --- are met, regardless of the candidate
distribution $\mc D$. We defer the proof to Appendix~\ref{app:RUM_proof}.

\begin{theorem}
  \label{thm:RUM}
  Let $\mc F_\theta$ be the family of RUMs with either Gaussian or Laplacian
  noise with standard deviation $1/\theta$. Then, for any candidate distribution
  $\mc D$ over 3 candidates, the conditions of Theorem~\ref{thm:main} are
  satisfied.
\end{theorem}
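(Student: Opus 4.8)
The plan is to verify Definitions~\ref{def:first} and~\ref{def:weaker} directly for the three-candidate RUM, reducing each to the sign of an explicit function of the two gaps $\delta_1 = x_1 - x_2 > 0$ and $\delta_2 = x_2 - x_3 > 0$. The first move is to eliminate the candidate distribution $\mc D$. Both conditions, once the (positive) conditioning probability is cleared, are expectations that are \emph{linear} in the candidate values. For Definition~\ref{def:first} this is exactly~\eqref{eq:AH_AA_cond} together with the identity~\eqref{eq:UAH_UAA_diff}, which rewrite the condition as $\E{(\pi_1 - \pi_2)\ind{\pi_1 \ne \sigma_1}} > 0$; for Definition~\ref{def:weaker}, decomposing on the identity of the removed candidate rewrites it as $\sum_i \b{c_i(\theta_1) - c_i(\theta_2)}\, d_i > 0$, where $c_i(\theta)$ is the probability that candidate $i$ is ranked first under $\mc F_\theta$ and $d_i = \E{(\pi_1 - \pi_2)\ind{\pi_1 = i}}$ under the lower accuracy $\theta_2$. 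Each left-hand side equals the $\mc D$-average of its value on a fixed $\mathbf x$, so it suffices to prove each inequality for every fixed $x_1 > x_2 > x_3$; averaging over $\mc D$ then preserves the strict sign. Finally, since ranking by $x_i + \varepsilon_i/\theta$ depends only on $\theta \mathbf x$, a scaling argument lets me normalize the base accuracy to $1$ and treat $(\delta_1,\delta_2)$ (and, for Definition~\ref{def:weaker}, the ratio $\theta_1/\theta_2 > 1$) as the free parameters ranging over the positive orthant.

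I would then express every quantity through the six permutation probabilities, or equivalently through orthant probabilities of the score differences $D_1 = y_1 - y_2$ and $D_2 = y_2 - y_3$. The quantities $d_i$ have transparent signs: $d_1 > 0 > d_3$, with $d_2$ indeterminate. For Definition~\ref{def:weaker} this already yields a clean skeleton. Two monotonicity facts hold for \emph{any} symmetric noise, by a nested-events argument: the event that candidate $1$ is top-ranked, $\{\varepsilon_2 - \varepsilon_1 < \theta\delta_1,\ \varepsilon_3 - \varepsilon_1 < \theta(\delta_1+\delta_2)\}$, grows with $\theta$, so $c_1(\theta)$ is increasing; the analogous event for candidate $3$ shrinks, so $c_3(\theta)$ is decreasing. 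Using $\sum_i (c_i(\theta_1)-c_i(\theta_2)) = 0$ to eliminate the middle index, the target sum regroups as
\begin{equation*}
  \b{c_1(\theta_1)-c_1(\theta_2)}(d_1 - d_2) + \b{c_3(\theta_1)-c_3(\theta_2)}(d_3 - d_2),
\end{equation*}
so Definition~\ref{def:weaker} reduces to the single monotonicity claim $d_1 > d_2 > d_3$ (a better candidate, when ranked first, beats its runner-up by more in expectation). Definition~\ref{def:first}, by contrast, admits no such sign-definite regrouping: writing $\Delta = \sum_i (1-c_i) d_i$ and comparing to $d_2$ leaves a residual term $2 d_2$ of uncontrolled sign, so its positivity is a genuinely aggregate phenomenon.

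This is where I expect the main obstacle. The inequality in Definition~\ref{def:first} is delicate rather than soft: Figure~\ref{fig:UAH_UAA} shows it \emph{fails} for Laplacian noise at $n = 15$, so no argument that ignores the value $n = 3$ can work, and one must genuinely control the closed-form $n=3$ orthant probabilities. For Gaussian noise I would exploit the special structure that $(D_1, D_2)$ is bivariate normal with equal variances and correlation $-\tfrac12$, writing the six permutation probabilities via the bivariate normal CDF $\Phi_2$ and reducing $\Delta$ (and $d_1 > d_2 > d_3$) to a sign that I verify as a function of $(\delta_1,\delta_2)$ by analyzing monotonicity in the gaps and the limiting regimes $\delta \to 0$ and $\delta \to \infty$. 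For Laplacian noise the score differences have piecewise-exponential densities, so the same probabilities become closed-form exponential-in-gap expressions amenable to the same sign analysis. The hard part is not the setup but pinning down the sign of $\Delta$ uniformly over the entire positive orthant of gaps, since the quantity is a sum of terms of opposing signs whose balance depends sensitively on the noise family; the $n=3$ restriction is what makes the explicit expressions short enough to settle this, and it is exactly this restriction that the theorem commits to.
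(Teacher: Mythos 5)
Your handling of Definition~\ref{def:weaker} contains a genuine gap: the ``single monotonicity claim'' $d_1 > d_2 > d_3$ to which you reduce it is false for Gaussian noise. Note that $d_i = \E{\pi_1} - u_{-i}$, where $u_{-i}$ denotes the expected value of the top-ranked candidate of $\pi$ once $x_i$ is removed, so $d_2 > d_3$ is equivalent to $u_{-2} < u_{-3}$. Now take unit-variance Gaussian noise, $\theta = 1$, and $(x_1, x_2, x_3) = (101, 1, 0)$, and let $G$ be the CDF of $\varepsilon_i - \varepsilon_j \sim \mc N(0,2)$. Then
\begin{equation*}
  u_{-2} - u_{-3} \;=\; 101\,G(101) - \b{101\,G(100) + \p{1 - G(100)}}
  \;=\; 100\p{1 - G(100)} - 101\p{1 - G(101)} \;>\; 0,
\end{equation*}
since the tail ratio $\p{1-G(101)}/\p{1-G(100)} \approx e^{-50}$. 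Intuitively, removing $x_2$ leaves the pair $\{x_1, x_3\}$ with gap $101$, whose exponentially smaller inversion probability more than compensates for the slightly larger loss suffered on an inversion. So whenever $x_1 - x_2 \gg x_2 - x_3$, your regrouped expression has $c_3(\theta_1) - c_3(\theta_2) < 0$ multiplying $d_3 - d_2 > 0$: a strictly negative term that your two soft monotonicity facts (which do hold, for essentially any noise) cannot absorb. Definition~\ref{def:weaker} is still satisfied in this regime, but only because the positive first term quantitatively dominates, and nothing in your proposal controls that balance. The paper's own counterexample in Appendix~\ref{app:counter} is engineered around exactly the configuration $u_{-2} > u_{-3}$, which shows this is not a pathology one can wave away.

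What the paper supplies is precisely the missing quantitative control. Lemma~\ref{lem:deltapi} shows $\Delta p_1 \ge \Delta p_2$ --- when $\theta$ increases, the gain in the probability that $x_1$ is ranked first is at least the gain for $x_2$ --- via a contraction-and-swap coupling that uses the ``well-ordered'' property $f(a-c)f(b-d) > f(a-d)f(b-c)$; this is where Gaussianity/Laplacianity actually enters (Lemma~\ref{lem:well-ordered}). Theorem~\ref{thm:3cand_hh_ah} then splits on the sign of $\Delta p_2$, and in the case $\Delta p_2 > 0$ it concludes from $\Delta p_1 \ge \Delta p_2$ together with $u_{-1} + u_{-2} - 2u_{-3} < 0$ (which follows from $u_{-1} < x_2$, $u_{-2} < x_1$, and $\lambda_3 > \tfrac12$), never needing $u_{-2} < u_{-3}$. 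You would need an analogue of Lemma~\ref{lem:deltapi} to repair your skeleton. Separately, your treatment of Definition~\ref{def:first} is a plan rather than a proof; for comparison, the paper conditions on the identity of the common first choice and reduces the claim to $\frac{d}{da}\Pr[X_i > X_j \given X_i < a,\, X_j < a] \ge 0$, established for Laplacian noise by explicit case analysis and for Gaussian noise via a Mills-ratio bound. But the decisive flaw is the false reduction for Definition~\ref{def:weaker}.
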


It might be tempting to generalize Theorem~\ref{thm:RUM} to other distributions
and more candidates; however, certain noise and candidate distributions violate
the conditions of Theorem~\ref{thm:main}. Even for 3-candidate RUMs, there exist
distributions for which each of the conditions is violated; we provide such
examples in Appendix~\ref{app:counter}.

Moreover, while Gaussian and Laplacian distributions provably meet
Definitions~\ref{def:first} and~\ref{def:weaker} with only 3 candidates, this doesn't
necessarily extend to larger candidate sets. Figure~\ref{fig:UAH_UAA} shows that
Definition~\ref{def:first} can be violated under a particular candidate
distribution $\mc D$ for Laplacian noise with 15 candidates. This challenges the
intuition that independence is preferable---under some conditions, it can
actually better in expectation for a firm to use the \textit{same} algorithmic
ranking as its competitor, even if an independent human evaluator is equally
accurate overall. Unlike Theorem~\ref{thm:RUM}, which applies for \textit{any}
candidate distribution $\mc D$, certain noise models may violate
Definition~\ref{def:first} only for particular $\mc D$. It is an open question
as to whether Theorem~\ref{thm:RUM} can be extended to larger numbers of
candidates under Gaussian noise.

Finally, there exist noise distributions that violate Definition~\ref{def:first}
for any candidate distribution $\mc D$. In particular, the RUM family defined by
the Gumbel distribution is well-known to be equivalent to the Plackett-Luce
model of ranking, which is generated by sequentially selecting candidate $i$
with probability
\begin{equation}
  \label{eq:PL}
  \frac{\exp(\theta x_i)}{\sum_{j \in S} \exp(\theta x_j)},
\end{equation}
where $S$ is the set of remaining
candidates~\cite{luce1959individual,block1960random}. Under the Plackett-Luce
model, for any $\theta$, $U_{AH}(\theta, \theta) = U_{AA}(\theta, \theta)$. To
see this, suppose the firm that hires first selects candidate $i^*$. Then, the
firm that hires second gets each candidate $i$ with probability given
by~\eqref{eq:PL} with $S = \{1, \dots, n\} \backslash i^*$. As a result,
by~\eqref{eq:UAH_UAA_diff}, if $\pi, \sigma \sim \mc F_\theta$,
\begin{align*}
  \E{\pi_1 - \pi_2 \given \pi_1 \ne \sigma_1} = 0
\end{align*}
for any candidate distribution $\mc D$, meaning the Plackett-Luce model never
meets Definition~\ref{def:first}. Thus, under the Plackett-Luce model,
monoculture has no effect---the optimal strategy is always to use the best
available ranking, regardless of competitors' strategies.

Given the analytic intractability of most RUMs, it might appear that testing the
conditions of Theorem~\ref{thm:main}, especially for a particular noise and
candidate distributions, may not be possible; however, they can be efficiently
tested via simulation: as long as the noise distribution $\mc E$ and the
candidate distribution $\mc D$ can be sampled from, it is possible to test
whether the conditions of Theorem~\ref{thm:main} are satisfied. Thus, even if
the conditions of Theorem~\ref{thm:main} are not met for \textit{every}
candidate distribution $\mc D$, it is possible to efficiently determine whether
they are met for any \textit{particular} $\mc D$.

It is also interesting to ask about the magnitude of the negative impact
produced by monoculture. Our model allows for the qualities of candidates to be
either positive or negative (capturing the fact that a worker's productivity can
be either more or less than their cost to the firm in wages); using this, we can
construct instances of the model in which the optimal social welfare is positive
but the welfare under the (unique) monocultural equilibrium implied by Theorem 1
is negative. This is a strong type of negative result, in which sub-optimality
reverses the sign of the objective function, and it means that in general we
cannot compare the optimum and equilibrium by taking a ratio of two non-negative
quantities, as is standard in {\em Price of Anarchy} results. However, as a
future direction, it would be interesting to explore such Price of Anarchy
bounds in special cases of the problem where structural assumptions on the input
are sufficient to guarantee that the welfare at both the social optimum and the
equilibrium are non-negative. As one simple example, if the qualities for three
candidates are drawn independently from a uniform distribution centered at 0,
and the noise distribution is Gaussian, then there exist parameters $\theta_A >
\theta_H$ such that expected social welfare at the equilibrium where both firm
use the algorithmic ranking is non-negative, and approximately $4\%$ less than
it would be had both firms used human evaluators instead.

\subsection{The Mallows Model}

The Mallows Model also appears frequently in the ranking
literature~\cite{das2014role,lu2011learning}, and is much more analytically
tractable than RUMs. Under the Mallows Model, the likelihood of a permutation is
related to its distance from the true ranking $\pi^*$:
\begin{equation}
  \label{eq:mallows}
  \Pr[\pi] = \frac{1}{Z} \phi^{-d(\pi, \pi^*)},
\end{equation}
where $Z$ is a normalizing constant. In this model, $\phi > 1$ is the
accuracy parameter: the larger $\phi$ is, the more likely the ranking
procedure is to output a ranking $\pi$ that is close to the true ranking $r$. To
instantiate this model, we need a notion of distance $d(\cdot, \cdot)$ over
permutations. For this, we'll use Kendall tau distance, another standard notion
in the literature, which is simply the number of pairs of elements in $\pi$ that
are incorrectly ordered~\cite{kendall1938new}. In
Appendix~\ref{app:mallows_family}, we verify that the family of distributions
$\mc F_\theta$ given by the Mallows Model satisfies Definition~\ref{def:family},
defining $\theta = \phi - 1$ (for consistency, so $\theta$ is well-defined on
$(0, \infty)$).

In contrast to RUMs, the Mallows Model always satisfies the conditions of
Theorem~\ref{thm:main} for any candidate distribution $\mc D$, which we prove in
Appendix~\ref{app:mallows_proof}.

\begin{theorem}
  \label{thm:mallows}
  Let $\mc F_\theta$ be the family of Mallows Model distributions with parameter
  $\theta = \phi - 1$. Then, for any candidate distribution $\mc D$, the
  conditions of Theorem~\ref{thm:main} are satisfied.
\end{theorem}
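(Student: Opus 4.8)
The plan is to exploit the recursive structure of the Mallows model under Kendall tau distance, reduce each of Definitions~\ref{def:first} and~\ref{def:weaker} to an explicit inequality among geometric weights, and then verify the sign by tracking the adjacent value-gaps $x_i - x_{i+1}$.

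First I would record the key structural fact behind~\eqref{eq:mallows}: Kendall tau distance decomposes when the top-ranked element is fixed. If $\pi$ ranks the candidate of true rank $r$ first, the $r-1$ genuinely-better candidates all sit below it, contributing exactly $r-1$ inversions, while the remaining inversions depend only on the relative order of the other $n-1$ candidates. Summing $\phi^{-d(\pi,\pi^*)}$ over all $\pi$ with a fixed top therefore factors, yielding the truncated-geometric top marginal $q_r := \Pr[\pi_1 = x_r] = \phi^{-(r-1)} / \sum_{k=0}^{n-1} \phi^{-k}$, and showing that conditioned on the identity of the top candidate, the remaining ranking is again Mallows with the same $\phi$ on the other $n-1$ candidates. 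In particular $m_r := \E{\pi_2 \given \pi_1 = x_r}$ is the expected top value of a Mallows model on $\{x_1,\dots,x_n\}\setminus\{x_r\}$, itself an explicit geometric average of the remaining values.

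Next I would combine this structure with the identity~\eqref{eq:UAH_UAA_diff} to turn both conditions into linear functionals of $a_r := x_r - m_r$. Using that $\sigma$ is an independent copy, so $\Pr[\sigma_1 = x_r] = q_r$, Definition~\ref{def:first} becomes
\[
  \E{\p{\pi_1 - \pi_2}\ind{\pi_1 \ne \sigma_1}} = \sum_{r} q_r\p{1 - q_r}\p{x_r - m_r} > 0.
\]
For Definition~\ref{def:weaker}, writing $\pi_1^{(-\{y\})} = \pi_1\ind{\pi_1 \ne y} + \pi_2\ind{\pi_1 = y}$ and averaging over the removed candidate (distributed as $q^{(1)}$ under $\mc F_{\theta_1}$ versus $q^{(2)}$ under $\mc F_{\theta_2}$, with $\pi$ always drawn from $\mc F_{\theta_2}$), the inequality becomes
\[
  \sum_r q^{(2)}_r\p{q^{(1)}_r - q^{(2)}_r}\p{x_r - m^{(2)}_r} > 0.
\]
I would then establish the two monotone ingredients: that $a_r$ is strictly decreasing in $r$ (since $x_r$ decreases while $m_r$, the best remaining value after deleting candidate $r$, increases as worse candidates are deleted), and that the family $q^{(\theta)}_r \propto \phi^{-(r-1)}$ has the monotone-likelihood-ratio property in $\theta$, so that $q^{(1)}_r - q^{(2)}_r$ changes sign exactly once, from positive to negative. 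The target is to collect each sum in terms of the adjacent gaps $d_i := x_i - x_{i+1} > 0$; for $n=3$ the first sum evaluates to a positive multiple of $(1-\phi^{-1})(x_1 - x_3)$, which signals the clean ``positive combination of gaps'' form to aim for in general, most naturally proved by induction on $n$ that peels off the true top candidate and invokes the recursive Mallows structure.

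The crux, and the step I expect to be hardest, is pinning down the sign for general $n$, because no soft argument suffices. The $n=2$ case gives exactly zero, and the weight $w_r := q_r a_r$ governing both sums is \emph{not} monotone in $r$ (for small $\phi^{-1}$ one checks $w_2 < w_3$), so neither a crude Chebyshev estimate nor a generic first-order-stochastic-dominance coupling of the removed-candidate distributions delivers the right sign — a monotone-likelihood-ratio coupling would require $w_r$ monotone, which fails. The positivity therefore survives only through exact cancellation among the geometric weights, so the genuine work is the bookkeeping that reduces each weighted sum to a manifestly nonnegative combination of the gaps $d_i$, which I would carry out by induction on the number of candidates.
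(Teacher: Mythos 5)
Your two reductions are correct, and the one for Definition~\ref{def:weaker} is in fact exactly the identity the paper itself derives: conditioning on $\pi_1 = x_r$ and using independence gives $U_{AH}-U_H = \sum_r q^{(2)}_r q^{(1)}_r\,(m_r - x_r)$ and $U_{HH}-U_H = \sum_r (q^{(2)}_r)^2(m_r - x_r)$, whose difference is your sum; likewise your Definition~\ref{def:first} sum $\sum_r q_r(1-q_r)(x_r - m_r)$ is what~\eqref{eq:UAH_UAA_diff} becomes under the same conditioning, and your $n=3$ evaluation (a positive multiple of $(1-\phi^{-1})(x_1-x_3)$) checks out. The structural ingredients you cite --- the truncated-geometric top marginal, the conditional-Mallows property, monotonicity of $x_r - m_r$ --- are also correct and appear in the paper (Lemma~\ref{lem:p1xj} and the surrounding discussion). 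The genuine gap is that everything after this point, namely the verification that both sums are positive for \emph{all} $n$, which is the entire mathematical content of the theorem, is deferred to an ``induction on $n$'' whose inductive step you never formulate, let alone execute. It is not clear the functional has the self-reducibility you assume: removing a candidate changes both the weights $q_r$ and every conditional mean $m_r$, so the sum over $n$ candidates does not visibly decompose into the corresponding sum over $n-1$ candidates plus a nonnegative remainder. For Definition~\ref{def:weaker} the situation is worse: there the coefficients $q^{(2)}_r\bigl(q^{(1)}_r - q^{(2)}_r\bigr)$ involve two different parameters and change sign, and you verify nothing, not even $n=3$. As written, the proposal stops exactly where the proof has to begin.

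Moreover, your meta-claim that ``no soft argument suffices'' and that a monotone-likelihood-ratio argument must fail is incorrect, and this is precisely how the paper closes Definition~\ref{def:weaker}. The paper never applies MLR to the non-monotone weights $w_r = q_r a_r$; it normalizes instead. With $C_A = \sum_r q^{(2)}_r q^{(1)}_r$ and $C_H = \sum_r (q^{(2)}_r)^2$, the distributions $p_r \propto q^{(2)}_r q^{(1)}_r$ and $\tilde q_r \propto (q^{(2)}_r)^2$ satisfy $p_r/\tilde q_r \propto (\phi_H/\phi_A)^{r-1}$, which \emph{is} monotone, so Lemma~\ref{lem:prob_sum} applied to the increasing sequence $y_r = m_r - x_r$ gives $(U_{AH}-U_H)/C_A < (U_{HH}-U_H)/C_H$; combining with $C_A > C_H$ (Lemma~\ref{lem:prob_sum} again) and $U_{HH}-U_H < 0$ (Lemma~\ref{lem:UH_UHH}) denormalizes to $U_{AH} < U_{HH}$ with no gap-by-gap bookkeeping. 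For Definition~\ref{def:first} the paper likewise avoids your diagonal sum: it decomposes over ordered pairs, compares the events $\{\pi_1 = x_i,\, \pi_2 = x_j\}$ and $\{\pi_1 = x_j,\, \pi_2 = x_i\}$ via $\Pr[\pi_1 = x_i \cap \pi_2 = x_j] = \phi\,\Pr[\pi_1 = x_j \cap \pi_2 = x_i]$ (Lemma~\ref{lem:ABBA}) together with the geometric marginal, and reduces the whole claim to the elementary inequality $\phi^{i-1}(1-\phi^{-n}) \ge 1 - \phi^{-|i-j|-1}$. If you want to rescue your route, that pairwise decomposition is the ``exact cancellation'' you were searching for; the diagonal-plus-induction plan, by contrast, remains an unproven conjecture.
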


Figure~\ref{fig:plane} characterizes firms' rational behavior at equilibrium in
the $(\theta_H, \theta_A)$ plane under the Mallows Model.
The decrease in social welfare found in
Theorem~\ref{thm:mallows} is depicted by the shaded portion of the green region
labeled $AA$, where social welfare would be higher if both firms used human
evaluators.

While the result of Theorem \ref{thm:mallows}
is certainly stronger than that of Theorem~\ref{thm:RUM}, 
in that it applies to all instances of the Mallows Model without
restrictions, it
should be interpreted with some caution. The Mallows Model does not depend on
the underlying candidate values, so according to this model, monoculture can
produce arbitrarily large negative effects. While insensitivity to candidate
values may not necessarily be reasonable in practice, our results hold for any
candidate distribution $\mc D$. Thus, to the extent that the Mallows Model can
reasonably approximate ranking in particular contexts, our results imply that
monoculture can have negative welfare effects.

\section{Models with Multiple Firms}

Our main focus in this work has been on models with two competing firms.
However, it is also interesting to consider the case of more than two firms;
we will see that the complex and sometimes counterintuitive effects
that we found in the two-firm case are further enriched by additional
phenomena.
Primarily, we will present the result of computational experiments with
the model, exposing some fundamental structural properties 
in the multi-firm problem
for which a formal analysis remains an intriguing open problem.
For concreteness, we will focus on a model in which
rankings are drawn
from the Mallows model.
As before, each firm must choose to order candidates according
to either an independent, human-produced ranking or an algorithmic
ranking common to all firms who choose it.
These rankings come from instances of the Mallows model with 
accuracy parameters $\phi_H$ and $\phi_A$ respectively as defined
in~\eqref{eq:mallows}.

\paragraph*{Braess' Paradox for $k > 2$ firms.}

First, we ask whether the Braess' Paradox effect persists with $k > 2$
firms. 
We find that it is possible to construct instances of
the problem with $k > 2$ for which Braess' Paradox occurs --- using
an algorithmic evaluation is a dominant strategy, but social welfare
would be higher if all firms used human evaluators instead. For
example, suppose $n = 4$, $k = 3$, $\phi_A = 2$, $\phi_H = 1.75$, and
candidate qualities are drawn from a uniform distribution on $[0, 1]$.
We find via computation that at equilibrium, 
each firm will rationally decide to use the
algorithmic evaluator and experience utility $\approx .551$, but if
all firms instead used human evaluators, they would experience utility
$\approx
.552$. Thus, Braess' Paradox can occur for $k > 2$ firms. Proving a
generalization of Theorem~\ref{thm:mallows}, to show that Braess'
Paradox can occur for any candidate distribution $\mc D$ and any value
of $\phi_H$ for $k > 2$ firms remains an open question. 

\paragraph*{Sequential decision-making.}

\begin{figure*}[ht]
  \centering
  \includegraphics[width=0.7\linewidth]{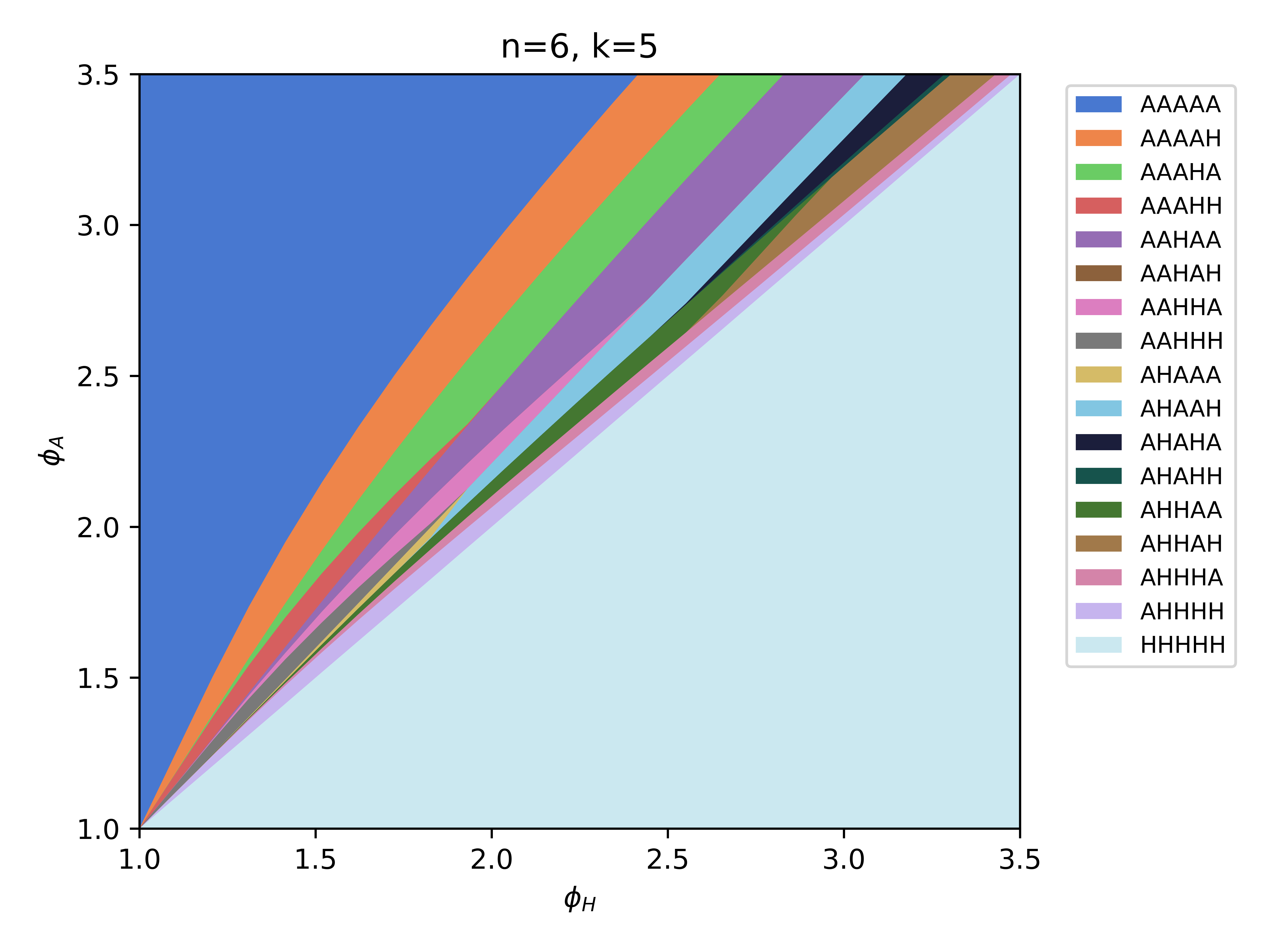}
  \caption{Regions for different optimal strategy profiles, where each strategy
    profile is a sequence of `$A$' and `$H$' representing the optimal strategies of
    each firm sequentially. For this plot, there are 5 firms ($k=5$) and 6
    candidates  ($n=6$) whose values are drawn from a uniform distribution. Note
    that when $\phi_A$ is much larger than $\phi_H$, all firms use the
    algorithmic ranking, but when $\phi_A$ is only slightly larger than
  $\phi_H$, only the first firm uses the algorithmic ranking.}%
  \label{fig:plane_k_firms}
\end{figure*}

Since the equilibrium behaviors we are studying take place in a model
where firms make decisions in a random order, a crucial first step is to
characterize firms' optimal behavior when making decisions
{\em sequentially}---that is, 
when firms hire in a fixed, known order as opposed to a random order.
In this context, consider the rational behavior of each firm: given a
distribution over candidate values, which ranking should each firm use? Clearly,
the first firm to make a selection should use the more accurate ranking
mechanism; however, as shown previously, subsequent firms' decisions are less
clear-cut. For a fixed number of firms, number of candidates, and distribution
over candidate values, we can explore the firms' optimal strategies 
over the possible space of $(\phi_H, \phi_A)$ values.

An optimal choice of strategies for the $k$ firms moving sequentially
can be written as a sequence of length $k$ made up of the symbols $A$ and $H$;
the $i^{\rm th}$ term in the sequence is equal to $A$ if the $i^{\rm th}$
firm to move sequentially uses the algorithm as its optimal strategy
(given the choices of the previous $i-1$ firms), and it is equal to $H$
if the $i^{\rm th}$ firm uses an independent human evaluation.
We can therefore represent the choice of optimal strategies, as 
the parameters $(\phi_H, \phi_A)$ vary, by a labeling of the 
$(\phi_H, \phi_A)$-plane: we label each point $(\phi_H, \phi_A)$
with the length-$k$ sequence that specifies the optimal sequence of strategies.

We can make the following initial formal observation about these optimal
sequences:
\begin{theorem}
  \label{thm:H_opt}
  When $\phi_H \ge \phi_A$, one optimal sequence is for all firms to choose $H$.
  When $\phi_H > \phi_A$, the unique optimal sequence is for all firms to choose
  $H$.
\end{theorem}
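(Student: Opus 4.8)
The plan is to exploit the sequential structure of the model. Because the $k$ firms hire in a fixed, known order, firm $i$'s realized utility depends only on the set $S$ of candidates already removed by firms $1,\dots,i-1$ together with firm $i$'s own ranking; the choices of firms $i+1,\dots,k$ are irrelevant to firm $i$. An optimal sequence is therefore exactly a forward chain of best responses: firm $i$ picks the label in $\{A,H\}$ that maximizes its expected utility given the labels already played by its predecessors. I would reduce the entire theorem to a single comparison made \emph{on the all-$H$ path}, where the analysis is cleanest because no predecessor has yet used the algorithm.

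The core is the following lemma: if firms $1,\dots,i-1$ all choose $H$, then firm $i$ weakly prefers $H$ when $\phi_H \ge \phi_A$ and strictly prefers $H$ when $\phi_H > \phi_A$. The key observation is that when every predecessor uses an independent human ranking, the removed set $S$ is a deterministic function of $i-1$ independent draws from $\mc F_{\theta_H}$, hence independent of firm $i$'s ranking whether firm $i$ draws its own human ranking from $\mc F_{\theta_H}$ or the common algorithmic ranking from $\mc F_{\theta_A}$ (the algorithm's draw is itself independent of the predecessors' human draws). Thus firm $i$'s expected utility under label $s\in\{A,H\}$ is $\mathbb{E}_{S}\b{\mathbb{E}_{\mc F_{\theta_s}}\b{\pi_1^{(-S)}}}$ with the \emph{same} distribution over $S$ and with $\pi \perp S$ in both cases. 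Conditioning on each realization $S=\sigma$ and applying the Monotonicity condition of Definition~\ref{def:family} with $\theta_H \ge \theta_A$ gives $\mathbb{E}_{\mc F_{\theta_H}}\b{\pi_1^{(-\sigma)}} \ge \mathbb{E}_{\mc F_{\theta_A}}\b{\pi_1^{(-\sigma)}}$; averaging over $S$ yields the weak preference for $H$, which establishes that all-$H$ is one optimal sequence whenever $\phi_H \ge \phi_A$. (Since the inequality holds for each realization of $S$, the conclusion is robust even if a firm were allowed to condition its label on the realized removals.)

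For uniqueness when $\phi_H > \phi_A$, I would argue by first deviation, precisely so that I never have to confront the genuinely hard correlated case in which a predecessor has already used the algorithm and firm $i$'s algorithmic ranking becomes entangled with the removed set. Suppose some optimal sequence differed from all-$H$, and let $j$ be the first firm choosing $A$. Its predecessors all chose $H$, so the lemma applies on the all-$H$ path and firm $j$ \emph{strictly} prefers $H$, contradicting that $A$ is a best response. Hence all-$H$ is the only optimal sequence. This reduces the whole uniqueness claim to the strict half of the lemma, and only on the all-$H$ path.

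The main obstacle is therefore exactly this strict inequality for a nonempty removed set. For firm $1$ we have $S=\emptyset$ and strictness is immediate from the strict-for-$S=\emptyset$ clause of Definition~\ref{def:family}. For $i\ge 2$ the set $S$ is nonempty and Definition~\ref{def:family} only guarantees weak monotonicity, so a model-specific strengthening is needed. Here I would invoke the standard marginalization (consistency) property of the Mallows--Kendall model: for any fixed $S$, the ranking induced on the remaining candidates $\bar S$ is again a Mallows distribution on $\bar S$ with the same parameter, so $\mathbb{E}_{\mc F_\theta}\b{\pi_1^{(-S)}}$ is precisely the expected top candidate of a Mallows model over the sub-pool $\bar S$. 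Applying the strict-for-$S=\emptyset$ clause to this sub-pool then gives strict monotonicity in $\theta$, provided $\bar S$ has at least two (distinct-valued) candidates. Crucially, on the all-$H$ path firm $i$ faces $|S| = i-1 \le k-1$, so assuming strictly more candidates than firms, $k<n$, every realization satisfies $|\bar S| = n-(i-1) \ge 2$ and strictness holds for \emph{each} realization, hence in expectation. I would flag the $k<n$ hypothesis explicitly, since when $k=n$ the last firm faces a single remaining candidate, is genuinely indifferent between $A$ and $H$, and uniqueness would fail.
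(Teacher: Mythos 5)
Your reduction to a forward chain of best responses, and your handling of the weak case, are correct and run parallel to the paper's own proof in Appendix~\ref{app:H_opt_proof}: both arguments live entirely on the all-$H$ path, both exploit the fact that a firm's ranking (whether $A$ or $H$) is independent of its predecessors' removals there, and both obtain uniqueness by only ever needing a strict comparison on that path (your explicit first-deviation framing is the same induction the paper gestures at). The difference is the engine: the paper invokes its Mallows-specific Lemma~\ref{lem:monotone} (pairwise orderings become more accurate as $\phi$ grows), whereas you condition on the removed set $S$ and apply the Monotonicity clause of Definition~\ref{def:family}; for the weak claim your route is cleaner and, since it uses only Definition~\ref{def:family}, more general. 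Your observation that uniqueness genuinely requires $k < n$ (the last firm is indifferent when only one candidate remains) is a careful point the paper elides.

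However, your strictness step for firms $i \ge 2$ rests on a false claim. The ``marginalization (consistency) property'' you invoke does not hold for the Mallows model under Kendall tau distance: the ranking induced on an arbitrary subset $\bar S$ of candidates is \emph{not} a Mallows distribution on $\bar S$ with the same parameter (consistency holds only for contiguous blocks of the reference order, and the removed candidates here are top picks of noisy rankings, which need not form such a block). Concretely, take $n = 3$ with true order $x_1, x_2, x_3$ and $S = \{x_2\}$; summing $\phi^{-d(\pi,\pi^*)}$ over the six permutations gives
\begin{equation*}
  \frac{\Pr\b{\pi_1^{(-S)} = x_1}}{\Pr\b{\pi_1^{(-S)} = x_3}}
  = \frac{1 + 2\phi^{-1}}{2\phi^{-2} + \phi^{-3}}
  = \frac{\phi^{2}\p{\phi+2}}{2\phi+1},
\end{equation*}
which equals $\phi$ (the odds a two-candidate Mallows model with parameter $\phi$ would assign) only when $\phi = 1$. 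So the nonempty-$S$ case cannot be reduced to the strict-for-$S=\emptyset$ clause of Definition~\ref{def:family} by passing to a sub-pool. Fortunately, the conclusion you need is true and is in fact what the paper already establishes when verifying Definition~\ref{def:family} for the Mallows model: the argument in Appendix~\ref{app:mallows_family} (the swap bijection yielding the strict likelihood-ratio inequality~\eqref{eq:phi_likelihood}, combined with Lemma~\ref{lem:prob_sum}) proves the \emph{strict} inequality $\mathbb{E}_{\mc F_{\theta'}}\b{\pi_1^{(-S)}} > \mathbb{E}_{\mc F_{\theta}}\b{\pi_1^{(-S)}}$ for every $S$ leaving at least two (distinct-valued) candidates---exactly the regime guaranteed by your $k < n$ hypothesis. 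Replacing your marginalization appeal with a citation to that strict monotonicity repairs the proof; everything else stands.
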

We prove this formally in Appendix~\ref{app:H_opt_proof}, but we provide a
sketch here.
When $\phi_H \geq \phi_A$, the first firm to move in sequence will simply
use the more accurate strategy, and hence will choose $H$.
Now, proceeding by induction, suppose that the first $i$ firms
have all chosen $H$, and consider the $(i+1)^{\rm st}$ firm to
move in sequence.
Regardless of whether this firm chooses $A$ or $H$, it will be
making a selection that is independent of the previous $i$ selections,
and hence it is optimal for it to choose $H$ as well.
Hence, by induction, it is an optimal solution for all firms to choose
$H$ when $\phi_H \geq \phi_A$.  
(This argument, slightly adapted, also directly establishes that 
it is uniquely optimal for all firms to choose $H$ when $\phi_H > \phi_A$.)

Beyond this observation, if we wish to extend
to the case when $\phi_A > \phi_H$,
the mathematical analysis of this multi-firm model remains an open question;
but it is possible to determine optimal strategies computationally for
each choice of $(\phi_H, \phi_A)$, and then to look at how these strategies 
vary over the $(\phi_H, \phi_A)$-plane.
Figure~\ref{fig:plane_k_firms} shows the result of doing this --- producing
a labeling of the $(\phi_H, \phi_A)$-plane as described above --- for 
$k = 5$ firms and $n = 6$ candidates, with the values of the candidates
drawn from a uniform distribution.

We observe a number of interesting phenomena from this labeling of the plane.
First, the region where $\phi_H \geq \phi_A$ is labeled with the
all-$H$ sequence, reflecting the argument above;
for the half-plane $\phi_A > \phi_H$, on the other hand, all optimal sequences
begin with $A$, since it is always optimal for the first firm 
to use the more accurate method.
The labeling of the half-plane $\phi_A > \phi_H$ becomes quite complex;
in principle, any sequence over the binary alphabet $\{A,H\}$
that begins with $A$ could be possible, and in fact we see that all $2^4 = 16$
of these sequences appear as labels in some portion of the plane.
This means that the sequential choice of optimal strategies for the
firms can display arbitrary non-monotonicities in the choice of 
algorithmic or human decisions, with firms alternating between them;
for example, even after the first firm chooses $A$ and the second chooses $H$,
the third may choose $A$ or $H$ depending on the values $(\phi_H,\phi_A)$.

The boundaries of the regions labeled by different optimal sequences
are similarly complex; some of the regions (such as $AAAHH$) appear
to be bounded, while others (such as $AHAHA$ and $AHHAH$) appear to 
only emerge for sufficiently large values of $\phi_H$.

Perhaps the most intriguing observation about the arrangement
of regions is the following.
Suppose we think of the sequences of symbols over $\{A,H\}$ as 
binary representations of numbers, with $A$ corresponding to the
binary digit $1$ and $H$ correspnding to the binary digit $0$.
(Thus, for example, $AAAHH$ would correspond to the number 
$16 + 8 + 4 = 28$, while $AHAHA$ would correspond to the number
$16 + 4 + 1 = 21$.)
The observation is then the following:
if we choose any vertical line $\phi_H = x$
(for a fixed $x$), and we follow it upward in the plane,
we encounter regions in increasing order of the numbers corresponding
to their labels, in this binary representation.
(First $HHHHH$, then $AHHHH$, then $AHHHA$, then $AHHAH$, and so forth.)

We do not know a proof for this fact, or how generally it holds,
but we can verify it computationally
for the regions of the $(\phi_H,\phi_A)$-plane mapped out in 
Figure~\ref{fig:plane_k_firms}, as well as similar computational experiments
not shown here for other choices of $k$ and $n$.
This binary-counter property suggests a rich body of additional structure
to the optimal strategies in the $k$-firm case, and we leave it as
an open question to analyze this structure mathematically.

\section{Conclusion}

Concerns about monoculture in the use of algorithms have focused on the danger
of unexpected, correlated shocks, and on the harm to particular individuals who
may fare poorly under the algorithm's decision. Our work here shows that
concerns about algorithmic monoculture are in a sense more fundamental, in that
it is possible for monoculture to cause decisions of globally lower average
quality, even in the absence of shocks. In addition to telling us something
about the pervasiveness of the phenomenon, it also suggests that it might be
difficult to notice its negative effects even while they're occurring --- these
effects can persist at low levels even without a shock-like disruption to call
our attention to them. Our results also make clear that algorithmic monoculture
in decision-making doesn't always lead to adverse outcomes; rather, we given
natural conditions under which such outcomes become possible, and show that
these conditions hold in a wide range of standard models.

Our results suggest a number of natural directions for further work. To begin
with, we have noted earlier in the paper that it would be interesting to give
more comprehensive quantitative bounds on the magnitude of monoculture's
possible negative effects in decisions such as hiring --- how much worse can the
quality of candidates be when selected with an equilibrium strategy involving
shared algorithms than with a socially optimal one? In formulating such
questions, it will be important to take into account how the noise model for
rankings relates to the numerical qualities of the candidates.

We have also focused here on the case of two firms and a single shared algorithm
that is available to both. It would be natural to consider generalizations
involving more firms and potentially more algorithms as well. With more
algorithms, we might see solutions in which firms cluster around different
algorithms of varying accuracies, as they balance the level of accuracy and the
amount of correlation in their decisions. It would also be interesting to
explore the ways in which correlations in firms' decisions can be decomposed
into constituent parts, such as the use of standardized tests that form input
features for algorithms, and how quantifying these forms of correlation might
help firms assess their decisions.

Finally, it will be interesting to consider how these types of results apply to
further domains. While the analysis presented here illustrates the consequences
of monoculture as applied to algorithmic hiring, our findings have potential
implications in a broader range of settings. Algorithmic monoculture not only
leads to a lack of heterogeneity in decision-making; by allowing valuable
options to slip through the cracks --- be they job candidates, potential hit
songs, or budding entrepreneurs --- it reduces total social welfare, even when
the individual decisions are more accurate on a case-by-case basis. These
concerns extend beyond the use of algorithms; whenever decision-makers rely on
identical or highly correlated evaluations, they miss out on hidden gems, and in
this way diminish the overall quality of their decisions.

\paragraph*{Acknowledgements.}
This work has been supported in part by a Simons Investigator Award,
a Vannevar Bush Faculty Fellowship,
a MURI grant,
an NSF Graduate Research Fellowship,
and 
grants from the ARO, AFOSR, and the MacArthur Foundation.

\bibliographystyle{plain}
\bibliography{refs}

\numberwithin{equation}{section}
\appendix

\section{Random Utility Models satisfying Definition~\ref{def:family}}
\label{app:RUM_family}
\begin{theorem}
  \label{thm:RUM_family}
  Let $f$ be the pdf of $\mc E$. The family of RUMs $\mc F_\theta$ given by
  ranking $x_i + \frac{\varepsilon_i}{\theta}$ with $\varepsilon_i \sim \mc E$
  satisfies the conditions of Definition~\ref{def:family} if:
  \begin{itemize}
    \item $f$ is differentiable
    \item $f$ has positive support on $(-\infty, \infty)$
  \end{itemize}
\end{theorem}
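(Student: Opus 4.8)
The plan is to verify the three conditions of Definition~\ref{def:family} in turn, after first reducing to a convenient reparametrization. Observe that ranking the perturbed values $x_i + \varepsilon_i/\theta$ is identical to ranking $\theta x_i + \varepsilon_i$, since multiplying all entries by $\theta > 0$ preserves order. Thus the RUM with accuracy $\theta$ is equivalent to a RUM on the scaled values $\theta x_i$ with unit noise $\varepsilon_i \sim \mc E$, and I would work throughout with the variables $z_i = \theta x_i + \varepsilon_i$, whose marginal density is $f(z - \theta x_i)$.

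Asymptotic optimality is the easiest. Writing $F$ for the CDF of $\mc E$, the probability that the drawn ranking is not $\pi^*$ is at most $\sum_{i<j}\Pr[z_i < z_j] = \sum_{i<j}\Pr[\varepsilon_j - \varepsilon_i > \theta(x_i - x_j)]$. Since $x_i > x_j$ for $i<j$, each threshold tends to $+\infty$ as $\theta \to \infty$, so every term vanishes and $\Pr_{\mc F_\theta}[\pi^*]\to 1$. For differentiability, I would write $\Pr_{\mc F_\theta}[\pi]$ as an iterated integral of the marginal densities $f(z - \theta x_i)$ over the order region $z_{\pi(1)} > \dots > z_{\pi(n)}$; equivalently, as nested integrals of $f$ and $F$ in which $\theta$ enters only through the smooth arguments $t + \theta(x_i - x_j)$. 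Because $f$ is differentiable (hence $F$ is $C^1$), one can differentiate under the integral sign; the one technical point is domination, which follows from integrability of $f$ and $f'$, and I would state this as routine justification rather than belabor it.

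The substance of the theorem is monotonicity, and this is where I expect the main work. Fix a removed set $S$ and let the surviving candidates have values $v_1 > \dots > v_m$; the top survivor is the one maximizing $\theta v_r + \eta_r$ (with $\eta_r \sim \mc E$ i.i.d.), so $\E{\pi_1^{(-S)}} = \sum_r v_r\, p_r(\theta)$ where $p_r(\theta) = \int f(t)\prod_{s \ne r} F\p{t + \theta(v_r - v_s)}\,dt$ is the probability that $r$ is selected. Differentiating and collecting terms, I get $\frac{d}{d\theta}\E{\pi_1^{(-S)}} = \sum_{r}\sum_{s \ne r} v_r (v_r - v_s)\, A_{rs}(\theta)$, where $A_{rs}(\theta) = \int f(t)\, f\p{t + \theta(v_r - v_s)} \prod_{s' \ne r,s} F\p{t + \theta(v_r - v_{s'})}\,dt$ is the ``density'' of the event that $r$ and $s$ are tied at the top with all others below. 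The key observation is that, after the change of variables $u = \theta v_r + t$, $A_{rs}(\theta) = \int f(u - \theta v_r)\,f(u - \theta v_s)\prod_{s' \ne r,s} F(u - \theta v_{s'})\,du$, which is manifestly symmetric in $r \leftrightarrow s$; hence $A_{rs} = A_{sr}$.

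With this symmetry I can pair the $(r,s)$ and $(s,r)$ contributions: their sum is $(v_r - v_s)(v_r A_{rs} - v_s A_{sr}) = (v_r - v_s)^2 A_{rs}$, so $\frac{d}{d\theta}\E{\pi_1^{(-S)}} = \sum_{\{r,s\}} (v_r - v_s)^2 A_{rs}(\theta) \ge 0$ because each $A_{rs} \ge 0$. Integrating this derivative over $[\theta, \theta']$ yields the monotonicity inequality~\eqref{eq:monotonicity}. For strictness when $S = \emptyset$, all $n \ge 2$ candidates survive, so at least one pair exists; here the positive-support hypothesis does the work, since $f > 0$ on all of $\R$ forces the integrand of $A_{rs}$ to be strictly positive everywhere (with $F(\cdot) \in (0,1)$), so $A_{rs}(\theta) > 0$, while $(v_r - v_s)^2 > 0$ for distinct values, making the derivative strictly positive. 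The main obstacle, beyond the routine domination bookkeeping, is recognizing the completing-the-square structure enabled by the symmetry $A_{rs} = A_{sr}$; once that is in hand the nonnegativity is immediate, and the positivity of the noise density is exactly what upgrades the weak inequality to a strict one for $S = \emptyset$.
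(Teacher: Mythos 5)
Your proposal is correct, and while your treatment of differentiability and asymptotic optimality essentially mirrors the paper's (a fixed integration region with smooth $\theta$-dependence in the integrand, and a union bound over pairwise inversion probabilities that vanish as $\theta \to \infty$), your monotonicity argument takes a genuinely different---and in one respect more complete---route. The paper reduces monotonicity to the claim that the probability of ranking the \emph{best} remaining candidate first increases in $\theta$, and proves that claim via a subderivative argument on $1 - F\p{\max_{i} \theta(x_i - x_1) + \varepsilon_i}$; strictly speaking, that reduction leaves a gap, since raising $p_1$ alone does not force $\sum_r v_r p_r$ to rise (the mass lost by the lower-ranked candidates could redistribute toward still worse ones). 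You instead differentiate the full expectation $\sum_r v_r p_r(\theta)$ and exploit the symmetry $A_{rs} = A_{sr}$ of the pairwise tie densities---made visible by the substitution $u = t + \theta v_r$---to collapse the derivative into $\sum_{\{r,s\}} (v_r - v_s)^2 A_{rs} \ge 0$. This proves exactly the inequality that Definition~\ref{def:family} demands, gives strict monotonicity for every $S$ with at least two survivors (more than the definition requires, which is strictness only for $S = \emptyset$), and produces an interpretable formula for the marginal value of accuracy as a sum of squared value gaps weighted by tie densities. The costs are comparable on both sides: each argument differentiates under an integral sign with only informal domination, and each invokes positive support of $f$ to upgrade weak to strict inequality, so your route is, if anything, the tighter of the two.
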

\begin{proof}
  We need to show that $\mc F_\theta$ satisfies the differentiability, asymptotic
  optimality, and monotonicity conditions in Definition~\ref{def:family}.

  \textbf{Differentiability:} The probability density of any realization of the $n$
  noise samples $\varepsilon_i/\theta$ is $\prod_{i=1}^n
  f(\varepsilon_i/\theta)$. Let $\varepsilon = [\varepsilon_1/\theta, \dots,
  \varepsilon_n/\theta]$ be the vector of noise values and let $M(\pi) \subseteq
  \mathbb{R}^n$ be the region such that any $\varepsilon \in M(\pi)$ will
  produces the ranking $\pi$. The probability of any permutation $\pi$ is
  \begin{align*}
    \Pr_\theta[\pi] = \int_{M(\pi)} \prod_{i=1}^n
    f\p{\frac{\varepsilon_i}{\theta}} \; d^n \mathbf{z}.
  \end{align*}
  Because $f$ is differentiable,
  \begin{align*}
    \frac{d}{d\theta} f\p{\frac{x}{\theta}}
    = f'\p{\frac{x}{\theta}} \cdot \p{-\frac{x}{\theta^2}}
  \end{align*}
  Because $\Pr{\theta}(\pi)$ is an integral of the product of differentiable
  functions over a fixed region, it is differentiable.

  \textbf{Asymptotic optimality:} We will show that for any pair of elements and
  any $\delta > 0$, there exists sufficiently large $\theta$ such that
  the probability that they incorrectly ranked is at most $\delta$. We will
  conclude with a union bound over the $n-1$ pairs of adjacent candidates that
  there exists sufficiently large $\theta$ such that the probability of
  outputting the correct ranking must be at least $1-(n-1)\delta$.

  Consider two candidates $x_i > x_{i+1}$. Let $\nu$ be the difference $x_i -
  x_{i+1}$. Then, they will be correctly ranked if
  \begin{align*}
    \frac{\varepsilon_i}{\theta} &> -\frac{\nu}{2} \\
    \frac{\varepsilon_{i+1}}{\theta} &< \frac{\nu}{2}
  \end{align*}
  Let $\overline{q}$ and $\underline{q}$ be the $1-\frac{\delta}{2}$ and
  $\frac{\delta}{2}$ quantiles of $\mc E$ respectively, and let $q =
  \max(|\overline{q}|, |\underline{q}|)$. For $\theta > \frac{2q}{\nu}$,
  \begin{align*}
    \Pr\b{\frac{\varepsilon_i}{\theta} < -\frac{\nu}{2}}
    &= \Pr\b{\varepsilon_i < -\frac{\nu\theta}{2}} \\
    &< \Pr\b{\varepsilon_i < -q} \\
    &\le \Pr\b{\varepsilon_i < \underline{q}} \\
    &= \frac{\delta}{2} \\
    \Pr\b{\frac{\varepsilon_{i+1}}{\theta} > \frac{\nu}{2}}
    &= \Pr\b{\varepsilon_{i+1} > \frac{\nu\theta}{2}} \\
    &< \Pr\b{\varepsilon_{i+1} > q} \\
    &\le \Pr\b{\varepsilon_{i+1} > \overline{q}} \\
    &= \frac{\delta}{2}
  \end{align*}
  Thus, for sufficiently large $\theta$, the probability that $x_i$ and
  $x_{i+1}$ are incorrectly ordered is at most $\delta$.

  Repeating this analysis for all $n-1$ pairs of adjacent elements, taking the
  maximum of all the $\theta$'s, and taking a union bound yields that the
  probability of incorrectly ordering any pair of elements is at most
  $(n-1)\delta$, meaning the probability of outputting the correct ranking is at
  least $1-(n-1)\delta$. Since $\delta$ is arbitrary, this probability can be
  made arbitrarily close to 1, satisfying the asymptotic optimality condition.

  \textbf{Monotonicity:} The removal of any elements does not alter the
  distribution of the remaining elements, meaning that the distribution of
  $\pi^{(-S)}$ is equivalent to a RUM with $n - |S|$ elements. Thus, it suffices
  to show that for a RUM with positive support on $(-\infty, \infty)$, the
  probability of ranking the best candidate first strictly increases with $\theta$.

  Recall that by definition, the candidates are ranked according to $x_i +
  \frac{\varepsilon_i}{\theta}$. The probability that $x_1$ is ranked first is
  \begin{align*}
    \Pr\b{x_1 + \frac{\varepsilon_1}{\theta} > \max_{2 \le i \le n} x_i +
    \frac{\varepsilon_i}{\theta}}
    &= \Pr\b{\frac{\varepsilon_1}{\theta} > \max_{2 \le i \le n} x_i - x_1 +
    \frac{\varepsilon_i}{\theta}} \\
    &= \Pr\b{\varepsilon_1 > \max_{2 \le i \le n} \theta(x_i - x_1) +
    \varepsilon_i} \\
    &= \mathbb{E}_{\varepsilon_2, \dots, \varepsilon_n} \Pr\b{\varepsilon_1 >
      \max_{2 \le i \le n} \theta(x_i - x_1) + \varepsilon_i \given
    \varepsilon_2, \dots, \varepsilon_n}
    \numberthis \label{eq:rum_monotonicity}
  \end{align*}
  We want to show that~\eqref{eq:rum_monotonicity} is increasing in $\theta$.
  Intuitively, this is because as $\theta$ increases, the right hand side of the
  inequality inside the probability decreases. To prove this formally, it
  suffices to show that the subderivative of~\eqref{eq:rum_monotonicity} with
  respect to $\theta$ only includes strictly positive numbers. First, we have
  \begin{align*}
    \frac{\partial}{\partial \theta} &\mathbb{E}_{\varepsilon_2, \dots,
      \varepsilon_n} \Pr\b{\varepsilon_1 > \max_{2 \le i \le n} \theta(x_i - x_1)
    + \varepsilon_i \given \varepsilon_2, \dots, \varepsilon_n} \subset \mathbb
    R_{>0} \\
                                     &\Longleftrightarrow
    \frac{\partial}{\partial \theta} \Pr\b{\varepsilon_1 > \max_{2 \le i \le n}
      \theta(x_i - x_1) + \varepsilon_i \given \varepsilon_2, \dots,
    \varepsilon_n} \subset \mathbb R_{>0}
  \end{align*}
  Let $F$ and $f$ be the cumulative density function and probability density
  function of $\mc E$ respectively. Then,
  \begin{equation*}
    \Pr\b{\varepsilon_1 > \max_{2 \le i \le n} \theta(x_i - x_1) + \varepsilon_i
    \given \varepsilon_2, \dots, \varepsilon_n}
    = 1 - F\p{\max_{2 \le i \le n} \theta(x_i - x_1) + \varepsilon_i}
  \end{equation*}
  Note that $F(\cdot)$ is strictly increasing (since $f$ is assumed to have
  positive support on $(-\infty, \infty)$), so it suffices to show that
  \begin{equation*}
    \frac{\partial}{\partial \theta} \max_{2 \le i \le n} \theta(x_i - x_1) +
    \varepsilon_i \subset \mathbb R_{<0}
  \end{equation*}
  For any $i$,
  \begin{equation*}
    \frac{d}{d\theta} \theta(x_i - x_1) + \varepsilon_i =  x_i - x_1 < 0.
  \end{equation*}
  Thus, the subderivative of the max of such functions includes only strictly
  negative numbers, which completes the proof.
\end{proof}

\section{3-candidate RUM Counterexamples}
\label{app:counter}

\subsection{Violating Definition~\ref{def:first}}

Here, we provide a noise mode $\mc E$, accuracy parameter $\theta$, and
candidate distribution $\mc D$ such that $U_{AH} < U_{AA}$.

Choose the noise distribution $\mc E$ and accuracy parameter $\theta$ such that
\begin{align*}
  \frac{\varepsilon}{\theta} &= \begin{cases}
    1 & w.p. \frac{\delta}{2} \\
    0 & w.p. 1-\delta \\
    -1 & w.p. \frac{\delta}{2}
  \end{cases}
\end{align*}
Note that this distribution does not satisfy Definition~\ref{def:family} because
it is neither differentiable nor supported on $(-\infty, \infty)$; however, we
can provide a ``smooth'' approximation to this distribution by expressing it as
the sum of arbitrarily tightly concentrated Gaussians with the same results.

We choose the candidate distribution $\mc D$ such that $x_1-1 > x_2 > x_3 >
x_1-2$. For example,
\begin{align*}
  x_1 &= \frac{7}{4} \\
  x_2 &= \frac{1}{2} \\
  x_3 &= 0
\end{align*}
Under this condition, assuming $x_3 = 0$ without loss of generality,
\begin{align*}
  U_{AH}(\theta, \theta) - U_{AA}(\theta, \theta) = \frac{\delta^{2}}{32}
  \p{\delta^{3} x_1 - 4 \delta^{2} x_1 + 4 \delta x_1 + 2 \delta^{3} x_2 - 14
  \delta^{2} x_2 + 20 \delta x_2 - 8 x_2}
\end{align*}
Notice that the lowest-power $\delta$ term is $-\frac{\delta^2 x_2}{4}$.
Therefore, for sufficiently small $\delta$, this is negative. For example,
plugging in the values given above with $\delta=.1$, $U_{AH}(\theta, \theta) -
U_{AA}(\theta, \theta) \approx -0.00076$.

\subsection{Violating Definition~\ref{def:weaker}}

Next, we'll give a 3-candidate RUM for which $U_{AH} < U_{HH}$ does not hold in
general. Consider the following 3-candidate example.

\begin{align*}
  x_1 &= 3 \\
  x_2 &= 2 \\
  x_3 &= 0
\end{align*}

Choose $\mc E$ and $\theta$ such that
\begin{equation*}
  \frac{\varepsilon}{\theta} = \begin{cases}
    1 & \text{w.p.} ~ \frac{1-\delta}{2} \\
    -1 & \text{w.p.} ~ \frac{1-\delta}{2} \\
    10 & \text{w.p.} ~ \frac{\delta}{2} \\
    -10 & \text{w.p.} ~ \frac{\delta}{2}
  \end{cases}
\end{equation*}
Again, while this noise model doesn't satisfy Definition~\ref{def:family}, we
can approximate it arbitrarily closely with the sum of tightly concentrated
Gaussians. Let the $\theta_A = 1.1 \theta$ and $\theta_H = 0.9\theta$.

We will show that for these parameters, $U_{AH}(\theta_A, \theta_H) >
U_{HH}(\theta_A, \theta_H)$, i.e., it is somehow better to choose after a better
opponent than after a worse opponent. At a high level, the reasoning for this is
as follows:
\begin{enumerate}
  \item When choosing first, the only difference between the algorithm and the
    human evaluator is that the algorithm is more likely to choose $x_2$ than
    $x_3$. Both strategies have identical probabilities of selecting $x_1$.
  \item When choosing second, the human evaluator's utility is higher when $x_2$
    has already been chosen than when $x_3$ has already been chosen. This is
    because when $x_2$ is unavailable, the human evaluator is almost guaranteed
    to get $x_1$; when $x_3$ is unavailable, the human evaluator will choose
    $x_2$ with probability $\approx 1/4$.
\end{enumerate}

Let $\tau$ and $\pi$ be rankings generated by the algorithm and human
evaluator respectively. First, we will show that
\begin{align}
  \Pr[\tau_1 = x_1] &= \Pr[\pi_1 = x_1] \label{eq:x1_same} \\
  \Pr[\tau_1 = x_2] &> \Pr[\pi_1 = x_2] \label{eq:x2_higher}
\end{align}
To do so, consider the realizations of $\varepsilon_1, \varepsilon_2,
\varepsilon_3$ that result in different rankings under $\theta_A$ and
$\theta_H$. In fact, the only set of realizations that result in different
rankings are when $\varepsilon_2/\theta = -1$ and $\varepsilon_3/\theta = 1$.
Thus, the algorithm and human evaluator always rank $x_1$ in the same position,
conditioned on a realization, which proves~\eqref{eq:x1_same}; the only
difference is that the algorithm sometimes ranks $x_2$ above $x_3$ when the
human evaluator does not. Moreover, whenever $\varepsilon_1/\theta = -10$, $x_2$
is more strictly more likely to be ranked first under the algorithm than the
human evaluator, which proves~\eqref{eq:x2_higher}.

Next, we must show that when choosing second, the human evaluator is better off
when $x_2$ is unavailable than when $x_3$ is unavailable. This is clearly true
because for the human evaluator,
\begin{align*}
  \Pr\b{x_1 + \frac{\varepsilon_1 \theta_H}{\theta} > x_3 + \frac{\varepsilon_3
  \theta_H}{\theta}} &\approx 1 - O(\delta) \\
  \Pr\b{x_1 + \frac{\varepsilon_1 \theta_H}{\theta} > x_2 + \frac{\varepsilon_3
  \theta_H}{\theta}} &\approx \frac{3}{4}
\end{align*}
Thus, conditioned on $x_2$ being unavailable, the human evaluator gets utility
$\approx 3$, whereas when $x_3$ is unavailable, the human evaluator gets utility
$\approx 2.75$. Let $u_{-i}$ be the expected utility for the human evaluator
when $x_i$ is unavailable. Putting this together, we get
\begin{align*}
  U_{AH}&(\theta_A, \theta_H) - U_{HH}(\theta_A, \theta_H) \\
  &= \sum_{i=1}^3 (\Pr[\tau_1 = x_i] - \Pr[\pi_1 = x_i]) u_{-i} \\
  &= (\Pr[\tau_1 = x_1] - \Pr[\pi_1 = x_1]) u_{-1}
  + (\Pr[\tau_1 = x_2] - \Pr[\pi_1 = x_2]) u_{-2}
  + (\Pr[\tau_1 = x_3] - \Pr[\pi_1 = x_3]) u_{-3}
  \\
  &= (\Pr[\tau_1 = x_2] - \Pr[\pi_1 = x_2]) u_{-2}
  + (\Pr[\tau_1 = x_3] - \Pr[\pi_1 = x_3]) u_{-3}
  \tag{$\Pr[\tau_1 = x_1 = \Pr[\pi_1 = x_1]$} \\
  &= (\Pr[\tau_1 = x_2] - \Pr[\pi_1 = x_2]) (u_{-2} - u_{-3})
  \tag{$\sum_{i=1}^3 \Pr[\tau_1 = x_i] = \sum_{i=1}^3 \Pr[\pi_1 = x_i]$} \\
  &> 0
\end{align*}
The last step follows from~\eqref{eq:x2_higher} and because $u_{-2} > u_{-3}$.

\section{Proof of Theorem~\ref{thm:RUM}}
\label{app:RUM_proof}

\subsection{Verifying Definition~\ref{def:first}}

By~\eqref{eq:AH_AA_cond}, we can equivalently show that for any $\theta$,
$U_{AH}(\theta, \theta) > U_{AA}(\theta, \theta)$. Let $\tau$ and $\pi$ be the
algorithmic and human-generated rankings respectively. Note that they're
identically distributed because $\theta_A = \theta_H$. Define
\begin{equation*}
  Y \triangleq \begin{cases}
    \pi_1 & \pi_1 \ne \tau_1 \\
    \pi_2 & \text{otherwise}
  \end{cases}
\end{equation*}
Note that $U_{AH}(\theta, \theta) = \E{Y}$ and $U_{AA}(\theta, \theta) =
\E{\tau_2}$. We want to show that $U_{AH}(\theta, \theta) - U_{AA}(\theta,
\theta) = \E{x_Y - x_{\tau_2}} > 0$. It is sufficient to show that for any $k$,
$\E{Y - \tau_2 \given \tau_1 = x_k} > 0$. Let $X_i = x_i +
\varepsilon_i/\theta$. Note that for distinct $i, j, k$ and $x_i > x_j$,
\begin{align*}
  \E{Y - \tau_2 \given \tau_1 = x_k} > 0
  &\Longleftarrow
  \frac{\Pr[Y = x_i \given \tau_1 = x_k]}{\Pr[Y = x_j \given \tau_1 = x_k]} >
  \frac{\Pr[\tau_2 = x_i \given \tau_1 = x_k]}{\Pr[\tau_2 = x_j \given \tau_1 = x_k]} \\
  &\Longleftrightarrow
  \Pr[Y = x_i \given \tau_1 = x_k] > \Pr[\tau_2 = x_i \given \tau_1 = x_k]
  \tag{numerator and denominator sum to 1}\\
  &\Longleftrightarrow
  \Pr[X_i > X_j] > \Pr[X_i > X_j \given X_k > X_i \cap X_k > X_j] \\
  &\Longleftrightarrow
  \Pr[X_i > X_j] > \mathbb{E}_{X_k}[\Pr[X_i > X_j \given X_k = a, X_i < a, X_j <
  a]].
\end{align*}
Thus, it suffices to show that for any $a$,
\begin{equation}
  \Pr[X_i > X_j] > \Pr[X_i > X_j \given X_i < a, X_j < a].
\end{equation}
Since $\Pr[X_i > X_j] = \lim_{a \to \infty} \Pr[X_i > X_j \given X_i < a, X_j <
a]$, it suffices to show that for all $a$,
\begin{equation}
  \label{eq:3cand_deriv}
  \frac{d}{da} \Pr[X_i > X_j \given X_i < a, X_j < a] \ge 0,
\end{equation}
and that it is strictly positive for some $a$. In other words, the higher $a$
is, the more likely $i$ and $j$ are to be correctly ordered. In
Theorems~\ref{thm:laplace_deriv} and~\ref{thm:gaussian_deriv}, we show
that~\eqref{eq:3cand_deriv} holds for both Laplacian and Gaussian noise
respectively, which proves that RUMs based on both distributions satisfy
Definition~\ref{def:first}.

\subsection{Verifying Definition~\ref{def:weaker}}

Next, we show that for both Laplacian and Gaussian distributions,
$U_{AH}(\theta_A, \theta_H) < U_{HH}(\theta_A, \theta_H)$ for all $\theta_A >
\theta_H$. In fact, for 3-candidate RUM families, we will show that this is
always true for any \textit{well-ordered} distribution, defined as follows.

\begin{definition}
  A noise model with density $f(\cdot)$ is \textbf{well-ordered} if for any
  $a>b$ and $c>d$,
  \begin{equation*}
    f(a-c)f(b-d) > f(a-d)f(b-c).
  \end{equation*}
\end{definition}

In other words, for a well-ordered noise model, given two numbers, two
candidates are more likely to be correctly ordered than inverted conditioned on
realizing those two numbers in some order. Lemma~\ref{lem:well-ordered} shows
that both Gaussian and Laplacian distributions are well-ordered.

Thus, it suffices to show that for any 3-candidate RUM with a well-ordered noise
model, $U_{AH}(\theta_A, \theta_H) < U_{HH}(\theta_A, \theta_H)$ when $\theta_A
> \theta_H$.
\begin{theorem}
  \label{thm:3cand_hh_ah}
  For 3 candidates with unique values $x_1 > x_2 > x_3$ and well-ordered i.i.d.
  noise with support $(-\infty, \infty)$, if $\theta_A > \theta_H$, then
  $U_{AH}(\theta_A, \theta_H) < U_{HH}(\theta_A, \theta_H)$.
\end{theorem}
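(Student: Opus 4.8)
The plan is to condition on the second firm's (human, accuracy $\theta_H$) ranking $\pi$ and isolate the only channel through which the first firm's choice affects the second. Conditioned on $\pi$, the second firm obtains $\pi_1$ unless the first firm removes the candidate that $\pi$ ranks first, in which case it obtains $\pi_2$. Since the first firm's removal is independent of $\pi$, I would write, for a first-firm strategy $s$ with first-choice distribution $P_s$ over candidates,
\begin{equation*}
  U_{sH}(\theta_A,\theta_H) = \E{\pi_1} - \sum_{k=1}^3 P_s(x_k)\, g_k,
  \qquad g_k := \E{(\pi_1 - \pi_2)\,\ind{\pi_1 = x_k}},
\end{equation*}
where the $g_k$ depend only on the second firm's ranking. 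Subtracting gives $U_{AH} - U_{HH} = -\sum_k \Delta_k g_k$ with $\Delta_k := P_A(x_k) - P_H(x_k)$ and $\sum_k \Delta_k = 0$, so the goal reduces to proving $\sum_k \Delta_k g_k > 0$; equivalently, writing $u_{-k} = \E{\pi_1} - g_k$ for the second firm's value when $x_k$ is removed, to proving $\sum_k \Delta_k u_{-k} < 0$.

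Next I would nail down the signs that come for free. Because $\theta_A > \theta_H$, monotonicity (Definition~\ref{def:family}, via Theorem~\ref{thm:RUM_family}) gives $\Delta_1 > 0$: a more accurate ranking is strictly more likely to place the best candidate first. A symmetric argument shows the worst candidate becomes less likely to be ranked first, so $\Delta_3 < 0$. By inspection $g_1 > 0$ (given $\pi_1 = x_1$ the gap $\pi_1 - \pi_2$ is always positive) and $g_3 < 0$ (given $\pi_1 = x_3$ it is always negative), so the extreme terms $\Delta_1 g_1$ and $\Delta_3 g_3$ are both positive and work in our favor. I would also record, using well-orderedness, that $u_{-1}$ is the smallest of the three: when the best candidate is removed the second firm can do no better than $x_2$, and the inequalities $u_{-2}, u_{-3} > u_{-1}$ follow because a wider value gap is more likely to be ordered correctly, which is exactly the total-positivity content of the well-ordered condition $f(a-c)f(b-d) > f(a-d)f(b-c)$.

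The crux, and the main obstacle, is the middle candidate $x_2$, where neither the sign of $\Delta_2$ nor that of $g_2$ is determined. Eliminating $\Delta_3$ via $\sum_k \Delta_k = 0$, the target becomes
\begin{equation*}
  \Delta_1\,(u_{-3} - u_{-1}) > \Delta_2\,(u_{-2} - u_{-3}).
\end{equation*}
The left side is positive since $u_{-1}$ is smallest, but --- as a direct Gaussian computation already shows --- one can have $u_{-2} > u_{-3}$, and increasing accuracy can push mass onto $x_2$ (so $\Delta_2 > 0$); this is precisely the bad configuration that the Appendix~\ref{app:counter} counterexample exploits when the noise is not well-ordered. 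So well-orderedness must enter here quantitatively, forcing the probability mass that accuracy shifts onto the best candidate to dominate the mass it shifts onto the middle candidate, weighted by the utility gap $u_{-2} - u_{-3}$. My plan for this step is to write $\Delta_1$ and $\Delta_2$ as integrals $\int_{\theta_H}^{\theta_A} \tfrac{d}{d\theta} P_\theta(x_k)\,d\theta$ (with the $u_{-k}$ held fixed at $\theta_H$) and reduce to the pointwise-in-$\theta$ inequality $\tfrac{d}{d\theta}P_\theta(x_1)\,(u_{-3}-u_{-1}) \ge \tfrac{d}{d\theta}P_\theta(x_2)\,(u_{-2}-u_{-3})$. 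To evaluate these derivatives I would condition on the top candidate's noise so that the two remaining pairwise comparisons factor into a product, differentiate, and then invoke the well-ordered inequality to compare the rate at which probability flows toward $x_1$ against the rate toward $x_2$. I expect this pointwise comparison to be the delicate, calculation-heavy heart of the argument; the reduction and the extreme-candidate bookkeeping above are routine by comparison.
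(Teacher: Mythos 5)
Your reduction is exactly the paper's: the decomposition $U_{AH}-U_{HH}=\sum_k \Delta_k u_{-k}$, the facts $\Delta_1>0$, $\Delta_3\le 0$, $\sum_k\Delta_k=0$, and the target $\Delta_1\p{u_{-3}-u_{-1}} > \Delta_2\p{u_{-2}-u_{-3}}$ all appear there (with $\Delta_k$ written as $\Delta p_k$), and you correctly locate the difficulty at the middle candidate. The problem is that the crux is left as a plan, and the plan as sketched cannot close the argument. In the dangerous regime where $\frac{d}{d\theta}P_\theta(x_2)>0$ and $u_{-2}>u_{-3}$, your pointwise inequality needs \emph{two} independent ingredients: (i) a rate comparison $\frac{d}{d\theta}P_\theta(x_1)\ge\frac{d}{d\theta}P_\theta(x_2)$, and (ii) a utility-gap comparison $u_{-3}-u_{-1}\ge u_{-2}-u_{-3}$, i.e.\ $2u_{-3}\ge u_{-1}+u_{-2}$. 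Your sketch gestures only at (i) (``compare the rate at which probability flows toward $x_1$ against the rate toward $x_2$''). Ingredient (ii) never appears, and without it a ratio bound of $1$ on the rates proves nothing, because nothing in your argument bounds $\p{u_{-2}-u_{-3}}/\p{u_{-3}-u_{-1}}$ by $1$. The paper supplies (ii) by writing $u_{-1}=\lambda_1 x_2+(1-\lambda_1)x_3<x_2$, $u_{-2}=\lambda_2 x_1+(1-\lambda_2)x_3<x_1$, and $u_{-3}=\lambda_3 x_1+(1-\lambda_3)x_2>\frac{1}{2}(x_1+x_2)$ since $\lambda_3>\frac{1}{2}$ (any pair is more likely correctly than incorrectly ordered); this elementary observation is the load-bearing step in the paper's Case 2, and no strengthening of the well-orderedness analysis of the rates substitutes for it.

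On ingredient (i), note also that the paper never differentiates: Lemma~\ref{lem:deltapi} proves the integrated statement $\Delta_1\ge\Delta_2$ directly, by coupling $\mc F_{\theta_A}$ to $\mc F_{\theta_H}$ through the contraction map $\cont(\cdot)$ (which can correct inversions but never create them) and then applying a swap bijection to realizations, with well-orderedness guaranteeing the swap does not decrease the density; the same coupling yields $\Delta_3\le 0$ (Lemma~\ref{lem:AnvsHn}), which you assert only via an informal ``symmetric argument.'' If you insist on the derivative route you must additionally justify differentiating $P_\theta(x_k)$ in $\theta$ and push through the conditional-factorization calculus, and you would \emph{still} need (ii) at the end. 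So as it stands the proposal is an incomplete proof: the reduction and extreme-candidate bookkeeping are right and match the paper, but both halves of the heart of the theorem---the mass comparison $\Delta_1\ge\Delta_2$ and the gap inequality $2u_{-3}\ge u_{-1}+u_{-2}$---remain unproven, and the second is not even identified as a needed step.
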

\begin{proof}
  Define $u_{-i}$ to be the expected utility of the maximum element of the
  human-generated ranking when $i$ is not available. Because we're in the
  3-candidate setting, we have
  \begin{align*}
    u_{-1} &= \lambda_1 x_2 + (1-\lambda_1) x_3 \\
    u_{-2} &= \lambda_2 x_1 + (1-\lambda_2) x_3 \\
    u_{-3} &= \lambda_3 x_1 + (1-\lambda_3) x_2
  \end{align*}
  where $1/2 < \lambda_i < 1$. This is because the noise has support everywhere,
  so it is impossible to correctly rank any two candidates with probability 1,
  and any two candidates are more likely than not to be correctly ordered:
  \begin{align*}
    \Pr\b{\frac{\varepsilon_i}{\theta} - \frac{\varepsilon_j}{\theta} > -\delta}
    = \Pr[\varepsilon_i - \varepsilon_j \ge 0] + \Pr\b{0 > \frac{\varepsilon_i -
    \varepsilon_j}{\theta} > -\delta}
    %= \frac{1}{2} \p{1 + \Pr[|\varepsilon_i - \varepsilon_j| < \delta]}
    > \frac{1}{2}
  \end{align*}
  Note that $\lambda_2 > \lambda_1$ and $\lambda_2 >
  \lambda_3$, since
  \begin{align*}
    \lambda_2 = \Pr[\varepsilon_1 - \varepsilon_3 > -\theta(x_1 - x_3)]
    > \max\left\{\Pr[\varepsilon_1 - \varepsilon_3 > -\theta(x_2 - x_3)],
    \Pr[\varepsilon_1 - \varepsilon_3 > -\theta(x_1 - x_2)]\right\}
    = \max\{\lambda_1, \lambda_3\}.
  \end{align*}
  Let $\tau \sim \mc F_{\theta_A}$ and $\pi \sim \mc F_{\theta_H}$. With this,
  we can write
  \begin{align*}
    U_{AH}(\theta_A, \theta_H) &= \sum_{i=1}^3 \Pr[\tau_1 = i] u_{-i} \\
    U_{HH}(\theta_A, \theta_H) &= \sum_{i=1}^3 \Pr[\pi_1 = i] u_{-i}
  \end{align*}
  Define
  \begin{equation*}
    \Delta p_i = \Pr[\tau_1 = i] - \Pr[\pi_1 = i]
  \end{equation*}
  Using Lemmas~\ref{lem:AnvsHn}, and~\ref{lem:deltapi}, we have
  \begin{align*}
    \Delta p_1 &> 0 \tag{By monotonicity of RUM families, see
    Appendix~\ref{app:RUM_family}} \\
    \Delta p_1 &\ge \Delta p_2 \\
    \Delta p_3 &\le 0
  \end{align*}
  Also, $\Delta p_1 + \Delta p_2 + \Delta p_3 = 0$.
  We must show that
  \begin{equation*}
    U_{AH}(\theta_A, \theta_H) - U_{HH}(\theta_A, \theta_H) = \sum_{i=1}^3
    \Delta p_i u_{-i} < 0.
  \end{equation*}
  We consider 2 cases.

  \textbf{Case 1:} $\Delta p_2 \le 0$. \\
  Then, $\Delta p_1 = -(\Delta p_2 + \Delta p_3)$. This yields
  \begin{align*}
    \sum_{i=1}^3 \Delta p_i u_{-i}
    &= \Delta p_1 u_{-1} + \Delta p_2 u_{-2} + \Delta p_3 u_{-3} \\
    &\le \Delta p_1 u_{-1} - \Delta p_1 \min(u_{-2}, u_{-3}) \\
    &= \Delta p_1 \p{\lambda_1 x_2 + (1-\lambda_1) x_3 - \min\left\{\lambda_2
    x_1 + (1-\lambda_2) x_3, \lambda_3 x_1 + (1-\lambda_3) x_2\right\}} \\
    &\le \Delta p_1 \p{\lambda_1 x_2 + (1-\lambda_1) x_3 - \min\left\{\lambda_2
    x_1 + (1-\lambda_2) x_3, x_2\right\}}
  \end{align*}
  We can show that this is at most 0 regardless of which term attains the
  minimum. Because $\lambda_2 > \lambda_1$,
  \begin{align*}
    \lambda_1 x_2 + (1-\lambda_1) x_3 - \lambda_2 x_1 - (1-\lambda_2) x_3
    &= \lambda_1 x_2 + x_3 - \lambda_1 x_3 - \lambda_2 x_1 - x_3 + \lambda_2 x_3
    \\
    &= \lambda_1 x_2 - \lambda_1 x_3 - \lambda_2 x_1 + \lambda_2 x_3 \\
    &= \lambda_1 (x_2 - x_3) + \lambda_2 (x_3 - x_1) \\
    &< \lambda_1 (x_2 - x_3) + \lambda_1 (x_3 - x_1) \\
    &= \lambda_1(x_2 - x_1) \\
    & < 0
  \end{align*}
  For the second term, we have
  \begin{equation*}
    \lambda_1 x_2 + (1-\lambda_1) x_3 - x_2 = (1-\lambda_1)(x_3 - x_2) < 0.
  \end{equation*}
  Thus,
  \begin{align*}
    \sum_{i=1}^3 \Delta p_i u_{-i} < 0.
  \end{align*}

  \textbf{Case 2:} $\Delta p_2 > 0$.
  Note that $u_{-1} < x_2 < u_{-3}$. Then, using $\Delta p_3 = -(\Delta p_1
  + \Delta p_2)$,
  \begin{align*}
    \sum_{i=1}^3 \Delta p_i u_{-i}
    &= \Delta p_1 u_{-1} + \Delta p_2 u_{-2} + \Delta p_3 u_{-3} \\
    &= \Delta p_1 (u_{-1} - u_{-3}) + \Delta p_2 (u_{-2} - u_{-3}) \\
    &\le \Delta p_2 (u_{-1} - u_{-3}) + \Delta p_2 (u_{-2} - u_{-3})
    \tag{$\Delta p_1 \ge \Delta p_2$ and $u_{-1} < u_{-3}$} \\
    &= \Delta p_2 (u_{-1} + u_{-2} - 2u_{-3}) \\
    &\le \Delta p_2 (x_2 + x_1 - 2 (\lambda_3 x_1 + (1-\lambda_3) x_2)) \\
    &< \Delta p_2 \p{x_2 + x_1 - 2 \p{\frac{1}{2} x_1 + \frac{1}{2} x_2}}
    \tag{$\lambda_3 > \frac{1}{2}$} \\
    &= 0
  \end{align*}

  Thus, $U_{AH}(\theta_A, \theta_H) < U_{HH}(\theta_A, \theta_H)$.
\end{proof}

\subsection{Supplementary Lemmas for Random Utility Models}
\begin{lemma}
  \label{lem:well-ordered}
  Both Gaussian and Laplacian distributions are well-ordered.
\end{lemma}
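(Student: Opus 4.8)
The plan is to reduce the well-ordered inequality to a (strict) log-concavity statement and then verify it for each density by a direct computation of the ratio $f(a-c)f(b-d)/\big(f(a-d)f(b-c)\big)$. The structural observation driving everything is that the four arguments pair up with a common sum: writing $s_1 = a-c,\ s_2 = b-d,\ t_1 = a-d,\ t_2 = b-c$, we have $s_1 + s_2 = t_1 + t_2 = (a+b)-(c+d)$. Moreover, because $a > b$ and $c > d$, the pair $(t_1,t_2)$ is strictly more spread out than $(s_1,s_2)$: indeed $t_1 = a-d > \max\{a-c,\,b-d\} = \max\{s_1,s_2\}$ and $t_2 = b-c < \min\{s_1,s_2\}$, so $(t_1,t_2)$ strictly majorizes $(s_1,s_2)$. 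Taking logarithms, the well-ordered inequality is exactly $\log f(s_1) + \log f(s_2) > \log f(t_1) + \log f(t_2)$, which is the statement that $(x,y)\mapsto \log f(x) + \log f(y)$ is strictly Schur-concave --- equivalently, that $\log f$ is strictly concave. So the whole lemma comes down to strict log-concavity of the two noise densities.

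For the Gaussian with unit variance, $f(x) \propto e^{-x^2/2}$, so the ratio equals $\exp\!\big(-\tfrac12[(a-c)^2 + (b-d)^2 - (a-d)^2 - (b-c)^2]\big)$. Expanding and grouping the squares as $[(a-c)^2 - (a-d)^2] + [(b-d)^2 - (b-c)^2]$ collapses the bracket to $-2(a-b)(c-d)$, so the ratio is exactly $e^{(a-b)(c-d)}$. Since $a > b$ and $c > d$, this exceeds $1$, giving the strict inequality. This is just the clean quantitative form of the strict concavity of $-x^2/2$, and I expect no difficulty here.

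For the Laplacian, $f(x) \propto e^{-|x|}$, the log of the ratio is $|a-d| + |b-c| - |a-c| - |b-d|$, and the claim reduces to the $\ell_1$ rearrangement inequality $|a-c| + |b-d| \le |a-d| + |b-c|$ (matching the larger of $\{a,b\}$ with the larger of $\{c,d\}$ minimizes the total absolute cost). The cleanest way to prove this and simultaneously track strictness is the identity $|a-d| + |b-c| - |a-c| - |b-d| = 2\,\big|[b,a]\cap[d,c]\big|$, i.e.\ twice the length of the overlap of the intervals $[b,a]$ and $[d,c]$, which I would verify by a short case analysis on the relative order of the four points. This yields a log-ratio that is $\ge 0$, hence the inequality $f(a-c)f(b-d) \ge f(a-d)f(b-c)$.

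The main obstacle is precisely the \emph{strictness} in the Laplacian case. Because $\log f(x) = \mathrm{const} - |x|$ is only piecewise linear, the Laplacian is log-concave but not strictly log-concave, and the overlap identity shows that the inequality degenerates to an \emph{equality} exactly when $[b,a]$ and $[d,c]$ are disjoint or meet at a single point; a concrete instance is $a=1,\ b=0,\ c=3,\ d=2$, where $|a-c|+|b-d| = |a-d|+|b-c| = 4$ and hence the two sides coincide. Thus the pointwise strict bound claimed by well-orderedness genuinely fails for some configurations, and the Schur-concavity argument only delivers strictness when the two intervals overlap. I would resolve this by exploiting how the property is actually used: in Lemmas~\ref{lem:AnvsHn} and~\ref{lem:deltapi} it is invoked inside integrals over the realized values, where the set of overlapping configurations has positive measure, so integrating the pointwise $\ge$ together with strict inequality on that positive-measure set restores the strict conclusion needed downstream. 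Pinning down this reduction to the overlapping regime --- rather than asserting the pointwise strict inequality, which is false for the Laplacian --- is the crux of the proof.
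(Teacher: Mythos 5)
Your proof is correct, and on the one point that requires care it is more careful than the paper's own argument. The computational core is the same as the paper's: for the Gaussian you expand the exponents and find the ratio $f(a-c)f(b-d)/\bigl(f(a-d)f(b-c)\bigr) = e^{(a-b)(c-d)} > 1$; the paper performs the identical expansion but with a sign error, writing the ratio as $\exp(-2(a-b)(c-d))$ and concluding $f(a-c)f(b-d) < f(a-d)f(b-c)$ --- the reverse of what the lemma asserts (the correct exponent is $+2(a-b)(c-d)$ under the paper's normalization). For the Laplacian, both you and the paper reduce to the $\ell_1$ rearrangement inequality $|a-c|+|b-d| \le |a-d|+|b-c|$; the paper then claims strictness via a Manhattan-path argument, but that argument only shows that the distance from $(a,b)$ to $(c,d)$ is \emph{no larger} than the distance from $(b,a)$ to $(c,d)$. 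You are right that strictness genuinely fails: your identity $|a-d|+|b-c|-|a-c|-|b-d| = 2\bigl|[b,a]\cap[d,c]\bigr|$ pins down the equality cases, and your counterexample $a=1$, $b=0$, $c=3$, $d=2$ (both sides equal $4$) shows the Laplacian is \emph{not} well-ordered under the paper's strict definition. So the lemma as stated is false for the Laplacian, the paper's proof of it is flawed, and your weak version is what is actually true.

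Your proposed repair is also essentially the right one, with one simplification available. Well-orderedness is invoked downstream only in Lemma~\ref{lem:deltapi} (not in Lemma~\ref{lem:AnvsHn}, which needs only the contraction coupling), and there it is used pointwise as a weak inequality, $f_H(\swap_i(r))/f_H(r) \ge 1$; the strict conclusion of Theorem~\ref{thm:3cand_hh_ah} comes from other sources ($\Delta p_1 > 0$ via monotonicity, together with the $\lambda_i$ bounds in the case analysis). Consequently you do not even need the positive-measure refinement you sketch: weakening the inequality in the definition of well-ordered (or, equivalently, proving only your weak version for the Laplacian) leaves every downstream use intact. The Gaussian, being strictly log-concave, satisfies the strict version everywhere, exactly as your Schur-concavity framing predicts.
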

\begin{proof}
  The Gaussian noise model is
  well-ordered:
  \begin{align*}
    f(a-c)f(b-d)
  &= \frac{1}{2 \sigma^2 \pi} \exp(-(a-c)^2 -(b-d)^2) \\
  &= \frac{1}{2 \sigma^2 \pi} \exp(-(a-d)^2 - (b-c)^2 - 2(ac + bd - ad - bc)) \\
  &= f(a-d)f(b-c) \exp(-2((a-b)(c-d))) \\
  &< f(a-d)f(b-c)
  \end{align*}
  Laplacian noise is as well:
  \begin{align*}
    f(a-c)f(b-d) &= \frac{1}{4} \exp(-|a-c|-|b-d|) \\
    f(a-d)f(b-c) &= \frac{1}{4} \exp(-|a-d|-|b-c|) \\
  \end{align*}
  It suffices to show that for $a>b$ and $c>d$, $|a-c|+|b-d| < |a-d|+|b-c|$.
  To show this, plot $(a, b)$ and $(c, d)$ in the $(x, y)$ plane. Note that
  they're both below the $y=x$ line, and that the $\ell_1$ distance between them
  is $|a-c|+|b-d|$. Moreover, the $\ell_1$ distance between any two points must be
  realized by some Manhattan path, which is a combination of horizontal and
  vertical line segments. Consider the point $(b, a)$, which is above the $y=x$
  line. Any Manhattan path from $(b, a)$ to $(c, d)$ must cross the $y=x$ line at
  some point $(w, w)$. Since $(b, a)$ and $(a, b)$ are equidistant from $(w, w)$,
  for any Manhattan path from $(b, a)$ to $(c, d)$, there exists a Manhattan path
  from $(a, b)$ to $(c, d)$ passing through $(w, w)$  of the same length, meaning
  the $\ell_1$ distance from $(a, b)$ to $(c, d)$ is smaller than the $\ell_1$
  distance from $(b, a)$ to $(c, d)$. As a result, $|a-c|+|b-d| < |a-d|+|b-c|$.
\end{proof}

Next, we show a few basic facts. Let $f_A(r)$ be the density function of the
joint realization $R = [X_1, \dots, X_n] = [x_1 + \varepsilon_1/\theta_A, \dots,
x_n + \varepsilon_n/\theta_A]$ under the algorithmic ranking and $f_H(r)$ be the
similarly defined density function under the human-generated ranking. Consider
the ``contraction'' operation $r' = \cont(r)$ such that $r_i'= x_i + (r_i-x_i)
\cdot \frac{\theta_H}{\theta_A}$. Essentially, the contraction defines a
coupling between $f_A(\cdot)$ and $f_H(\cdot)$, since for $r' = \cont(r)$,
$f_A(r') \dr' = f_H(r) \dr$. Let $\pi(r)$ be the ranking induced by $r$. Note
that contraction cannot introduce any new inversions in $\pi(r)$---that is, if
$i$ is ranked above $j$ in $\pi(r)$ for $i < j$, then $i$ is ranked above $j$ in
$\pi(\cont(r))$. Intuitively, this is because contraction pulls values closer to
their means, and can therefore only correct existing inversions, not introduce
new ones. This fact will allow us to prove some useful lemmas.

\begin{lemma}
  If $F_{\theta}$ is a RUM family satisfying Definition~\ref{def:family}, then
  for $\tau \sim \mc F_{\theta_A}$ and $\pi \sim \mc F_{\theta_H}$,
  \begin{equation*}
    \Pr[\tau_1 = x_n] \le \Pr[\pi_1 = x_n]
  \end{equation*}
  \label{lem:AnvsHn}
\end{lemma}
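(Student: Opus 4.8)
The plan is to prove the inequality by the contraction coupling introduced just above the lemma, reducing it to a pointwise event-inclusion statement that follows immediately from the fact that contraction never introduces inversions. Throughout I work in the regime $\theta_A > \theta_H$ relevant here (e.g. the setting of Theorem~\ref{thm:3cand_hh_ah}), which is exactly what makes $\cont$ a contraction toward the means $x_i$.

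First I would make the coupling explicit. Draw a single noise vector $\varepsilon = (\varepsilon_1, \dots, \varepsilon_n)$ with each $\varepsilon_i \sim \mc E$ i.i.d., and form the human realization $r$ with $r_i = x_i + \varepsilon_i/\theta_H$ and the algorithmic realization $r' = \cont(r)$, so that $r_i' = x_i + \varepsilon_i/\theta_A$. By construction $r \sim f_H$ and $r' \sim f_A$, so $\pi \triangleq \pi(r)$ has the law of a draw from $\mc F_{\theta_H}$ and $\tau \triangleq \pi(r')$ has the law of a draw from $\mc F_{\theta_A}$; the change-of-variables identity $f_A(r')\dr' = f_H(r)\dr$ confirms these marginals. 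Thus it suffices to compare the two events on this single coupled probability space. Next I would translate ``the worst candidate is ranked first'' into a statement about inversions. Since $x_n$ is the smallest value, the event $\{\tau_1 = x_n\}$ holds exactly when, in $\tau$, candidate $n$ is ranked above every $j < n$, i.e. when every pair $(j,n)$ with $j<n$ is inverted in $\tau$; the same characterization holds verbatim for $\pi$. Now I invoke the established fact (stated just before the lemma) that contraction preserves every correctly ordered pair and can therefore only remove inversions: the inversions of $\tau = \pi(\cont(r))$ form a subset of the inversions of $\pi = \pi(r)$. Consequently, if all pairs $(j,n)$ are inverted in $\tau$, they are also all inverted in $\pi$, giving the pointwise inclusion $\{\tau_1 = x_n\} \subseteq \{\pi_1 = x_n\}$ on the coupled space. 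Taking probabilities then yields $\Pr[\tau_1 = x_n] \le \Pr[\pi_1 = x_n]$.

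The delicate points, and thus where I would be most careful, are (i) confirming the coupling has the correct marginals, which is immediate from the identity $f_A(\cont(r))\,d(\cont(r)) = f_H(r)\dr$, and (ii) the logical equivalence ``worst candidate first $\Leftrightarrow$ every pair involving $n$ is inverted,'' which relies precisely on $n$ being the global minimum, so that ranking $n$ above all other candidates is the same as $n$ occupying the top position. The genuinely substantive ingredient, namely that contraction cannot create inversions, is already granted, so no separate monotonicity computation is needed; the main obstacle is simply packaging that fact as the event inclusion above rather than arguing about probabilities directly, which would be far messier. This coupling approach also has the advantage of being dimension-free and distribution-agnostic, using only the contraction property of the RUM family.
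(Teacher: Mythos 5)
Your proposal is correct and is essentially the paper's own argument: both use the contraction coupling and the fact that contraction can only remove inversions to conclude that the event $\{\tau_1 = x_n\}$ implies $\{\pi_1 = x_n\}$. The only cosmetic difference is that you realize the coupling explicitly as a single shared noise draw, whereas the paper expresses the same coupling through the change-of-variables identity $f_A(\cont(r))\,d\cont(r) = f_H(r)\,dr$ inside the integral.
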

\begin{proof}
  Consider any realization $r$. Because inversions can only be corrected, not
  generated, by contraction, if $\pi_1(r') = n$, then $\pi_1(r) = n$ where $r' =
  \cont(r)$. Since $r'$ and $r$ have equal measure under $f_A$ and $f_H$
  respectively, we have
  \begin{align*}
    \Pr[\pi_1 = x_n]
    &= \int_{\R^n} f_H(r) \ind{\pi_1(r) = x_n} \dr \\
    &= \int_{\R^n} f_A(\cont(r)) \ind{\pi_1(r) = x_n} \; d\cont(r) \\
    &\ge \int_{\R^n} f_A(\cont(r)) \ind{\pi_1(\cont(r)) = x_n} \; d\cont(r) \\
    &= \int_{\R^n} f_A(r) \ind{\pi_1(r) = x_n} \dr \\
    &= \Pr[\tau_1 = x_n]
  \end{align*}
\end{proof}

Next, we prove the following result for well-ordered noise models.
\begin{lemma}
  For any $i > 1$, if the noise model $\mc E$ is well-ordered, for $\theta_A \ge
  \theta_H$, $\tau \sim \mc F_{\theta_A}$, and $\pi \sim \mc F_{\theta_H}$,
  \begin{equation*}
    \Pr[\tau_1 = x_1] - \Pr[\pi_1 = x_1] \ge \Pr[\tau_1 = x_i] - \Pr[\pi_1
    = x_i]
  \end{equation*}
  \label{lem:deltapi}
\end{lemma}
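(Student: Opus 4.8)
The plan is to recognize the claim as the statement that the gap
$D(\theta) = \Pr_\theta\b{\pi_1 = x_1} - \Pr_\theta\b{\pi_1 = x_i}$
between the probabilities that candidate $1$ and candidate $i$ are ranked first (for $\pi \sim \mc F_\theta$) is nondecreasing in $\theta$: rearranging the desired inequality gives exactly $D(\theta_A) \ge D(\theta_H)$, and $\theta_A \ge \theta_H$. To compare the two accuracies I would reuse the contraction coupling introduced just before Lemma~\ref{lem:AnvsHn}: drawing noise $\varepsilon \sim \mc E^n$ and applying $\cont$ sends a sample of $f_H$ to a sample of $f_A$, scaling every offset by $\rho = \theta_H/\theta_A \le 1$ while preserving measure. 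Letting $c(r)$ denote the change in $\ind{\pi_1 = x_1} - \ind{\pi_1 = x_i}$ induced by replacing $r$ with $\cont(r)$, measure preservation gives $\E{c} = D(\theta_A) - D(\theta_H)$, so the lemma reduces to showing $\E{c} \ge 0$.

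Next I would classify each realization by which candidate is top before and after contraction. The key structural fact, already exploited for Lemma~\ref{lem:AnvsHn}, is that contraction never creates an inversion, only removes them. A short enumeration then shows $c(r) \ge 0$ for every realization \emph{except} the single pattern in which some candidate $j \notin \{1, i\}$ is top before contraction while candidate $i$ is top afterward; there $c(r) = -1$. Tracing the inequalities defining this event, contraction's monotonicity forces $x_j < x_i$, so the only troublesome realizations are those in which contraction promotes $i$ past a \emph{strictly worse} candidate $j$ it had been trailing.

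The remaining, and decisive, step is to show the measure of these negative realizations is outweighed by positive ones, and this is where the well-ordered hypothesis enters. Merely swapping the noise of candidates $1$ and $i$ is measure-preserving under i.i.d.\ noise, so by itself it only pairs a $-1$ realization with one of nonnegative $c$, which does not close the argument pointwise. Instead I would weight by densities: conditioning on the multiset of realized scores, the well-ordered inequality $f(a-c)f(b-d) > f(a-d)f(b-c)$ shows that the assignment in which candidate $1$ (of higher true value $x_1 > x_i$) receives a given top score is strictly more likely than the one in which candidate $i$ receives it. This lets the positive realizations, where $1$ rather than $i$ captures the top slot vacated by $j$, dominate the negative ones in aggregate, yielding $\E{c} \ge 0$. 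Since Lemma~\ref{lem:well-ordered} establishes that Gaussian and Laplacian noise are well-ordered, the lemma then applies to both.

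The main obstacle is precisely this interaction with the third candidate $j$: the threshold it imposes couples the $\theta$-comparison (handled by contraction in noise space) with the pairwise correctness of $\{1, i\}$ (handled by well-orderedness in realized-score space), and the two ingredients live in different coordinates. Reconciling them—transporting each negative realization onto a family of positive realizations and verifying the net density inequality from the well-ordered condition, while keeping track of how $j$'s own score moves under contraction—is the delicate part; the earlier reductions are routine once the contraction coupling and the case classification are in hand.
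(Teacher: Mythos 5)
Your proposal is correct and takes essentially the same approach as the paper: your contraction coupling and classification of the negative realizations coincide with the paper's flow sets $S_{j \to i}$, and your density-weighted reassignment of the scores of candidates $1$ and $i$ (holding the realized multiset fixed) is exactly the paper's $\swap_i$ map, with well-orderedness giving $f_H(\swap_i(r)) \ge f_H(r)$ precisely because $r_i > r_1$ on those realizations. The ``delicate part'' you flag closes just as you envision: since $\swap_i$ fixes every other coordinate, $x_j$ stays on top before contraction, while $\cont(\swap_i(r))_1 \ge \cont(r)_i \ge \max_k \cont(r)_k$ and $\cont(\swap_i(r))_i \le \cont(r)_1$ put $x_1$ on top afterward, so $\swap_i$ injects each $S_{j \to i}$ into $S_{j \to 1}$ and summing measures yields the lemma.
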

\begin{proof}
  For $j \ne i$, let $S_{j \to i} \subseteq \R^n$ be the set of realizations $r$
  such that $\pi_1(r) = x_j$ and $\pi_1(\cont(r)) = x_i$. Note that $S_{j \to i}
  = \emptyset$ for $j < i$ because contraction cannot create inversions. Then,
  we have that
  \begin{equation*}
    \Pr[\tau_1 = x_i] - \Pr[\pi_1 = x_i]
    = \sum_{j > i} \int_{\R^n} f_H(r) \ind{r \in S_{j \to i}} \dr - \sum_{j <
    i} \int_{R^n} f_H(r) \ind{r \in S_{i \to j}} \dr
    \le \sum_{j > i} \int_{\R^n} f_H(r) \ind{r \in S_{j \to i}} \dr
  \end{equation*}
  Define
  \begin{equation*}
    \swap_i(r) = r',
  \end{equation*}
  where
  \begin{equation*}
    r_j' = \begin{cases}
      r_j & j \notin \{1, i\} \\
      r_1 & j = i \\
      r_i & j = 1
    \end{cases}
  \end{equation*}
  Intuitively, the $\swap_i$ operation simply swaps the realizations in
  positions $1$ and $i$. Note that this is a bijection. Also, if $r \in S_{j \to
  i}$, then $\swap_i(r) \in S_{j \to 1}$, since
  \begin{align*}
    \cont(\swap_i(r))_1 &\ge \cont(r)_i \ge \max_j \cont(r)_j \ge \max_{j \notin
    \{1, i\}} \cont(\swap_i(r))_j \\
    \cont(\swap_i(r))_1 &\ge \cont(r)_i \ge  \cont(r)_1 \ge \cont(\swap_i(r))_i
  \end{align*}
  Furthermore for $r \in S_{j \to i}$, $f_H(r) \le f_H(\swap_i(r))$ since
  \begin{align*}
    \frac{f_H(\swap_i(r))}{f_H(r)} &= \frac{f(r_i - x_1) f(r_1 - x_i)}{f(r_1 -
    x_1)f(r_i - x_i)} \ge 1
  \end{align*}
  because the noise is well-ordered and $r \in S_{j \to i}$ implies $r_i > r_1$.
  Thus,
  \begin{align*}
    \sum_{j > i} \int_{\R^n} f_H(r) \ind{r \in S_{j \to i}} \dr
    &\le \sum_{j > i} \int_{\R^n} f_H(\swap_i(r)) \ind{r \in S_{j \to i}} \dr \\
    &\le \sum_{j > i} \int_{\R^n} f_H(\swap_i(r)) \ind{\swap_i(r) \in S_{j \to 1}} \dr \\
    &\le \sum_{j > i} \int_{\R^n} f_H(r) \ind{r \in S_{j \to 1}} \dr \\
    &\le \sum_{j > 1} \int_{\R^n} f_H(r) \ind{r \in S_{j \to 1}} \dr \\
    &= \Pr[\tau_1 = x_1] - \Pr[\pi_1 = x_1]
  \end{align*}
\end{proof}

Finally, we show that~\eqref{eq:3cand_deriv} holds for both Laplacian and
Gaussian noise.

\begin{theorem}
  For any $a \in \mathbb R$ and $X_i = x_i + \sigma \varepsilon_i$ where
  $\varepsilon_i$ is Laplacian with unit variance,
  \begin{equation*}
    \frac{d}{da} \Pr[X_i > X_j \given X_i < a, X_j < a] \ge 0.
  \end{equation*}
  Moreover, it is strictly positive for some $a$.
  \label{thm:laplace_deriv}
\end{theorem}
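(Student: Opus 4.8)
The plan is to reduce the monotonicity of the conditional probability to a one-dimensional sign computation, and then to exploit the piecewise-exponential form of the Laplacian. Write $g_i,g_j$ for the densities and $G_i,G_j$ for the CDFs of $X_i=x_i+\sigma\varepsilon_i$ and $X_j=x_j+\sigma\varepsilon_j$, where $x_i>x_j$. Let $N(a)=\Pr[X_j<X_i<a]$ be the probability that the pair is correctly ordered with both values below $a$, and $M(a)=\Pr[X_i<X_j<a]$ the probability that it is inverted. Since $X_i,X_j$ are independent with full support, $N(a)+M(a)=G_i(a)G_j(a)=\Pr[X_i<a,X_j<a]$, and the quantity in question is exactly $N(a)/(N(a)+M(a))$. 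Hence it is non-decreasing iff $N/M$ is, i.e.\ iff $N'(a)M(a)-N(a)M'(a)\ge 0$, where $N'=g_iG_j=:\phi_i$ and $M'=g_jG_i=:\phi_j$.

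First I would record a general sufficient condition. Writing $\rho=\phi_i/\phi_j$,
\[
N'M-NM'=\phi_j(a)\int_{-\infty}^{a}\phi_j(u)\,\big[\rho(a)-\rho(u)\big]\,du,
\]
so the inequality holds automatically at any $a$ for which $\rho$ is non-decreasing on $(-\infty,a]$ --- a monotone-likelihood-ratio condition. By affine invariance of the statement I would normalize the noise scale to $1$ and set $x_j=0$, $x_i=d>0$, then split $\R$ at the two means $0$ and $d$. A direct calculation gives $\rho\equiv 1$ on $(-\infty,0]$, $\rho(u)=2e^{u}-1$ (increasing) on $(0,d)$, and --- the catch --- $\rho$ \emph{strictly decreasing} on $(d,\infty)$. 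Thus for every $a\le d$, $\rho$ is non-decreasing on $(-\infty,a]$, so the inequality holds on the nose: the target probability is constant ($=1/2$) for $a\le 0$ and strictly increasing for $0<a<d$, which also supplies the required strictness.

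The hard part will be the regime $a>d$, where $\rho$ is decreasing, the bracket $\rho(a)-\rho(u)$ changes sign over $(-\infty,a]$, and the likelihood-ratio shortcut fails. Here I would compute $N(a)$ and $M(a)$ in closed form --- each is an elementary piecewise-exponential integral --- and substitute $p=e^{-d}$, $q=e^{-a}$ with $0<q<p<1$. After simplification $N'M-NM'$ equals a positive multiple of a quadratic $\Phi(q)$ whose leading coefficient is negative, so $\Phi$ is concave in $q$ and attains its minimum over $q\in[0,p]$ at an endpoint. Using the relation $d=\ln(1/p)$, the two endpoint values reduce to the standard logarithmic inequalities $\ln(1/p)\ge 1-p$ (at $q=p$, the boundary $a=d$) and $\ln(1/p)\ge 2(1-p)/(1+p)$ (as $q\to 0$, i.e.\ $a\to\infty$), both of which hold for $p\in(0,1)$. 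Concavity then forces $\Phi\ge 0$ on all of $[0,p]$, establishing $N'M-NM'\ge 0$ for $a>d$ and completing the proof. The entire difficulty is thus localized to a single two-variable inequality for $a>d$, and the idea that tames it is that the sign reduces to a function concave in $q=e^{-a}$, so only the two boundary cases --- each a one-line log inequality --- need checking.
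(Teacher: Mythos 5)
Your proposal is correct, and it takes a genuinely different route from the paper's proof. The paper splits $\mathbb{R}$ at the two means just as you do, but in each region it writes the conditional probability as an explicit closed-form ratio, differentiates that ratio directly, and grinds through the algebra: its middle case reduces to $e^{-z} > 1-2z$, and its case $a > x_i$ requires a very long expansion of the derivative inequality before collapsing to inequalities of the type $e^z > 1+z$. You instead (i) reduce monotonicity to the Wronskian-type inequality $N'M - NM' \ge 0$, (ii) dispose of the entire region $a \le x_i$ at once via the monotone-likelihood-ratio criterion --- your computation of $\rho$ checks out: it is $\equiv 1$ below $x_j$, equals $2e^{u}-1$ (in your normalization) between the means, and is strictly decreasing above $x_i$ --- and (iii) handle the hard region by the concavity trick. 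I verified (iii) explicitly: writing $L=\ln(1/p)=d$, the cubic terms in $(2-q)M - (2p-q)N$ indeed cancel, leaving
\begin{equation*}
  \Psi(q) \;=\; \Big[p^2 - p + \tfrac{(1+p)pL}{2}\Big] \;+\; q\Big[1 - p - \tfrac{pL}{2}\Big] \;+\; \tfrac{q^2}{2}\Big(1 - \tfrac{1}{p}\Big),
\end{equation*}
with $N'M - NM' = \tfrac{q}{4p}\Psi(q)$; the leading coefficient is negative, so $\Psi$ is concave, and its endpoint values are $\Psi(p) = \tfrac{p}{2}\big[L - (1-p)\big]$ and $\Psi(0) = p\big[\tfrac{(1+p)L}{2} - (1-p)\big]$, which are nonnegative precisely by the two logarithmic inequalities you cite ($\ln(1/p)\ge 1-p$ and $\ln(1/p)\ge 2(1-p)/(1+p)$, both standard). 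Your MLR step also delivers the theorem's strictness requirement on $(x_j,x_i)$, since there $\rho(a) > \rho(u)$ for all $u<a$. What your approach buys: a conceptual explanation of why the claim is immediate below $x_i$ and delicate only above it, and a replacement of the paper's longest computation by a concavity argument requiring exactly two one-line endpoint checks. What the paper's approach buys is only self-containedness of the arithmetic; structurally, yours is cleaner and equally rigorous once the closed forms for $N$ and $M$ are written out.
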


\begin{proof}
First, we must derive an expression for $\Pr[X_i > X_j \given X_i < a, X_j <
a]$. Recall that the Laplace distribution parameterized by $\mu$ and $\lambda$
has pdf
\begin{equation*}
  f(x; \mu, \lambda) = \frac{\lambda}{2} \exp(-\lambda|x-\mu|)
\end{equation*}
and cdf
\begin{equation*}
  F(x; \mu, \lambda) = \begin{cases}
    \frac{1}{2} \exp\p{-\lambda(\mu-x)} & x < \mu \\
    1 - \frac{1}{2} \exp\p{-\lambda(x-\mu)} & x \ge \mu
  \end{cases}
\end{equation*}
Note that $x_i$ and $x_j$ be the respective means of $X_i$ and $X_j$, with $x_i
> x_j$. Because the Laplace distribution is piecewise defined, we must consider
3 cases and show that in all 3 cases, \eqref{eq:3cand_deriv} holds. Note that
\begin{equation}
  \label{eq:lap_integral}
  \Pr[X_i > X_j \given X_i < a, X_j < a] = \frac{\int_{-\infty}^a
    f(x; x_i, \lambda) F(x; x_j, \lambda) \dx}{F(a; x_i, \lambda) F(a;
    x_j, \lambda)}
\end{equation}

\textbf{Case 1:} $a \le x_j$. \\
Then, the numerator of~\eqref{eq:lap_integral} is
\begin{align*}
  \int_{-\infty}^a \frac{\lambda}{2} \exp\p{-\lambda(x_i-x)} \cdot
  \frac{1}{2} \exp(-\lambda(x_j - x)) \dx
  &= \frac{\lambda}{4} \int_{-\infty}^a \exp(-\lambda(x_i + x_j - 2x)) \dx
  \\
  &= \frac{\lambda\exp(-\lambda(x_i + x_j))}{4} \int_{-\infty}^a
  \exp(2\lambda x) \dx \\
  &= \frac{\lambda\exp(-\lambda(x_i + x_j))}{4} \frac{1}{2\lambda}
  \exp(2\lambda a) \\
  &= \frac{\exp(-\lambda(x_i + x_j - 2a))}{8}
\end{align*}
The denominator is
\begin{align*}
  \frac{1}{2} \exp(-\lambda(x_i - a)) \cdot \frac{1}{2} \exp(-\lambda(x_j -
  a))
  &= \frac{1}{4} \exp(-\lambda(x_i + x_j - 2a)).
\end{align*}
Thus,
\begin{equation*}
  \Pr[X_i > X_j \given X_i < a, X_j < a] = \frac{1}{2},
\end{equation*}
so its derivative is trivially nonnegative.

\textbf{Case 2:} $x_j < a \le x_i$. \\
Then, the numerator of~\eqref{eq:lap_integral} is
\begin{align*}
  \int_{-\infty}^{x_j}& \frac{\lambda}{2} \exp\p{-\lambda(x_i-x)} \cdot
  \frac{1}{2} \exp(-\lambda(x_j - x)) \dx + 
  \int_{x_j}^{a} \frac{\lambda}{2} \exp\p{-\lambda(x_i-x)}
  \p{1 - \frac{1}{2} \exp(-\lambda(x-x_j))} \dx \\
  &= \frac{\exp(-\lambda(x_i - x_j))}{8} + \frac{\lambda}{2} \int_{x_j}^a
  \exp(-\lambda(x_i-x)) \dx - \frac{\lambda}{4}\int_{x_j}^a
  \exp(-\lambda(x_i - x_j)) \dx \\
  &= \frac{\exp(-\lambda(x_i - x_j))}{8} + \frac{\lambda}{2}
  \frac{1}{\lambda} (\exp(-\lambda(x_i-a)) - \exp(-\lambda(x_i - x_j))) -
  \frac{\lambda}{4} (a-x_j) \exp(-\lambda(x_i - x_j)) \\
  &= \frac{1}{2} \exp(-\lambda(x_i-a)) - \p{\frac{3}{8} + \frac{\lambda}{4}
  (a-x_j)} \exp(-\lambda(x_i - x_j))
\end{align*}
The denominator is
\begin{align*}
  \p{1 - \frac{1}{2} \exp(-\lambda(a-x_j))} \cdot \frac{1}{2}
  \exp(\lambda(x_j - a))
  = \frac{1}{2} \exp(-\lambda(x_i - a)) - \frac{1}{4} \exp(-\lambda(x_i -
  x_j))
\end{align*}
We can factor out $\frac{1}{4}\exp(-\lambda(x_i-x_j))$ from both, so
\begin{align*}
  \Pr[X_i > X_j \given X_i < a, X_j < a]
  &= \frac{2\exp(\lambda(a-x_j)) - \p{\frac{3}{2} + \lambda
  (a-x_j)}}{2\exp(\lambda(a-x_j)) - 1} \\
  &= \frac{2\exp(\lambda(a-x_j)) - 1 - \p{\frac{1}{2} + \lambda
  (a-x_j)}}{2\exp(\lambda(a-x_j)) - 1} \\
  &= 1 - \frac{\frac{1}{2} + \lambda (a-x_j)}{2\exp(\lambda(a-x_j)) - 1}
\end{align*}
Thus,
\begin{align*}
  &\frac{d}{da} \Pr[X_i > X_j \given X_i < a, X_j < a] > 0 \\
  \Longleftrightarrow &
  \frac{d}{da} \frac{\frac{1}{2} + \lambda (a-x_j)}{2\exp(\lambda(a-x_j)) -
  1} < 0 \\
  \Longleftrightarrow &
  (2\exp(\lambda(a-x_j))-1)\lambda < \p{\frac{1}{2} + \lambda(a-x_j)}
  2\lambda\exp(\lambda(a-x_j)) \\
  \Longleftrightarrow &
  2 - \exp(-\lambda(a-x_j)) < 2\p{\frac{1}{2} + \lambda(a-x_j)} \\
  \Longleftrightarrow &
  1 - \exp(-\lambda(a-x_j)) < 2\lambda(a-x_j) \\
  \Longleftrightarrow &
  \exp(-\lambda(a-x_j)) > 1 - 2\lambda(a-x_j)
\end{align*}
This is true because $\lambda(a-x_j) > 0$, and for $z > 0$,
\begin{equation*}
  \exp(-z) > 1-z > 1-2z.
\end{equation*}

\textbf{Case 3:} $a > x_i$. \\
Then, the numerator of~\eqref{eq:lap_integral} is
\begin{align*}
  \int_{-\infty}^{x_j}& \frac{\lambda}{2} \exp\p{-\lambda(x_i-x)} \cdot
  \frac{1}{2} \exp(-\lambda(x_j - x)) \dx + 
  \int_{x_j}^{x_i} \frac{\lambda}{2} \exp\p{-\lambda(x_i-x)}
  \p{1 - \frac{1}{2} \exp(-\lambda(x-x_j))} \dx \\
  &+ \int_{x_i}^{a} \frac{\lambda}{2} \exp(-\lambda(x-x_i)) \p{1 -
  \frac{1}{2} \exp(-\lambda(x-x_j))} \dx \\
  &= \frac{1}{2} - \p{\frac{3}{8} + \frac{\lambda}{4} (x_i-x_j)}
  \exp(-\lambda(x_i - x_j)) + \frac{1}{2} (1 - \exp(-\lambda(a-x_i))) -
  \frac{\lambda}{4} \int_{x_i}^{a} \exp(-\lambda(2x-x_i-x_j)) \dx \\
  &= 1 - \p{\frac{3}{8} + \frac{\lambda}{4} (x_i-x_j)}
  \exp(-\lambda(x_i - x_j)) - \frac{1}{2} \exp(-\lambda(a-x_i)) \\
  &+ \frac{1}{8} \exp(\lambda(x_i + x_j)) (\exp(-2\lambda a) - \exp(-2\lambda
  x_i)) \\
  &= 1 - \p{\frac{1}{2} + \frac{\lambda}{4} (x_i-x_j)} \exp(-\lambda(x_i -
  x_j)) - \frac{1}{2} \exp(-\lambda(a-x_i)) + \frac{1}{8} \exp(-\lambda
  (2a-x_i-x_j))
\end{align*}
The denominator is
\begin{align*}
  &\p{1 - \frac{1}{2} \exp(-\lambda(t-x_i))} \p{1 - \frac{1}{2}
  \exp(-\lambda(t-x_j))} \\
  &= 1 - \frac{1}{2} \exp(-\lambda(a-x_i)) -
  \frac{1}{2} \exp(-\lambda(a-x_j)) + \frac{1}{4}
  \exp(-\lambda(2a-x_i-x_j))
\end{align*}
Thus,
\begin{align*}
  \Pr[&X_i > X_j \given X_i < a, X_j < a] \\
  &= \frac{1 - \p{\frac{1}{2} + \frac{\lambda}{4} (x_i-x_j)}
    \exp(-\lambda(x_i - x_j)) - \frac{1}{2} \exp(-\lambda(a-x_i)) +
    \frac{1}{8} \exp(-\lambda (2a-x_i-x_j))}{1 - \frac{1}{2}
    \exp(-\lambda(a-x_i)) - \frac{1}{2} \exp(-\lambda(a-x_j)) + \frac{1}{4}
  \exp(-\lambda(2a-x_i-x_j))} \\
  &\propto \frac{8 - \p{4 + 2\lambda (x_i-x_j)}
    \exp(-\lambda(x_i - x_j)) - 4 \exp(-\lambda(a-x_i)) + \exp(-\lambda
    (2a-x_i-x_j))}{4 - 2 \exp(-\lambda(a-x_i)) - 2 \exp(-\lambda(a-x_j))
    + \exp(-\lambda(2a-x_i-x_j))}
\end{align*}
We're interested in
\begin{align*}
  &\frac{d}{da} \Pr[X_i > X_j \given X_i < a, X_j < a] > 0 \\
  \Longleftrightarrow &
  \p{4 - 2 \exp(-\lambda(a-x_i)) - 2
  \exp(-\lambda(a-x_j)) + \exp(-\lambda(2a - x_i - x_j} \\
  &\cdot
  \p{4\lambda \exp(-\lambda(a-x_i)) - 2\lambda
  \exp(-\lambda(2a - x_i - x_j))} \\
  &>
  \p{8 - 4 \exp(-\lambda(a-x_i)) + \exp(-\lambda(2a-x_i-x_j)) - \p{4 +
  2\lambda (x_i-x_j)} \exp(-\lambda(x_i-x_j))} \\
  &\cdot
  \p{2\lambda \exp(-\lambda(a-x_i)) + 2\lambda
  \exp(-\lambda(a-x_j)) - 2\lambda \exp(-\lambda(2a-x_i-x_j))} \\
  \Longleftrightarrow &
  16 \exp(-\lambda(a-x_i)) - 8
  \exp(-\lambda(2a-x_i-x_j)) - 8 \exp(-2\lambda(a-x_i)) +
  4 \exp(-\lambda(3a-2x_i-x_j)) \\
  &-
  8\exp(-\lambda(2a-x_i-x_j)) + 4\exp(-\lambda(3a-x_i-2x_j)) +
  4\exp(-\lambda(3a-2x_i-x_j)) \\
  &- 2\exp(-2\lambda(2a-x_i-x_j)) \\
  &>
  16\exp(-\lambda(a-x_i)) + 16\exp(-\lambda(a-x_j)) -
  16\exp(-\lambda(2a-x_i-x_j)) \\
  &- 8\exp(-2\lambda(a-x_i)) - 8\exp(-\lambda(2a-x_i-x_j)) +
  8\exp(-\lambda(3a-2x_i-x_j)) \\
  &+ 2\exp(-\lambda(3a-2x_i-x_j)) + 2\exp(-\lambda(3a-x_i-2x_j)) -
  2\exp(-2\lambda(2a-x_i-x_j)) \\
  &-2(4+2\lambda(x_i-x_j))\exp(-\lambda(a-x_j))
  -2(4+2\lambda(x_i-x_j))\exp(-\lambda(a+x_i-2x_j)) \\
  &+ 2(4+2\lambda(x_i-x_j))\exp(-2\lambda(a-x_j)) \\
  \Longleftrightarrow &
  \exp(-\lambda(3a-x_i-2x_j)) \\
  &>
  8\exp(-\lambda(a-x_j))
  - 4\exp(-\lambda(2a-x_i-x_j))
  + \exp(-\lambda(3a-2x_i-x_j)) \\
  &-(4+2\lambda(x_i-x_j))\exp(-\lambda(a-x_j))
  -(4+2\lambda(x_i-x_j))\exp(-\lambda(a+x_i-2x_j)) \\
  &+ (4+2\lambda(x_i-x_j))\exp(-2\lambda(a-x_j)) \\
  \Longleftrightarrow &
  \exp(-\lambda(2a-x_i-x_j)) \\
  &>
  8
  - 4\exp(-\lambda(a-x_i))
  + \exp(-\lambda(2a-2x_i)) \\
  &-(4+2\lambda(x_i-x_j))
  -(4+2\lambda(x_i-x_j))\exp(-\lambda(x_i-x_j))
  + (4+2\lambda(x_i-x_j))\exp(-\lambda(a-x_j)) \\
  \Longleftrightarrow &
  \exp(-\lambda(2a-x_i-x_j))
  -8
  + 4\exp(-\lambda(a-x_i))
  - \exp(-2\lambda(a-x_i)) \\
  &+(4+2\lambda(x_i-x_j))(1+\exp(-\lambda(x_i-x_j)))
  - (4+2\lambda(x_i-x_j))\exp(-\lambda(a-x_j)) \\
  &> 0 \numberthis \label{eq:separable}
\end{align*}
Note that for any $z \ge 0$, we have
\begin{align*}
  (4+2z)(1+e^{-z}) - 8 \ge 0
  &\Longleftrightarrow
  (2+z)(1+e^{-z}) \ge 4 \\
  &\Longleftrightarrow
  z + 2e^{-z} + ze^{-z} \ge 2
\end{align*}
For $z = 0$, this holds with equality, and the left hand side is increasing
since
\begin{align*}
  \frac{d}{dx} z + 2e^{-z} + ze^{-z} \ge 0
  &\Longleftrightarrow 1 - 2e^{-z} + e^{-z} - ze^{-z} \ge 0 \\
  &\Longleftrightarrow 1 \ge e^{-z} + ze^{-z} \\
  &\Longleftrightarrow \frac{1}{1+z} \ge e^{-z} \\
  &\Longleftrightarrow 1+z \le e^{z}
\end{align*}
Therefore, choosing $z = \lambda(x_i-x_j)$ and plugging back
to~\eqref{eq:separable},
we have
\begin{align*}
  &\exp(-\lambda(2a-x_i-x_j))
  -8
  + 4\exp(-\lambda(a-x_i))
  - \exp(-2\lambda(a-x_i)) \\
  &+(4+2\lambda(x_i-x_j))(1+\exp(-\lambda(x_i-x_j)))
  - (4+2\lambda(x_i-x_j))\exp(-\lambda(a-x_j))
  > 0 \\
  \Longleftarrow&
  \exp(-\lambda(2a-x_i-x_j))
  + 4\exp(-\lambda(a-x_i))
  - \exp(-2\lambda(a-x_i))
  - (4+2\lambda(x_i-x_j))\exp(-\lambda(a-x_j))
  > 0 \\
  \Longleftrightarrow&
  \exp(-\lambda(a-x_j))
  + 4
  - \exp(-\lambda(a-x_i))
  - (4+2\lambda(x_i-x_j))\exp(-\lambda(x_i-x_j))
  > 0 \\
  \Longleftrightarrow&
  4(1-\exp(-\lambda(x_i-x_j))) + \exp(-\lambda(a-x_i))
  (\exp(-\lambda(x_i-x_j))-1)
  - 2\lambda(x_i-x_j)\exp(-\lambda(x_i-x_j))
  > 0 \\
  \Longleftrightarrow&
  (4-\exp(-\lambda(a-x_i))) (1-\exp(-\lambda(x_i-x_j)))
  - 2\lambda(x_i-x_j)\exp(-\lambda(x_i-x_j))
  > 0 \\
  \Longleftarrow&
  3 (1-\exp(-\lambda(x_i-x_j)))
  - 2\lambda(x_i-x_j)\exp(-\lambda(x_i-x_j))
  > 0 \tag{$\exp(-\lambda(a-x_i)) < 1$}
\end{align*}
Again letting $z = \lambda(x_i-x_j)$, this is true if and only if
\begin{align*}
  3(1-e^{-z}) > 2ze^{-z}
  &\Longleftrightarrow 3(e^z-1) > 2z \\
  &\Longleftrightarrow 3e^z > 3+2z
\end{align*}
which is true because $e^z > 1+z$ for $z > 0$. This completes the proof for Case
3.

As a result, we have that 
\begin{equation*}
  \frac{d}{da} \Pr[X_i > X_j \given X_i < a, X_j < a] \ge 0
\end{equation*}
for all $a$, with strict inequality for some $a$, which proves the theorem.
\end{proof}

\begin{theorem}
  For any $a \in \mathbb R$ and $X_i = x_i + \sigma \varepsilon_i$ where
  $\varepsilon_i \sim \mc N(0, 1)$,
  \begin{equation*}
    \frac{d}{da} \Pr[X_i > X_j \given X_i < a, X_j < a] > 0.
  \end{equation*}
  \label{thm:gaussian_deriv}
\end{theorem}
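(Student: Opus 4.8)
The plan is to sidestep the piecewise case analysis used for the Laplacian and instead exploit that the Gaussian is log-concave, turning the monotonicity into a short stochastic-ordering argument. First I would change variables to the sum and difference $s = X_i + X_j$ and $d = X_i - X_j$. Since $X_i$ and $X_j$ are independent $\mathcal N(\cdot,\sigma^2)$ variables with equal variances, the pair $(s,d)$ consists of two \emph{independent} Gaussians, with $d \sim \mathcal N(\nu, 2\sigma^2)$ for $\nu = x_i - x_j > 0$. The event $X_i > X_j$ is exactly $d > 0$, and the conditioning event $\{X_i < a,\ X_j < a\}$ becomes $\{s < 2a - |d|\}$. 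Writing $g_d$ for the density of $d$, $\mu(t) = g_d(t) + g_d(-t)$ for the density of $|d|$ on $(0,\infty)$, $G_s$ and $g_s$ for the CDF and density of $s$, and integrating out $s$, I get
\[
  \Pr[X_i > X_j \given X_i < a,\, X_j < a]
  = \frac{\int_0^\infty g_d(t)\, G_s(2a - t)\,\dx}{\int_0^\infty \mu(t)\, G_s(2a - t)\,\dx}
  = \mathbb E_{T \sim p_a}\!\left[h(T)\right],
\]
where $h(t) = g_d(t)/\mu(t)$ and $p_a$ is the probability density on $(0,\infty)$ proportional to $w_a(t) := \mu(t)\, G_s(2a - t)$. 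Intuitively, $s$ records how the conditioning box reweights the gaps and $d$ records which orderings are correct.

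Next I would isolate the two monotonicity ingredients. Because $d$ is Gaussian with positive mean $\nu$, a one-line computation gives $g_d(-t)/g_d(t) = e^{-\nu t/\sigma^2}$, so $h(t) = (1 + e^{-\nu t/\sigma^2})^{-1}$ is \emph{strictly increasing} in $t$ on $(0,\infty)$ — the larger the gap $|X_i - X_j|$, the more likely the ordering is correct. It therefore suffices to show that reweighting by $w_a$ pushes mass toward larger $t$ as $a$ grows.

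To get the pointwise strict derivative that the theorem demands, I would differentiate the ratio directly via the score/covariance identity: with $\ell_a(t) := \partial_a \log w_a(t) = 2\, g_s(2a - t)/G_s(2a - t)$, one has $\frac{d}{da}\,\mathbb E_{p_a}[h] = \operatorname{Cov}_{p_a}\!\big(h,\ \ell_a\big)$. The function $g_s/G_s$ is the reverse hazard rate of $s$, which is \emph{decreasing} because the Gaussian density $g_s$ is log-concave (hence so is $G_s$); since its argument $2a - t$ is decreasing in $t$, the score $\ell_a(t)$ is \emph{increasing} in $t$. Both $h$ and $\ell_a$ are thus strictly increasing in $t$, so by Chebyshev's integral inequality their covariance under the full-support measure $p_a$ is strictly positive, giving $\frac{d}{da}\Pr[X_i > X_j \given X_i < a,\, X_j < a] > 0$ for every $a$.

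The main obstacle is packaging the log-concavity facts cleanly and keeping the two coordinates separate: $d$ (positive mean) drives the monotonicity of $h$, while $s$ (log-concave CDF) drives the monotonicity of the score $\ell_a$. Verifying the reverse-hazard step and justifying differentiation under the integral are where the work concentrates, but both are standard for the Gaussian; everything else is the substitution and the Chebyshev step. A pleasant byproduct worth noting is that this argument uses only log-concavity of the noise, so it would also recover the Laplacian case — though I would still keep the Gaussian statement self-contained here.
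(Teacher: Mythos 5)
Your proposal is correct, but it takes a genuinely different route from the paper's proof. The paper works directly with the $\erf$ representation of the conditional probability: after normalizing $\sigma = 1/\sqrt{2}$ and substituting $t = a - x_i$, $\delta = x_i - x_j$, it shows the relevant expression tends to $0$ as $t \to -\infty$ and has positive derivative, and it closes the argument by relating $g(t) = (1+\erf(t))/\exp(-t^2)$ to the Mills ratio $R$ and invoking Sampford's bound $\frac{d}{dt}\frac{1}{R(t)} < 1$ together with the Mean Value Theorem. Your argument instead exploits a structural feature of the Gaussian: with equal variances, the sum $s = X_i + X_j$ and difference $d = X_i - X_j$ are \emph{independent}, which lets you write $\Pr[X_i > X_j \given X_i < a, X_j < a] = \mathbb{E}_{p_a}[h]$ with $h(t) = (1+e^{-\nu t/\sigma^2})^{-1}$ strictly increasing and $p_a \propto \mu(t)\,G_s(2a-t)$; the score/covariance identity, the decreasing reverse hazard rate of the log-concave $g_s$, and Chebyshev's (FKG-type) correlation inequality then give strict positivity. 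This is shorter and more conceptual, replacing the paper's heavy algebraic manipulation and its reliance on Sampford's inequality with standard monotonicity facts; note, though, that strictness of your covariance step needs $\ell_a$ to be non-constant, which follows from \emph{strict} log-concavity of the Gaussian CDF --- itself equivalent to the Mills-ratio bound $R(y) < 1/y$, so the two proofs secretly rest on related ingredients. One caveat: your closing remark that the argument ``would also recover the Laplacian case'' is incorrect. Independence of $X_i + X_j$ and $X_i - X_j$ for independent $X_i, X_j$ characterizes the Gaussian (Bernstein's theorem), so for Laplacian noise the very first factorization $\Pr[d>0,\ s < 2a - |d|] = \int_0^\infty g_d(t)G_s(2a-t)\,dt$ fails; this is presumably why the paper proves the Laplacian case (Theorem~\ref{thm:laplace_deriv}) by an entirely separate piecewise computation. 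This does not affect the correctness of your Gaussian proof, which stands on its own.
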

\begin{proof}
  
Assume $\sigma = 1/\sqrt{2}$. This is without loss of generality
because for any instance with arbitrary $\sigma'$, there is an instance with
$\sigma = 1/\sqrt{2}$ that yields the same distribution over rankings (simply by
scaling all item values by $\sigma/\sigma'$). First, we have
\begin{align*}
  \Pr[X_i > X_j \given X_i < a, X_j < a]
  &= \frac{\int_{-\infty}^a \Pr[X_i = x] \Pr[X_j < x] \dx}{Pr[X_i < a]
  \Pr[X_j < a]} \\
  &= \frac{\int_{-\infty}^a \exp(-(x-x_i)^2)/\sqrt{\pi} \cdot
  (1+\erf(x-x_j))/2 \dx}{(1+\erf(a-x_i))/2 \cdot (1+\erf(a-x_j))/2} \\
  &= \frac{2}{\sqrt{\pi}} \frac{\int_{-\infty}^a \exp(-(x-x_i)^2)
  (1+\erf(x-x_j)) \dx}{(1+\erf(a-x_i)) \cdot (1+\erf(a-x_j))}
\end{align*}
The derivative with respect to $a$ is positive if and only if
\begin{align*}
  &(1+\erf(a-x_i)) (1+\erf(a-x_j)) \exp(-(a-x_i)^2)
  (1+\erf(a-x_j)) \\
  &> \int_{-\infty}^a \exp(-(x-x_i)^2) (1+\erf(x-x_j)) \dx \\
  &\cdot \frac{2}{\sqrt{\pi}}\p{(1 + \erf(a - x_i)) \exp(-(a-x_j)^2) + (1 +
  \erf(a - x_j)) \exp(-(a-x_i)^2)} \numberthis \label{eq:gaus_cond}
\end{align*}
Let $t = a-x_i$ and $\delta = x_i-x_j$. Then, using the fact that
\begin{align*}
  \int_{-\infty}^a \exp(-(x-x_i)^2) (1+\erf(x-x_j)) \dx
                   &= \int_{-\infty}^{a-x_i} \exp(-x^2) \dx + \int_{-\infty}^{a-x_i}
  \exp(-x^2) \erf(x+\delta) \dx \\
  &= \frac{\sqrt{\pi}}{2} (1 + \erf(a-x_i)) + \int_{-\infty}^{a-x_i}
  \exp(-x^2) \erf(x+\delta) \dx,
\end{align*}
\eqref{eq:gaus_cond} becomes
\begin{align*}
  \frac{\sqrt{\pi}}{2} \cdot \frac{(1+\erf(t))(1+\erf(t+\delta))^2
  \exp(-t^2)}{(1+\erf(t))\exp(-(t+\delta)^2) + (1+\erf(t+\delta))\exp(-t^2)}
  > \frac{\sqrt{\pi}}{2} (1+\erf(t)) + \int_{-\infty}^t \exp(-x^2)
  \erf(x+\delta) \dx \\
  \Longleftrightarrow
  \frac{(1+\erf(t))(1+\erf(t+\delta))^2
  \exp(-t^2)}{(1+\erf(t))\exp(-(t+\delta)^2) + (1+\erf(t+\delta))\exp(-t^2)}
  - (1+\erf(t)) - \frac{2}{\sqrt{\pi}} \int_{-\infty}^t \exp(-x^2)
  \erf(x+\delta) \dx > 0
  \numberthis \label{eq:deriv1}
\end{align*}
To show that this is true, we will use the fact that $f(t) > 0$ whenever the
following conditions are met:
\begin{enumerate}
  \item $f(t)$ is continuous and differentiable everywhere
  \item $\lim_{t \to -\infty} f(t) = 0$
  \item $\frac{d}{dt} f(t) > 0$
\end{enumerate}
We'll show that these conditions hold for the LHS of~\eqref{eq:deriv1}.
\begin{align*}
  &\lim_{t \to -\infty}
   \frac{(1+\erf(t))(1+\erf(t+\delta))^2
  \exp(-t^2)}{(1+\erf(t))\exp(-(t+\delta)^2) + (1+\erf(t+\delta))\exp(-t^2)} -
  (1+\erf(t)) - \frac{2}{\sqrt{\pi}} \int_{-\infty}^t \exp(-x^2)
  \erf(x+\delta) \dx \\
  =&
  \lim_{t \to -\infty}
  \frac{(1+\erf(t))(1+\erf(t+\delta))^2
  \exp(-t^2)}{(1+\erf(t))\exp(-(t+\delta)^2) + (1+\erf(t+\delta))\exp(-t^2)}
  \numberthis \label{eq:deriv1lim}
\end{align*}
Observe that both the numerator and denominator of~\eqref{eq:deriv1lim} are
positive, so this limit must be at least 0. We can upper bound it by
\begin{align*}
  \lim_{t \to -\infty}
  \frac{(1+\erf(t))(1+\erf(t+\delta))^2
  \exp(-t^2)}{(1+\erf(t))\exp(-(t+\delta)^2) + (1+\erf(t+\delta))\exp(-t^2)}
  &\le
  \lim_{t \to -\infty}
  \frac{(1+\erf(t))(1+\erf(t+\delta))^2
  \exp(-t^2)}{(1+\erf(t+\delta))\exp(-t^2)} \\
  &=
  \lim_{t \to -\infty}
  (1+\erf(t))(1+\erf(t+\delta)) \\
  &= 0
\end{align*}
Thus, the limit is 0. Now, we must show that the derivative is positive. The
derivative is
\begin{align*}
  &\frac{d}{dt}
  \b{\frac{(1+\erf(t))(1+\erf(t+\delta))^2
    \exp(-t^2)}{(1+\erf(t))\exp(-(t+\delta)^2) + (1+\erf(t+\delta))\exp(-t^2)} -
    (1+\erf(t)) - \frac{2}{\sqrt{\pi}} \int_{-\infty}^t \exp(-x^2)
  \erf(x+\delta) \dx} \\
  &=
  \frac{d}{dt}
  \b{\frac{(1+\erf(t))(1+\erf(t+\delta))^2
  \exp(-t^2)}{(1+\erf(t))\exp(-(t+\delta)^2) + (1+\erf(t+\delta))\exp(-t^2)}} -
  \frac{2}{\sqrt{\pi}} \exp(-t^2) - \frac{2}{\sqrt{\pi}} \exp(-t^2)
  \erf(t+\delta) \numberthis \label{eq:deriv1deriv}
\end{align*}
Taking this derivative and factoring out
\begin{align*}
  \frac{2(1+\erf(t))(1+\erf(t+\delta))\exp(4t^2)}{\sqrt{\pi}
    \left(\left(\operatorname{erf}{\left(t \right)} + 1\right) e^{t^{2}} +
      \left(\operatorname{erf}{\left(\delta + t \right)} + 1\right)
  e^{\left(\delta + t\right)^{2}}\right)^{2}},
\end{align*}
we get that~\eqref{eq:deriv1deriv} is positive if and only if
\begin{align*}
  &\delta \sqrt{\pi} \exp((t+\delta)^2) (1+\erf(t)) (1+\erf(t+\delta)) -
  \exp(2\delta t + t^2) (1+\erf(t+\delta)) + (1+\erf(t)) > 0 \\
  \Longleftrightarrow &
  \delta \sqrt{\pi} \exp((t+\delta)^2) (1+\erf(t)) +
  \frac{1+\erf(t)}{1+\erf(t+\delta)} - \exp(2\delta t + t^2) > 0 \\
  \Longleftrightarrow &
  \delta \sqrt{\pi} \exp(t^2) (1+\erf(t)) +
  \exp(-2\delta t - t^2) \frac{1+\erf(t)}{1+\erf(t+\delta)} - 1 > 0 \\
  \Longleftrightarrow &
  (1+\erf(t)) \b{\delta \sqrt{\pi} \exp(t^2) + \frac{\exp(-2\delta t -
  \delta^2)}{1+\erf(t+\delta)}} - 1 > 0 \\
  \Longleftrightarrow &
  \frac{1+\erf(t)}{\exp(-t^2)} \b{\delta \sqrt{\pi} +
  \frac{\exp(-(t+\delta)^2)}{1+\erf(t+\delta)}} - 1 > 0
  \numberthis \label{eq:deriv2_cond}
\end{align*}
Define
\begin{equation*}
  g(t) \triangleq \frac{1+\erf(t)}{\exp(-t^2)}.
\end{equation*}
Then,~\eqref{eq:deriv2_cond} is
\begin{align*}
  &g(t) \b{\delta \sqrt{\pi} + \frac{1}{g(t+\delta)}} - 1 > 0 \\
  \Longleftrightarrow &
  \frac{1}{g(t)} - \frac{1}{g(t+\delta)} < \delta \sqrt{\pi}
  %\numberthis \label{eq:deriv2_condg}
\end{align*}
By the Mean Value Theorem,
\begin{equation*}
  \frac{1}{g(t)} - \frac{1}{g(t+\delta)} = -\delta \left. \frac{d}{dt}
    \frac{1}{g(t)} \right|_{t=t^*}
\end{equation*}
for some $t \leq t^* \leq t+\delta$. Thus, it suffices to show that 
\begin{equation}
  \label{eq:deriv_bound}
  \frac{d}{dt} \frac{1}{g(t)} > -\sqrt{\pi}
\end{equation}
for all $t$. To do this, consider Mills Ratio~\cite{mills1926table}
\begin{equation*}
  R(t) \triangleq \exp(t^2/2) \int_t^{\infty} \exp(-x^2/2) \dx.
\end{equation*}
Note that this is quite similar in functional form to $g(t)$, and with some
manipulation, we can relate the two:
\begin{align*}
  R(t) &= \exp(t^2/2) \int_t^{\infty} \exp(-x^2/2) \dx \\
  R(\sqrt{2} t) &= \exp(t^2) \int_{\sqrt{2} t}^{\infty} \exp(-x^2/2) \dx \\
                &= \sqrt{2} \exp(t^2) \int_t^\infty \exp(-x^2) \dx \\
                &= \sqrt{2} \exp(t^2) \int_{-\infty}^{-t} \exp(-x^2) \dx
                \tag{$\exp(-x^2)$ is symmetric} \\
  R(-\sqrt{2} t) &= \sqrt{2} \exp(t^2) \int_{-\infty}^{t} \exp(-x^2) \dx \\
                 &= \sqrt{2} \exp(t^2) \cdot \frac{\sqrt{\pi}}{2} (1+\erf(t)) \\
                 &= \sqrt{\frac{\pi}{2}} \p{\frac{1+\erf(t)}{\exp(-t^2)}} \\
  R(-\sqrt{2} t) &= \sqrt{\frac{\pi}{2}} g(t)
\end{align*}
Sampford~\cite[Eq. (3)]{sampford1953some} proved that $\frac{d}{dt}
\frac{1}{R(t)} < 1$ for any $t$. Thus,
\begin{align*}
  \frac{d}{dt} \frac{1}{g(t)}
  = \frac{d}{dt} \frac{1}{\sqrt{\frac{2}{\pi}} R(-\sqrt{2} t)}
  = \sqrt{\frac{\pi}{2}} \frac{d}{dt} \frac{1}{R(-\sqrt{2} t)} >
  \sqrt{\frac{\pi}{2}} \cdot 1 \cdot -\sqrt{2} = -\sqrt{\pi},
\end{align*}
which proves~\eqref{eq:deriv_bound} and completes the proof.
\end{proof}

\section{Verifying that the Mallows Model Satisfies Definition~\ref{def:family}}
\label{app:mallows_family}
\begin{theorem}
  The family of distributions $\mc F_\theta$ produced by the Mallows Model with
  Kendall tau distance with $\theta = \phi - 1$ satisfies the conditions of
  Definition~\ref{def:family}. 
\end{theorem}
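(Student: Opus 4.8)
The plan is to work throughout with the closed form $\Pr_{\mc F_\theta}[\pi] = \phi^{-d(\pi,\pi^*)}/Z_n$, where $\phi = 1+\theta$ and $Z_n = Z_n(\phi) = \sum_\pi \phi^{-d(\pi,\pi^*)}$ is the inversion generating polynomial $Z_n(\phi) = \prod_{i=1}^n \frac{1-\phi^{-i}}{1-\phi^{-1}}$ (the Mahonian identity for the distribution of Kendall tau distance). Differentiability and asymptotic optimality then fall out immediately: for $\phi>1$ each factor of $Z_n$ is a smooth function bounded in $[1,i)$, so $Z_n$ is smooth and $\ge 1$, whence $\phi^{-d(\pi,\pi^*)}/Z_n$ is differentiable in $\phi$, and hence in $\theta = \phi-1$; and since $\Pr[\pi^*]=1/Z_n(\phi)$ while every factor of $Z_n(\phi)\to 1$ as $\phi\to\infty$, we get $\Pr[\pi^*]\to 1$.

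For monotonicity, write $R=\mathbf x\setminus S$ and note $\E{\pi_1^{(-S)}} = \sum_{i\in R} x_i\,\Pr[i\text{ is the earliest element of }R\text{ in }\pi]$, so it suffices to show that the index of the earliest-ranked element of $R$ is stochastically smaller (i.e.\ a better item) as $\theta$ grows. For the case $S=\emptyset$, which must also carry the strict inequality, I would compute the law of $\pi_1$ exactly: summing over permutations with item $i$ on top factorizes as $\sum_{\pi:\pi_1=i}\phi^{-d(\pi,\pi^*)} = \phi^{-(i-1)}Z_{n-1}$, giving the truncated geometric law $\Pr[\pi_1 = i] = \phi^{-(i-1)}\frac{1-\phi^{-1}}{1-\phi^{-n}}$. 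For $\theta'>\theta$ the likelihood ratio $\Pr_{\theta'}[\pi_1=i]/\Pr_\theta[\pi_1=i]\propto (\phi/\phi')^{i-1}$ is decreasing in $i$, so $\mc F_{\theta'}$ dominates $\mc F_\theta$ toward smaller (better) indices by monotone likelihood ratio, hence in first-order stochastic dominance; since the $x_i$ are strictly decreasing in $i$ and the two laws genuinely differ, $\E{\pi_1}$ is strictly increasing in $\theta$.

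For general $S$ the projection of the Mallows model onto $R$ is \emph{not} itself a Mallows model (one checks this already fails when a middle item is deleted), so rather than recursing I would appeal to stochastic monotonicity in the weak (right) Bruhat order: for $\theta'>\theta$ there is a coupling of $\pi\sim\mc F_{\theta'}$ and $\sigma\sim\mc F_\theta$ under which the set of inverted pairs of $\pi$ is almost surely contained in that of $\sigma$. Given such a coupling, a one-line argument finishes. Let $a$ and $b$ be the earliest elements of $R$ in $\sigma$ and in $\pi$ respectively; if $a$ were the better item (smaller index) then $b$ precedes $a$ in $\pi$, so $(a,b)$ is an inverted pair of $\pi$, hence of $\sigma$, contradicting that $a$ precedes $b$ in $\sigma$. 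Thus the top-of-$R$ item in $\pi$ has index at most that in $\sigma$, hence (values decreasing in index) a larger value pointwise, and taking expectations gives $\E{\pi_1^{(-S)}}$ monotone in $\theta$.

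The main obstacle is constructing this weak-order monotone coupling. Holley's inequality is unavailable, since the weak order on $S_n$ is a non-distributive lattice, and the naive coupling through the independent inversion-table (Lehmer code) coordinates fails to preserve inversion-set inclusion: a small three-element example shows the coupled permutations can become incomparable, and indeed the top-of-$R$ comparison can even flip pointwise for particular $R$. The route I would pursue instead is a heat-bath Glauber dynamics on adjacent transpositions, whose stationary law is $\mc F_\theta$ and which resamples the relative order of two adjacent items to ``better on top'' with probability $\phi/(\phi+1)$, run with common randomness across $\theta'>\theta$; since $\phi'/(\phi'+1)>\phi/(\phi+1)$, the more accurate chain sorts at least as often, and the task reduces to verifying that this grand coupling preserves the appropriate sortedness order and converges. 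Establishing this monotonicity of the coupled dynamics, equivalently invoking the known stochastic monotonicity of the Mallows family in $\phi$, is the crux; everything else is the closed-form bookkeeping and the contradiction argument above.
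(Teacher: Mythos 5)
Your handling of differentiability, asymptotic optimality, and the $S=\emptyset$ case of monotonicity is correct, and the last of these essentially coincides with the paper's own argument (the exact law $\Pr[\pi_1 = x_i] \propto \phi^{-(i-1)}$, then a likelihood-ratio step to compare expectations). However, for general $S$ --- which is the substantive content of the monotonicity condition, since removal happens \emph{after} ranking and the projected model is no longer Mallows --- your proof is not a proof: it reduces the statement to the existence of a coupling of $\mc F_{\theta'}$ and $\mc F_{\theta}$ under which the inversion set of the more accurate permutation is almost surely contained in that of the less accurate one, and you explicitly leave that coupling unconstructed. You yourself document why the obvious routes fail (Holley is unavailable on the non-distributive weak order; the coordinatewise Lehmer-code coupling does not preserve inversion-set inclusion), and the Glauber-dynamics grand coupling you propose is exactly the kind of claim that needs verification: whether common randomness in a positionwise heat-bath update preserves weak-order comparability between two chains whose positions hold \emph{different} items is precisely the hard point, and it is left open. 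Your contradiction argument deriving the conclusion from inversion-set inclusion is sound, but the crux of the theorem is missing.

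The gap is avoidable, because no global coupling is needed: the paper establishes likelihood-ratio monotonicity of the marginal law of $\pi_1^{(-S)}$ directly, by a swap bijection. For $i \notin S$ and $j$ the next larger index not in $S$, exchanging the positions of $x_i$ and $x_j$ maps $\{\pi : \pi_1^{(-S)} = x_j\}$ bijectively onto $\{\pi : \pi_1^{(-S)} = x_i\}$, and since this swap repairs the inverted pair $(i,j)$ (and possibly others) it satisfies $\Pr[\hat\pi] = \phi^{\inv(\pi)-\inv(\hat\pi)}\Pr[\pi]$ with $\inv(\pi)-\inv(\hat\pi) \ge 1$. Hence
\begin{equation*}
  \frac{\Pr_\phi\b{\pi_1^{(-S)} = x_i}}{\Pr_\phi\b{\pi_1^{(-S)} = x_j}}
  = \sum_{\pi :\, \pi_1^{(-S)} = x_j} \frac{\Pr[\pi]}{\Pr\b{\pi_1^{(-S)} = x_j}}\cdot \phi^{\inv(\pi) - \inv(\hat\pi)},
\end{equation*}
a weighted average (weights summing to one) of powers $\phi^{k}$ with $k \ge 1$, which the paper argues is increasing in $\phi$. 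This is the same likelihood-ratio monotonicity you exploited for $S=\emptyset$, now obtained for arbitrary $S$ without any closed form; feeding it into the paper's Lemma~\ref{lem:prob_sum} (the MLR-implies-expectation-ordering device you also invoke) yields $\mathbb{E}_{\mc F_{\theta'}}\b{\pi_1^{(-S)}} \ge \mathbb{E}_{\mc F_{\theta}}\b{\pi_1^{(-S)}}$ for every $S$, with strictness at $S=\emptyset$ as in your exact computation. In short, the lesson you missed is that the monotonicity condition only concerns the one-dimensional law of the top remaining candidate, so a local exchange argument on that marginal suffices, and the global stochastic-monotonicity machinery --- the entire unresolved difficulty in your proposal --- is unnecessary.
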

\begin{proof}
  We must show that $\mc F_\theta$ satisfies the differentiability, asymptotic
  optimality, and monotonicity conditions of Definition~\ref{def:family}.

  \textbf{Differentiability:} Let $\Pi$ be the set of all permutations on $n$
  candidates. The probability of a realizing a particular permutation $\pi$
  under the Mallows model is
  \begin{equation*}
    \Pr_\theta[\pi] = \frac{\phi^{-d(\pi, \pi^*)}}{\sum_{\pi' \in \Pi}
    \phi^{-d(\pi', \pi^*)}}
  \end{equation*}
  Both the numerator and denominator are differentiable with respect to $\theta
  = \phi-1$, so $\Pr_\theta[\pi]$ is differentiable with respect to $\theta$.

  \textbf{Asymptotic optimality:} For the correct ranking $\pi^*$,
  \begin{equation*}
    \Pr_\theta[\pi^*] = \frac{1}{Z},
  \end{equation*}
  where the normalizing constant $Z$ is
  \begin{equation*}
    Z = \sum_{\pi \in \Pi} \phi^{-d(\pi, \pi^*)}
  \end{equation*}
  In the limit,
  \begin{align*}
    \lim_{\theta \to \infty} Z
    &= \lim_{\phi \to \infty} Z \\
    &= \lim_{\phi \to \infty}  \sum_{\pi \in \Pi} \phi^{-d(\pi, \pi^*)} \\
    &= \lim_{\phi \to \infty} 1 + \sum_{\pi \ne \pi^* \in \Pi} \phi^{-d(\pi, \pi^*)} \\
    &= 1 + \sum_{\pi \ne \pi^* \in \Pi} \lim_{\phi \to \infty} \phi^{-d(\pi, \pi^*)} \\
    &= 1
  \end{align*}
  because for any $\pi \ne \pi^*$, $d(\pi, \pi^*) \ge 1$. Therefore,
  \begin{equation*}
    \lim_{\theta \to \infty} \Pr_\theta[\pi^*] = \lim_{\theta \to \infty}
    \frac{1}{Z} = 1
  \end{equation*}

  \textbf{Monotonicity:}
  We must show that for any $S \subset \mathbf{x}$, if $\pi_1^{(-S)}$ denotes
  the value of the top-ranked candidate according to $\pi$ excluding candidates
  in $S$,
  \begin{equation*}
    \mathbb{E}_{\mc F_{\theta'}}\b{\pi_1^{(-S)}} \ge \mathbb{E}_{\mc
    F_{\theta}}\b{\pi_1^{(-S)}}.
  \end{equation*}
  For any $i \notin S$, let $j$ be the smallest index such that $j > i$ and $j
  \notin S$. Consider any $\pi$ such that $\pi_1^{(-S)} = x_j$. Then, swapping
  $i$ and $j$ yields a permutation $\hat \pi$ such that $\hat \pi_1^{(-S)} =
  x_i$. Moreover,
  \begin{align*}
    \Pr[\hat \pi] = \Pr[\pi] \cdot \phi^{\inv(\pi) - \inv(\hat \pi)}.
  \end{align*}
  Since $i < j$, $\inv(\pi) - \inv(\hat \pi) \ge 1$. Finally, note that swapping
  $i$ and $j$ is a bijection between $\{\pi : \pi_1^{(-S)} = x_j\}$ and $\{\pi :
  \pi_1^{(-S)} = x_i\}$. Thus,
  \begin{align*}
    \frac{\Pr[\pi_1^{(-S)} = x_i]}{\Pr[\pi_1^{(-S)} = x_j]}
    &= \sum_{\pi : \pi_1^{(-S)} = x_j} \frac{\Pr[\pi]}{\Pr[\pi_1^{(-S)} = x_j]}
    \cdot \phi^{\inv(\pi) - \inv(\hat \pi)}
  \end{align*}
  Note that the terms $\frac{\Pr[\pi]}{\Pr[\pi_1^{(-S)} = x_j]}$ sum to 1, so
  this is sum is some polynomial in $\phi$ with nonnegative weights and integer
  powers of $\phi$. As a result, it must have a positive derivative with respect
  to $\phi$, i.e., for $i < j$,
  \begin{align*}
    \frac{d}{d\phi} \frac{\Pr[\pi_1^{(-S)} = x_i]}{\Pr[\pi_1^{(-S)} = x_j]} > 0
  \end{align*}
  Let $\phi' > \phi$. Then,
  \begin{align*}
    \frac{\Pr_\phi[\pi_1^{(-S)} = x_i]}{\Pr_\phi[\pi_1^{(-S)} = x_j]}
    &< \frac{\Pr_{\phi'}[\pi_1^{(-S)} = x_i]}{\Pr_{\phi'}[\pi_1^{(-S)} = x_j]}
  \end{align*}
  Rearranging,
  \begin{equation}
    \label{eq:phi_likelihood}
    \frac{\Pr_\phi[\pi_1^{(-S)} = x_i]}{\Pr_{\phi'}[\pi_1^{(-S)} = x_i]}
    < \frac{\Pr_\phi[\pi_1^{(-S)} = x_j]}{\Pr_{\phi'}[\pi_1^{(-S)} = x_j]}
  \end{equation}
  For $\theta' - \phi' - 1$ and $\theta = \phi - 1$,
  \begin{align*}
    \mathbb{E}_{\mc F_\theta} \b{\pi_1^{(-S)}} &= \sum_{i \notin S} \Pr_\phi
    \b{\pi_1^{-(S)} = x_i} x_i \\
    \mathbb{E}_{\mc F_{\theta'}} \b{\pi_1^{(-S)}} &= \sum_{i \notin S}
    \Pr_{\phi'} \b{\pi_1^{-(S)} = x_i} x_i
  \end{align*}
  By Lemma~\ref{lem:prob_sum},
  \begin{align*}
    \mathbb{E}_{\mc F_{\theta'}} \b{\pi_1^{(-S)}} > \mathbb{E}_{\mc F_\theta}
    \b{\pi_1^{(-S)}},
  \end{align*}
  which completes the proof. Note that we apply Lemma~\ref{lem:prob_sum}
  indexing backwards from $n$ to 1, ignoring elements in $S$, with $p_i =
  \Pr_\phi \b{\pi_1^{-(S)} = x_i}$ and $q_i = \Pr_{\phi'} \b{\pi_1^{-(S)} =
  x_i}$. \eqref{eq:phi_likelihood} provides the condition that $p_i/q_i$ is
  decreasing (as $i$ decreases, since we are indexing backwards).
\end{proof}

\section{Proof of Theorem~\ref{thm:mallows}}
\label{app:mallows_proof}
\subsection{Verifying Definition~\ref{def:first}}
We must show that when $\pi, \tau \sim \mc F_\theta$,
\begin{equation}
  \label{eq:p1p2_cond}
  \E{\pi_1 - \pi_2 \given \pi_1 \ne \tau_1} > 0.
\end{equation}
We begin by expanding:
\begin{align*}
  \E{\pi_1 - \pi_2 \given \pi_1 \ne \tau_1}
  &= \sum_{i=1}^n \sum_{j=1}^n (x_i - x_j) \Pr[\pi_1 = x_i \cap \pi_2 = x_j
  \given \pi_1 \ne \tau_1] \\
  &= \sum_{i=1}^{n-1} \sum_{j > i} (x_i - x_j) \p{\Pr[\pi_1 = x_i \cap \pi_2 =
    x_j \given \pi_1 \ne \tau_1] - \Pr[\pi_1 = x_j \cap \pi_2 = x_i \given \pi_1
  \ne \tau_1]}
\end{align*}
Since $x_i > x_j$ for $i < j$, it suffices to show that for all $i < j$,
\begin{equation}
  \label{eq:pr_cond2}
  \Pr[\pi_1 = x_i \cap \pi_2 = x_j \given \pi_1 \ne \tau_1] \ge \Pr[\pi_1 = x_j
  \cap \pi_2 = x_i \given \pi_1 \ne \tau_1],
\end{equation}
and that this holds strictly for some $i < j$. We simplify~\eqref{eq:pr_cond2}
as follows:
\begin{align*}
  &\Pr[\pi_1 = x_i \cap \pi_2 = x_j \given \pi_1 \ne \tau_1]
  > \Pr[\pi_1 = x_j \cap \pi_2 = x_i \given \pi_1 \ne \tau_1] \\
  &\Longleftrightarrow \frac{\Pr[\pi_1 = x_i \cap \pi_2 = x_j \cap \pi_1 \ne
  \tau_1]}{\Pr[\pi_1 \ne \tau_1]}
  > \frac{\Pr[\pi_1 = x_j \cap \pi_2 = x_i \cap \pi_1 \ne \tau_1]}{\Pr[\pi_1 \ne
  \tau_1]} \\
  &\Longleftrightarrow \Pr[\pi_1 = x_i \cap \pi_2 = x_j \cap \pi_1 \ne \tau_1]
  > \Pr[\pi_1 = x_j \cap \pi_2 = x_i \cap \pi_1 \ne \tau_1] \\
  &\Longleftrightarrow \Pr[\pi_1 = x_i \cap \pi_2 = x_j \cap \tau_1 \ne x_i]
  > \Pr[\pi_1 = x_j \cap \pi_2 = x_i \cap \tau_1 \ne x_j] \\
  &\Longleftrightarrow \Pr[\pi_1 = x_i \cap \pi_2 = x_j] \Pr[\tau_1 \ne x_i]
  > \Pr[\pi_1 = x_j \cap \pi_2 = x_i] \Pr[\tau_1 \ne x_j]
  \numberthis \label{eq:mallow_cond}
\end{align*}
We can simplify~\eqref{eq:mallow_cond} using Lemmas~\ref{lem:ABBA}
and~\ref{lem:p1xj}. Let $|i-j|$ denote the difference in rank between $x_i$ and
$x_j$.
\begin{align*}
  &\Pr[\pi_1 = x_i \cap \pi_2 = x_j] \Pr[\tau_1 \ne x_i] - \Pr[\pi_1 = x_j \cap
  \pi_2 = x_i] \Pr[\tau_1 \ne x_j] \\
  &= \Pr[\pi_1 = x_i \cap \pi_2 = x_j] (1-\Pr[\tau_1 = x_i])
  - \phi^{-1} \Pr[\pi_1 = x_i \cap \pi_2 = x_j] (1-\Pr[\tau_1 = x_j]) \\
  &= \Pr[\pi_1 = x_i \cap \pi_2 = x_j] (1-\Pr[\tau_1 = x_i])
  - \phi^{-1} \Pr[\pi_1 = x_i \cap \pi_2 = x_j] (1-\phi^{-|i-j|}\Pr[\tau_1 =
  x_i]) \\
  &= \Pr[\pi_1 = x_i \cap \pi_2 = x_j] (1-\Pr[\tau_1 = x_i]
  - \phi^{-1} -\phi^{-|i-j|-1}\Pr[\tau_1 = x_i]))
\end{align*}
This is positive if and only if
\begin{align*}
  &1-\Pr[\tau_1 = x_i] - \phi^{-1} -\phi^{-|i-j|-1}\Pr[\tau_1 = x_i] > 0 \\
  &\Longleftrightarrow \Pr[\tau_1 = x_i] \p{1 - \phi^{-|i-j|-1}} < 1 - \phi^{-1}
  \\
  &\Longleftrightarrow \Pr[\tau_1 = x_i] < \frac{1 - \phi^{-1}}{1 -
  \phi^{-|i-j|-1}} \\
  &\Longleftrightarrow \frac{1 - \phi^{-1}}{\phi^{i-1}(1-\phi^{-n})} < \frac{1 -
  \phi^{-1}}{1 - \phi^{-|i-j|-1}} \\
  &\Longleftrightarrow {\phi^{i-1}(1-\phi^{-n})} > {1 - \phi^{-|i-j|-1}}
\end{align*}
This is weakly true for any $i<j$ because $\phi^{i-1} \ge 1$ and $|i-j| + 1 \le
n$, and it is strictly true for any $i,j$ other than $1$ and $n$. Thus, $
\E{\pi_1 - \pi_2 \given \pi_1 \ne \tau_1} > 0$.

\subsection{Verifying Definition~\ref{def:weaker}}

Recall that Definition~\ref{def:weaker} is equivalent to $U_{AH}(\theta_A,
\theta_H) < U_{HH}(\theta_A, \theta_H)$ for $\theta_A > \theta_H$. Let $\tau$ be
the algorithmic ranking, and let $\pi$ be a ranking from a human evaluator.
Recall that $U_H(\theta_A, \theta_H) = \E{\pi_1}$. Throughout this proof, we
will drop the $(\theta_A, \theta_H)$ notation and simply write $U_H$, $U_{AH}$,
and $U_{HH}$.
\begin{align*}
  U_{AH}
  &= \sum_{i=1}^n (\Pr[\pi_1 = x_i \cap \tau_1 \ne x_i] + \Pr[\pi_2 = x_i \cap
  \pi_1 = \tau_1]) x_i \\
  &= \sum_{i=1}^n \Pr[\pi_1 = x_i \cap \tau_1 \ne x_i] x_i + \sum_{i=1}^n
  \Pr[\pi_2 = x_i \cap \pi_1 = \tau_1] x_i \\
  &= \sum_{i=1}^n \p{Pr[\pi_1 = x_i] - \Pr[\pi_1 = x_i \cap \tau_1 = x_i]} x_i
  + \sum_{i=1}^n \sum_{j \ne i} \Pr[\pi_1 = x_j \cap \tau_1 = x_j \cap \pi_2 =
  x_i] x_i \\
  &= U_H - \sum_{i=1}^n \Pr[\pi_1 = x_i \cap \tau_1 = x_i] x_i + \sum_{i=1}^n
  \Pr[\pi_1 = x_i \cap \tau_1 = x_i] \E{\pi_2 \given \pi_1 = x_i \cap \tau_1 =
  x_i} \\
  &= U_H + \sum_{i=1}^n \Pr[\pi_1 = x_i] \Pr[\tau_1 = x_i] \p{\E{\pi_2 \given
  \pi_1 = x_i} - x_i}
\end{align*}
Similarly, because two human evaluators are independent,
\begin{align*}
  U_{HH} = U_H + \sum_{i=1}^n \Pr[\pi_1 = x_i]^2 \p{\E{\pi_2 \given \pi_1 =
  x_i} - x_i}.
\end{align*}
Let $V_{-i} = \E{\pi_2 \given \pi_1 = x_i}$. Note that conditioned on $\pi_1
= x_i$, the remaining elements of $\pi_1$ follow a Mallows model
distribution over
$n-1$ candidates. Because the Mallows model is value-independent, increasing any
item value increases the expected value of the top-ranked item (and in fact, the
item ranked at any position). Thus, $V_{-i}$ increases as $i$ increases (since
$x_i$, the value of the unavailable candidate, decreases). Moreover, $x_i$ is
strictly decreasing in $i$, so $V_{-i} - x_i$ is strictly increasing in $i$.
With this, we have
\begin{align*}
  U_{AH} - U_H &= \sum_{i=1}^n \Pr[\pi_1 = x_i] \Pr[\tau_1 = x_i] \p{V_{-i} -
  x_i} \\
  U_{HH} - U_H &= \sum_{i=1}^n \Pr[\pi_1 = x_i]^2 \p{V_{-i} -
  x_i}
\end{align*}
Let $C_A = \Pr[\pi_1 = \tau_1] = \sum_{i=1}^n \Pr[\pi_1 = x_i] \Pr[\tau_1 =
x_i]$, and similarly let $C_H = \sum_{i=1}^n \Pr[\pi_1 = x_i]^2$. $C_A > C_H$
by Lemma~\ref{lem:prob_sum} with $y_i' = \Pr[\pi_1 = x_{n-i+1}]$, $p_i' =
\Pr[\pi_1 = x_{n-i+1}]$ and $q_i' = \Pr[\tau_1 = x_{n-i+1}]$.

Let $p_i = \Pr[\pi_1' = i] \Pr[\pi_1 = i]/C_A$, $q_i = \Pr[\pi_1' = i]^2/C_H$, and
$y_i = V_{-i} - x_i$. Then, we have
\begin{align*}
  \frac{U_{AH} - U_H}{C_A} &= \sum_{i=1}^n p_i y_i \\
  \frac{U_{HH} - U_H}{C_H} &= \sum_{i=1}^n q_i y_i
\end{align*}
With $\phi_A = \theta_A + 1$ and $\phi_H = \theta_H + 1$,
\begin{align*}
  \frac{p_i}{q_i} = 
  \frac{C_H}{C_A} \cdot \frac{\frac{1 -
  \phi_A^{-1}}{\phi_A^{i-1}(1-\phi_A^{-n})}}{\frac{1 - \phi_H^{-1}}{\phi_H^{i-1}(1-\phi_H^{-n})}}
  &\propto \frac{\phi_H^{i-1}}{\phi_A^{i-1}},
\end{align*}
which is decreasing in $i$ since $\phi_H < \phi_A$. By Lemma~\ref{lem:prob_sum},
$\sum_{i=1}^n p_i y_i < \sum_{i=1}^n q_i y_i$. Finally, note that $U_{HH} - U_H
< 0$ by Lemma~\ref{lem:UH_UHH}, so
\begin{align*}
  \sum_{i=1}^n p_i y_i &< \sum_{i=1}^n q_i y_i \\
  \frac{U_{AH} - U_H}{C_A} &< \frac{U_{HH} - U_H}{C_H} \\
  \frac{C_H (U_{AH} - U_H)}{C_A} &< U_{HH} - U_H \\
  U_{AH} - U_H &< U_{HH} - U_H \tag{$C_A > C_H$, and $U_{HH} - U_H < 0$} \\
  U_{AH} &< U_{HH}
\end{align*}

\section{Supplementary Lemmas for the Mallows Model}

\begin{lemma}
  Let $\{y_i\}_{i=1}^n$, $\{p_i\}_{i=1}^n$, and $\{q_i\}_{i=1}^n$ be sequences
  such that
  \begin{itemize}
    \item $y_i$ is strictly increasing.
    %\item $p_i, q_i$ are decreasing.
    \item $\sum_{i=1}^n p_i = \sum_{i=1}^n q_i = 1$.
    \item $\frac{p_i}{q_i}$ is decreasing.
  \end{itemize}
  Then, $\sum_{i=1}^n p_i y_i < \sum_{i=1}^n q_i y_i$.
  \label{lem:prob_sum}
\end{lemma}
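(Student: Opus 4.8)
The plan is to prove the equivalent statement $\sum_{i=1}^n (p_i - q_i) y_i < 0$, exploiting the fact that the hypothesis ``$p_i/q_i$ decreasing'' forces $p$ to place more of its mass on small indices (where $y_i$ is small) and less on large indices (where $y_i$ is large) than $q$ does. Concretely, I would write $d_i = p_i - q_i$ and let $D_m = \sum_{i=1}^m d_i$ denote the partial sums, so that $D_n = \sum_i p_i - \sum_i q_i = 0$. The heart of the argument is a single-crossing property: there is a threshold index $k$ with $p_i \ge q_i$ for $i \le k$ and $p_i \le q_i$ for $i > k$. Granting this, I will conclude via summation by parts.

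First I would establish single crossing. Since $p_i/q_i$ is decreasing and $\sum_i p_i = \sum_i q_i = 1$ (with each $q_i > 0$), the ratio can neither stay strictly above $1$ for all $i$ nor strictly below $1$ for all $i$; being monotone, it crosses the value $1$ at a single threshold $k$. From this I would deduce that every partial sum is nonnegative: for $i \le k$ this is immediate since each summand $d_j = p_j - q_j \ge 0$, while for $i > k$ I would instead write $D_i = D_n - \sum_{j > i} d_j = \sum_{j > i}(q_j - p_j) \ge 0$, using $q_j \ge p_j$ for $j > i > k$. Hence $D_i \ge 0$ for all $i$.

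With $D_i \ge 0$ in hand, summation by parts gives
\[
  \sum_{i=1}^n d_i y_i = D_n y_n - \sum_{i=1}^{n-1} D_i (y_{i+1} - y_i) = -\sum_{i=1}^{n-1} D_i (y_{i+1} - y_i),
\]
using $D_n = 0$. Each factor $y_{i+1} - y_i$ is strictly positive because $y$ is strictly increasing, and each $D_i \ge 0$, so every term is nonnegative and the whole expression is $\le 0$, giving $\sum_i p_i y_i \le \sum_i q_i y_i$. For strictness, note that if $D_i = 0$ for all $1 \le i \le n-1$ then $d_i = D_i - D_{i-1} = 0$ for every $i$, i.e.\ $p \equiv q$; hence whenever $p \not\equiv q$ some $D_i$ is strictly positive, its term $-D_i(y_{i+1}-y_i)$ is strictly negative, and we obtain $\sum_i p_i y_i < \sum_i q_i y_i$.

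The argument is short, so the only real obstacle is bookkeeping: pinning down the single-crossing threshold cleanly from the monotone-ratio hypothesis, and being careful about where strictness enters. In particular the conclusion is strict precisely when $p \not\equiv q$ (equivalently, when the ratio $p_i/q_i$ is non-constant), which is exactly the regime in which the lemma is applied, since there the ratio is a strictly decreasing geometric sequence proportional to $(\phi_H/\phi_A)^{i-1}$. I would therefore read ``decreasing'' in the strict sense (or equivalently adjoin the harmless assumption $p \not\equiv q$) so that the stated strict inequality holds.
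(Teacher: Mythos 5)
Your proof is correct, and it completes the argument by a genuinely different route than the paper, although both proofs share the same opening move: deriving a single-crossing index from the monotone-ratio hypothesis (the paper takes $j$ to be the smallest index with $p_j \le q_j$ and notes that the decreasing ratio forces $p_i \le q_i$ for all $i \ge j$). Where you diverge is in how the crossing is converted into the inequality. The paper argues directly: it splits $\sum_i (p_i - q_i) y_i$ at $j$, bounds $y_i$ above by $y_{j-1}$ on the left block and below by $y_j$ on the right block, and then uses the mass-balance identity $\sum_{i<j}(p_i - q_i) = \sum_{i \ge j}(q_i - p_i)$ to collapse everything to $\sum_{i<j}(p_i - q_i)(y_{j-1} - y_j) < 0$. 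You instead show that all partial sums $D_m = \sum_{i \le m}(p_i - q_i)$ are nonnegative and finish with summation by parts---in effect proving that $q$ first-order stochastically dominates $p$ and invoking the standard consequence for expectations of increasing functions. Your route is slightly more machinery but buys two things: it generalizes immediately (FOSD is exactly the right abstraction, and the argument never needs the mass-balance cancellation trick), and it isolates the strictness issue cleanly. On that last point you are actually more careful than the paper: the lemma as stated is false when $p \equiv q$ (the ratio is then constant, hence weakly decreasing, but the two sums are equal), and the paper's final step silently assumes $j > 1$, which is equivalent to $p \not\equiv q$. Your observation that strictness holds precisely when $p \not\equiv q$, together with the check that the lemma is only ever applied to strictly decreasing geometric ratios, patches this gap explicitly rather than implicitly.
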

\begin{proof}
  First, note that there exists $j$ such that $p_i > q_i$ for $i < j$ and $p_i
  \le q_i$ for $i \ge j$. To see this, let $j$ be the smallest index such that
  $p_j \le q_j$. Such a $j$ must exist because $p_i$ and $q_i$ both sum to 1, so
  it cannot be the case that $p_i > q_i$ for all $i$. This implies $p_i/q_i \le
  1$, and since $p_i/q_i$ is decreasing, $p_i \le q_i$ for $i \ge j$.

  Next, note that
  \begin{align*}
    0
    &= \sum_{i=1}^n (p_i - q_i) \\
    &= \sum_{i=1}^{j-1} (p_i - q_i) + \sum_{i=j}^n (p_i - q_i),
  \end{align*}
  meaning
  \begin{equation*}
    \sum_{i=1}^{j-1} (p_i - q_i) = \sum_{i=j}^n (q_i - p_i).
  \end{equation*}

  Using this choice of $j$, we can write
  \begin{align*}
    \sum_{i=1}^n p_i y_i - \sum_{i=1}^n q_i y_i
    &= \sum_{i=1}^n (p_i - q_i) y_i \\
    &= \sum_{i=1}^{j-1} (p_i - q_i) y_i - \sum_{i=j}^n (q_i - p_i) y_i \\
    &\le \sum_{i=1}^{j-1} (p_i - q_i) y_{j-1} - \sum_{i=j}^n (q_i - p_i) y_j \\
    &= \sum_{i=1}^{j-1} (p_i - q_i) y_{j-1} - \sum_{i=j}^n (q_i - p_i) y_j \\
    &= \sum_{i=1}^{j-1} (p_i - q_i) y_{j-1} - \sum_{i=1}^{j-1} (p_i - q_i) y_j
    \\
    &= \sum_{i=1}^{j-1} (p_i - q_i) (y_{j-1}-y_j) \\
    &< 0
  \end{align*}
\end{proof}

\begin{lemma}
  \label{lem:ABBA}
  For $x_i > x_j$,
  \begin{equation}
    \Pr[\pi_1 = x_i \cap \pi_2 = x_j] = \phi \Pr[\pi_1 = x_j \cap \pi_2 = x_i].
    \label{eq:ABBA}
  \end{equation}
\end{lemma}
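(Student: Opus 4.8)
The plan is to exploit the defining feature of the Mallows Model --- that the log-probability of a permutation is linear in its Kendall tau distance from $\pi^*$ --- by constructing an explicit bijection between the two events and tracking exactly how the distance changes. Concretely, fix $i < j$ (so that $x_i > x_j$), and let $A = \{\pi : \pi_1 = x_i,\ \pi_2 = x_j\}$ and $B = \{\pi : \pi_1 = x_j,\ \pi_2 = x_i\}$. For each $\pi \in A$, define $\hat\pi$ to be the permutation obtained by swapping the contents of the first two positions, so that $\hat\pi_1 = x_j$, $\hat\pi_2 = x_i$, and $\hat\pi_k = \pi_k$ for all $k \ge 3$. This map is manifestly a bijection from $A$ onto $B$, and its inverse is the same swap operation.

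The key step is to show that this swap increases the Kendall tau distance by exactly one, i.e.\ $d(\hat\pi, \pi^*) = d(\pi, \pi^*) + 1$. Since positions $1$ and $2$ are adjacent, swapping them can only alter the inversion status of the single pair $\{x_i, x_j\}$ occupying those positions: for any other candidate $\pi_k$ with $k \ge 3$, both $x_i$ and $x_j$ precede $\pi_k$ in $\pi$ and in $\hat\pi$ alike, so the ordering of every pair involving $\pi_k$ is preserved, and all pairs lying entirely among positions $\ge 3$ are untouched. Because $x_i > x_j$, the pair $\{x_i, x_j\}$ is correctly ordered in $\pi$ (contributing no inversion) and inverted in $\hat\pi$ (contributing one), which gives the claimed increase of exactly one.

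Feeding this into the Mallows formula $\Pr[\pi] = \tfrac{1}{Z}\phi^{-d(\pi,\pi^*)}$ immediately yields, for each individual $\pi \in A$,
\begin{equation*}
  \Pr[\hat\pi] = \frac{1}{Z}\phi^{-d(\hat\pi,\pi^*)} = \frac{1}{Z}\phi^{-d(\pi,\pi^*)-1} = \phi^{-1}\Pr[\pi],
\end{equation*}
so that $\Pr[\pi] = \phi\,\Pr[\hat\pi]$. Summing this identity over all $\pi \in A$ and using that the swap is a bijection onto $B$ then gives
\begin{equation*}
  \Pr[\pi_1 = x_i \cap \pi_2 = x_j] = \sum_{\pi \in A} \Pr[\pi] = \phi \sum_{\pi \in A} \Pr[\hat\pi] = \phi\,\Pr[\pi_1 = x_j \cap \pi_2 = x_i],
\end{equation*}
which is exactly~\eqref{eq:ABBA}.

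The only real content is the distance-bookkeeping in the second paragraph, and I expect that to be the main (though modest) obstacle: one must argue cleanly that swapping two \emph{adjacent} positions changes $d(\cdot,\pi^*)$ by precisely $\pm 1$ and identify the sign from the fact that $x_i > x_j$. Everything else --- the bijection and the algebraic manipulation of the Mallows weights --- is routine, since the normalizing constant $Z$ cancels in every ratio and never needs to be evaluated.
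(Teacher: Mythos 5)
Your proof is correct and is essentially the same argument as the paper's: the paper conditions on the arrangement $\pi_{-ij}$ of the remaining candidates and notes that, for each such arrangement, placing $x_i$ then $x_j$ on top is $\phi$ times more likely than the reverse, which is exactly your swap bijection summed term by term. Your write-up is simply more explicit about the inversion bookkeeping (that the adjacent swap changes the Kendall tau distance by exactly one), a step the paper leaves as an intuitive remark.
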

\begin{proof}
  Let $\pi_{-ij}$ be a permutation of all of the candidates except $x_i$ and
  $x_j$. Then, we have
  \begin{align*}
    \Pr[\pi_1 = x_i \cap \pi_2 = x_j]
    &= \sum_{\pi_{-ij}} \Pr[\pi_1 = x_i \cap \pi_2 = x_j \given \pi_{-ij}]
    \Pr[\pi_{-ij}] \\
    &= \sum_{\pi_{-ij}} \phi \Pr[\pi_1 = x_j \cap \pi_2 = x_i \given \pi_{-ij}]
    \Pr[\pi_{-ij}] \\
    &= \phi \Pr[\pi_1 = x_j \cap \pi_2 = x_i]
  \end{align*}
  Intuitively, given that $x_i$ and $x_j$ are in the top 2 positions, $x_i$
  followed by $x_j$ is $\phi$
  times more likely than $x_j$ followed by $x_i$ regardless of the remainder of
  the permutation, and therefore, $x_i$ followed by $x_j$ is $\phi$ times more
  likely overall.
\end{proof}

\begin{lemma}
  \label{lem:p1xj}
  For $1 \le i \le n$,
  \begin{equation}
    \Pr[\pi_1 = x_i] = \frac{1 - \phi^{-1}}{\phi^{i-1} (1 - \phi^{-n})}.
    \label{eq:p1xi}
  \end{equation}
\end{lemma}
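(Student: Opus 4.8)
The plan is to reduce the statement to a single geometric relation between consecutive first-position probabilities and then fix the constant by normalization. First I would observe that the claimed values already form a probability distribution, since $\sum_{i=1}^n \phi^{-(i-1)} = \frac{1 - \phi^{-n}}{1 - \phi^{-1}}$, so that $\sum_{i=1}^n \frac{1-\phi^{-1}}{\phi^{i-1}(1-\phi^{-n})} = 1$. Consequently it suffices to prove the ratio relation $\Pr[\pi_1 = x_i] = \phi \, \Pr[\pi_1 = x_{i+1}]$ for every $1 \le i \le n-1$: this forces $\Pr[\pi_1 = x_i] = \phi^{-(i-1)} \Pr[\pi_1 = x_1]$, and normalizing then yields $\Pr[\pi_1 = x_1] = \frac{1-\phi^{-1}}{1-\phi^{-n}}$ and hence the stated formula.

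To establish the ratio relation I would build a bijection between the two events, in the same spirit as the proof of Lemma~\ref{lem:ABBA}. Let $\swap$ denote the map exchanging the positions of the value-consecutive candidates $x_i$ and $x_{i+1}$ in a permutation. This map is an involution, and it restricts to a bijection between $\{\pi : \pi_1 = x_i\}$ and $\{\pi : \pi_1 = x_{i+1}\}$, since it moves whichever of $x_i, x_{i+1}$ occupies the first position into the other's slot. The heart of the argument is the combinatorial claim that applying $\swap$ changes the Kendall tau distance by exactly $\pm 1$: because $x_i$ and $x_{i+1}$ are adjacent in value, every third candidate $x_k$ has value either above both or below both, so the number of inversions $x_k$ forms with the pair $\{x_i, x_{i+1}\}$ is unaffected by the swap; one verifies this by cases according to whether $x_k$ sits before, between, or after the two positions being exchanged. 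The only pair whose inversion status changes is $\{x_i, x_{i+1}\}$ itself.

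When $\pi_1 = x_i$, the element $x_{i+1}$ necessarily sits behind $x_i$, so the pair $\{x_i, x_{i+1}\}$ is correctly ordered; after the swap it becomes inverted, and therefore $d(\swap(\pi), \pi^*) = d(\pi, \pi^*) + 1$, so $\Pr[\swap(\pi)] = \phi^{-1} \Pr[\pi]$. Summing this identity over the bijection gives $\Pr[\pi_1 = x_{i+1}] = \phi^{-1} \Pr[\pi_1 = x_i]$, which is exactly the desired ratio relation; combined with the normalization above, this completes the proof. I expect the only genuinely delicate step to be verifying the $\pm 1$ change in Kendall tau distance, and in particular the case where $x_k$ lies strictly between the two swapped positions, where one inversion is created while another is simultaneously destroyed and one must confirm these cancel. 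Every other ingredient --- the involution property and the geometric-series normalization --- is routine.
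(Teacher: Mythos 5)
Your proof is correct, but it follows a different route from the paper's main argument. The paper conditions on the relative order $\pi_{-i}$ of the remaining $n-1$ candidates and observes that inserting $x_i$ at the front creates exactly $i-1$ new inversions (one per higher-valued candidate), so $\Pr[\pi_1 = x_i] \propto \phi^{-(i-1)}$ falls out in a single step for all $i$ simultaneously, followed by the same geometric-series normalization you use. You instead establish only the consecutive ratio $\Pr[\pi_1 = x_i] = \phi \, \Pr[\pi_1 = x_{i+1}]$ via an adjacent-transposition bijection, in the spirit of Lemma~\ref{lem:ABBA}; your key combinatorial step --- that swapping the value-adjacent pair $\{x_i, x_{i+1}\}$ changes the Kendall tau distance by exactly $\pm 1$ because no third candidate's value lies strictly between theirs --- is exactly right, including the delicate middle-position case where the inversion created and the inversion destroyed cancel. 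Interestingly, the paper itself sketches precisely your argument in a footnote as an alternative proof (``the permutation in which $i$ and $i-1$ are swapped is $\phi$ times more likely''), so your route is acknowledged there but not elaborated. What your version buys: it works entirely with ratios of individual permutation probabilities, where the normalizing constant $Z$ cancels identically, and so it is airtight; by contrast, the paper's displayed step $\Pr[\pi_1 = x_i \mid \pi_{-i}] = \phi^{-(i-1)}$ is loose as written, since the genuine conditional probability carries a $\pi_{-i}$-dependent normalization and the identity is really a statement about joint probabilities up to an $i$-independent constant. What the paper's version buys: it dispenses with the case analysis and the adjacency restriction, obtaining all $n$ probabilities from one insertion count.
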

\begin{proof}
  Let $\pi_{-i}$ be a permutation over all items except $i$. Then,

  \begin{align*}
    \Pr[\pi_1 = x_i]
    &= \sum_{\pi_{-i}} \Pr[\pi_1 = x_i \given \pi_{-i}] \Pr[\pi_{-i}] \\
    &= \sum_{\pi_{-i}} \phi^{-(i-1)} \Pr[\pi_{-i}] \\
    &= \phi^{-(i-1)} \sum_{\pi_{-i}} \Pr[\pi_{-i}]
  \end{align*}

  Note that $\Pr[\pi_{-i}]$ doesn't depend on \textit{which} $n-1$ items are
  being ranked, so this term appears for any $i$. Moreover, $\sum_{i=1}^n
  \Pr[\pi_1 = x_i] = 1$. Therefore, we have

  \begin{equation*}
    \Pr[\pi_1 = x_i] \propto \phi^{-(i-1)}.
  \end{equation*}

  Normalizing, we get

  \begin{align*}
    \Pr[\pi_1 = x_i]
    &= \frac{\phi^{-(i-1)}}{\sum_{j=1}^n \phi^{-(j-1)}} \\
    &= \frac{\phi^{-(i-1)}}{\frac{1-\phi^{-n}}{1 - \phi^{-1}}} \\
    &= \frac{1 - \phi^{-1}}{\phi^{i-1}(1-\phi^{-n})}
  \end{align*}
  Intuitively, any permutation over $n-1$ items is equally likely regardless of
  what those items are, and inserting any item at the front of this permutation
  yields a likelihood proportional to the number of additional inversions this
  causes, which is equal to the item's position on the
  list.\footnote{Alternatively, we could prove this by showing that for any
    permutation with $i$ in front, the permutation in which $i$ and
    $i-1$ are swapped is $\phi$ times more likely, and thus, $i-1$ is
  $\phi$ times more likely to be in front than $i$.}
\end{proof}

\begin{lemma}
  \label{lem:UH_UHH}
  For the Mallows Model, $U_H(\theta_A, \theta_H) > U_{HH}(\theta_A, \theta_H)$.
\end{lemma}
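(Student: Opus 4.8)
The plan is to reuse the algebraic decomposition already derived in the verification of Definition~\ref{def:weaker}, namely
\[
  U_{HH} - U_H = \sum_{i=1}^n \Pr[\pi_1 = x_i]^2 \p{V_{-i} - x_i},
\]
where $V_{-i} = \E{\pi_2 \given \pi_1 = x_i}$; this identity is pure algebra and does not itself invoke the present lemma, so there is no circularity. Writing $a_i = \Pr[\pi_1 = x_i]$ and $b_i = x_i - V_{-i}$, it therefore suffices to show $\sum_{i=1}^n a_i^2 b_i > 0$. I would first record the two monotonicity facts needed: $a_i$ is strictly decreasing in $i$ by Lemma~\ref{lem:p1xj}, and $b_i$ is strictly decreasing in $i$ because $V_{-i} - x_i$ was already shown to be strictly increasing in $i$ (the unavailable candidate's value $x_i$ decreases while $V_{-i}$ increases).

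The main idea is to compare the weighted sum $\sum_i a_i^2 b_i$ against the simpler sum $\sum_i a_i b_i$. First I would evaluate the latter by the law of total expectation:
\[
  \sum_{i=1}^n a_i b_i = \sum_{i=1}^n \Pr[\pi_1 = x_i]\, x_i - \sum_{i=1}^n \Pr[\pi_1 = x_i]\, V_{-i} = \E{\pi_1} - \E{\pi_2}.
\]
I would then argue $\E{\pi_1} - \E{\pi_2} \ge 0$ (in fact strictly positive) using Lemma~\ref{lem:ABBA}: grouping the contribution of each unordered pair $\{i,j\}$ with $i<j$ gives $\E{\pi_1} - \E{\pi_2} = \sum_{i<j}(x_i - x_j)\p{\Pr[\pi_1 = x_i \cap \pi_2 = x_j] - \Pr[\pi_1 = x_j \cap \pi_2 = x_i]}$, and since $\Pr[\pi_1 = x_i \cap \pi_2 = x_j] = \phi\,\Pr[\pi_1 = x_j \cap \pi_2 = x_i]$ with $\phi > 1$ and $x_i > x_j$, every term is nonnegative and at least one is strictly positive.

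Finally I would upgrade from $\sum_i a_i b_i$ to $\sum_i a_i^2 b_i$ via Lemma~\ref{lem:prob_sum}. Let $C_H = \sum_j a_j^2$, and set $p_i = a_i^2/C_H$, $q_i = a_i$, and $y_i = -b_i$. Then $p_i$ and $q_i$ are both probability distributions, the ratio $p_i/q_i = a_i/C_H$ is decreasing in $i$ (as $a_i$ is), and $y_i$ is strictly increasing (as $b_i$ is strictly decreasing). The lemma gives $\sum_i p_i y_i < \sum_i q_i y_i$, i.e. $\tfrac{1}{C_H}\sum_i a_i^2 b_i > \sum_i a_i b_i = \E{\pi_1} - \E{\pi_2} \ge 0$, so $U_H - U_{HH} = \sum_i a_i^2 b_i > 0$, as required. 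The one point demanding care is confirming the hypotheses of Lemma~\ref{lem:prob_sum} with the correct orientation --- that reweighting by the decreasing sequence $a_i$ shifts mass toward the small-index terms where $b_i$ is largest --- and checking that $p \ne q$ (true since the $a_i$ are not constant) so the conclusion is strict; establishing $\E{\pi_1} \ge \E{\pi_2}$ cleanly is the only other substantive step.
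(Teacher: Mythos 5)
Your proof is correct, but it follows a genuinely different route from the paper's. The paper couples the competitor's (i.i.d.) ranking $\tau$ directly into the utility difference, writing $U_H - U_{HH} = \E{(\pi_1 - \pi_2) \cdot \ind{\pi_1 = \tau_1}}$, expands this over pairs $i < j$, and shows each pairwise term is positive using independence, the fact that $\Pr[\tau_1 = x_i] > \Pr[\tau_1 = x_j]$ for $i<j$, and Lemma~\ref{lem:ABBA}; no rearrangement argument is needed. You instead start from the squared-weight decomposition $U_{HH} - U_H = \sum_{i=1}^n \Pr[\pi_1 = x_i]^2 \p{V_{-i} - x_i}$ (which, as you note, is pure algebra plus independence, so there is no circularity even though the paper derives it in the section that later cites this lemma), identify the linear-weight analogue $\sum_i \Pr[\pi_1 = x_i]\p{x_i - V_{-i}} = \E{\pi_1} - \E{\pi_2}$ by the law of total expectation, prove this is strictly positive via Lemma~\ref{lem:ABBA}, and then pass from linear to squared weights with the rearrangement inequality of Lemma~\ref{lem:prob_sum}, using that both $\Pr[\pi_1 = x_i]$ (Lemma~\ref{lem:p1xj}) and $x_i - V_{-i}$ are strictly decreasing in $i$. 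Your application of Lemma~\ref{lem:prob_sum} checks out, and you rightly flag its two delicate points: the strict conclusion requires $p \ne q$ (true, since $\Pr[\pi_1 = x_i]$ is non-constant), and the orientation of the monotonicity hypotheses. The trade-off: your argument carries more dependencies (Lemma~\ref{lem:prob_sum}, Lemma~\ref{lem:p1xj}, and the strict monotonicity of $V_{-i} - x_i$, which rests on the conditional-Mallows, value-independence observation), all of which the paper establishes independently of this lemma, whereas the paper's proof is shorter and essentially self-contained given Lemma~\ref{lem:ABBA}. In exchange, your route exposes the result as a correlation inequality and yields a quantitative strengthening the paper's term-by-term argument does not state: $U_H - U_{HH} > \Pr[\pi_1 = \tau_1]\p{\E{\pi_1} - \E{\pi_2}}$, since $C_H = \sum_i \Pr[\pi_1 = x_i]^2 = \Pr[\pi_1 = \tau_1]$ for an independent copy $\tau$.
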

\begin{proof}
  Intuitively, this is because selecting first is better than selecting second.
  To prove this, let $\pi$ and $\tau$ be ranking generated by independent human
  evaluators under the Mallows Model, i.e., $\pi, \tau \sim \mc F_{\theta_H}$.
  \begin{align*}
    U_H(\theta_A, \theta_H) - U_{HH}(\theta_A, \theta_H)
    &= \E{\pi_1} - \E{\tau_1 \cdot \ind{\pi_1 \ne \tau_1} + \tau_2 \cdot
    \ind{\pi_1 = \tau_1}} \\
    &= \E{(\pi_1 - \tau_2) \cdot \ind{\pi_1 = \tau_1}} \\
    &= \E{(\pi_1 - \pi_2) \cdot \ind{\pi_1 = \tau_1}}
  \end{align*}
  For any $i < j$, conditioned on $\pi_1 = \tau_1$, they are more likely to be
  correctly ordered than not:
  \begin{align*}
    \E{(\pi_1 - \pi_2) \cdot \ind{\pi_1 = \tau_1}}
    &= \sum_{i < j} \p{\Pr[\pi_1 = x_i \cap \tau_1 = x_i \cap \pi_2 = x_j] -
    \Pr[\pi_1 = x_j \cap \tau_1 = x_j \cap \pi_2 = x_i]} (x_i - x_j) \\
    &= \sum_{i < j} \p{\Pr[\pi_1 = x_i \cap \pi_2 = x_j] \Pr[\tau_1 = x_i] -
    \Pr[\pi_1 = x_j \cap \pi_2 = x_i] \Pr[\tau_1 = x_j]} (x_i - x_j) \\
    &> \sum_{i < j} \p{\Pr[\pi_1 = x_i \cap \pi_2 = x_j] \Pr[\tau_1 = x_j] -
    \Pr[\pi_1 = x_j \cap \pi_2 = x_i] \Pr[\tau_1 = x_j]} (x_i - x_j) \\
    &= \sum_{i < j} \p{\Pr[\pi_1 = x_i \cap \pi_2 = x_j] -
    \Pr[\pi_1 = x_j \cap \pi_2 = x_i]} (x_i - x_j) \\
    &\ge \sum_{i < j} \p{\phi_H\Pr[\pi_1 = x_j \cap \pi_2 = x_i] -
    \Pr[\pi_1 = x_j \cap \pi_2 = x_i]} (x_i - x_j) \\
    &> 0
  \end{align*}
\end{proof}

\subsection{Proof of Theorem~\ref{thm:H_opt}}
\label{app:H_opt_proof}
To prove this theorem, we make use of the following lemma.
\begin{lemma}
  \label{lem:monotone}
  Under the Mallows model, the probability that any two items $i < j$ are
  correctly ranked increases monotonically with the accuracy parameter $\phi$.
\end{lemma}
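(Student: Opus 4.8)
The plan is to establish monotonicity by analyzing the \emph{odds} of the pair being correctly versus incorrectly oriented. Fix $i<j$, so that the true values satisfy $x_i>x_j$ and the correct orientation places $i$ before $j$. Let $A$ be the set of permutations in which $i$ precedes $j$, and $B$ the set in which $j$ precedes $i$. Since $\Pr_\phi[A]+\Pr_\phi[B]=1$, it suffices to prove that the odds ratio $\Pr_\phi[A]/\Pr_\phi[B]$ is strictly increasing in $\phi$.

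The key device is the value-swap involution $\sigma\mapsto\hat\sigma$ that exchanges the two elements $i$ and $j$ wherever they appear while leaving every other element fixed in place; this is a bijection between $B$ and $A$. I would first compute the resulting change in Kendall tau distance. Writing $\inv(\cdot)$ for the number of inversions relative to $\pi^*$, a short case analysis on each element $m$ lying physically between $i$ and $j$ in $\sigma$ (splitting on whether the true rank of $m$ is above $i$, strictly between $i$ and $j$, or below $j$) shows that only the middle case alters the inversion count, contributing $2$ each, while the pair $\{i,j\}$ itself contributes $1$. Hence $\inv(\sigma)-\inv(\hat\sigma)=1+2c(\sigma)$, where $c(\sigma)\ge 0$ counts the elements whose true rank is strictly between $i$ and $j$ and which lie between them in $\sigma$. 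In particular $\Pr[\hat\sigma]/\Pr[\sigma]=\phi^{1+2c(\sigma)}\ge\phi$, so the bijection sends each incorrectly ordered permutation to a correctly ordered one of strictly smaller distance.

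To convert this into monotonicity in $\phi$, I would invoke the monotone-likelihood-ratio structure of the Mallows family in the statistic $d=\inv$: since $\Pr_{\phi'}[\pi]/\Pr_\phi[\pi]\propto(\phi/\phi')^{d(\pi)}$ is decreasing in $d(\pi)$ when $\phi'>\phi$, increasing $\phi$ shifts mass toward permutations with fewer inversions. Grouping permutations by distance $m$ and writing $c(m)=\Pr_\phi[A\mid d=m]=a_m/(a_m+b_m)$, which is $\phi$-independent because it is a ratio of permutation counts, one obtains $\Pr_\phi[A]=\mathbb{E}[c(D)]$, where $D$ is the distance of a Mallows draw and the distribution of $D$ is stochastically decreasing as $\phi$ grows. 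Thus it suffices to show that $c(m)$ is non-increasing in $m$, i.e.\ that a draw with fewer total inversions is more likely to orient this particular pair correctly; the swap bijection above is exactly the tool that relates the correct-orientation counts $a_m$ to incorrect-orientation counts $b_{m'}$ at larger distances.

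The main obstacle is precisely this last step: because the value swap changes the inversion count by $1+2c(\sigma)$ rather than by exactly one, the correspondence between $\{a_m\}$ and $\{b_m\}$ is not a clean shift, and the intermediate elements induce correlation between the pair's orientation and the remaining inversions that must be controlled. Equivalently, the crux is the conditional comparison $\mathbb{E}_\phi[d\mid A]<\mathbb{E}_\phi[d\mid B]$, which is what the derivative of the odds ratio reduces to. I would resolve this either by careful bookkeeping with the swap bijection, noting that the extra $2c(\sigma)$ only pushes the inequality in the favorable direction, or, as a cleaner alternative, by the monotone coupling of the repeated-insertion representation of the Mallows model, under which decreasing every insertion depth can only correct inversions, so that the event $A$ is monotone under the coupling and $\Pr_\phi[A]$ is therefore non-decreasing in $\phi$.
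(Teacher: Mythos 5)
Your setup is right and your computation of the swap involution's effect on Kendall distance ($\inv(\sigma)-\inv(\hat\sigma)=1+2c(\sigma)$) is correct, but the proof has a genuine gap exactly where you flag ``the main obstacle'': neither of your two proposed resolutions is carried out, and the one you lean on first does not work as stated. The claim that the extra $2c(\sigma)$ ``only pushes the inequality in the favorable direction'' is unjustified: writing the odds via your bijection as $\Pr_\phi[A]/\Pr_\phi[B]=\mathbb{E}_{\phi,B}\bigl[\phi^{1+2c(\sigma)}\bigr]$, where the expectation is under the Mallows law conditioned on $B$, an increase in $\phi$ has two competing effects --- each term $\phi^{1+2c(\sigma)}$ grows, but the conditional law simultaneously shifts mass toward permutations of $B$ with fewer inversions, which are precisely those with \emph{smaller} $c(\sigma)$, so the reweighting pushes the expectation \emph{down}. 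Likewise, your stochastic-dominance route requires that $c(m)=a_m/(a_m+b_m)$ be non-increasing in $m$; this is a nontrivial combinatorial statement about the Mahonian strata which the swap bijection does not deliver (it maps distance $m$ in $B$ to distance $m-1-2c(\sigma)$ in $A$, not a clean shift, as you yourself note), and you give no proof of it. So the crux of the lemma --- equivalently the inequality $\mathbb{E}_\phi[d\mid A]<\mathbb{E}_\phi[d\mid B]$ that you correctly identify --- is named but never established.

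Your second alternative, the repeated-insertion coupling, can in fact be made rigorous, but the one-clause justification ``decreasing every insertion depth can only correct inversions'' \emph{is} the whole proof and needs an argument: letting $V_k$ be the number of inversions created when item $k$ is inserted, one must show by induction on the insertion step that if $V_k'\le V_k$ for every $k$, then the number of already-placed items following item $i$ is pointwise at least as large under $V'$ as under $V$, whence $V_j'\le V_j$ forces the pair $(i,j)$ to remain correctly ordered. Even then the coupling yields only \emph{weak} monotonicity, whereas the lemma (as used in Theorem~\ref{thm:H_opt} to get uniqueness when $\phi_H>\phi_A$) needs strict increase, requiring an additional positive-probability-of-improvement step. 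For contrast, the paper avoids all of this with an elementary factorization: it first reduces to the extreme pair $(1,n)$ by showing the odds ratio depends only on the induced sub-Mallows model on items $i,\dots,j$, then conditions on the positions $(k,\ell)$ of items $1$ and $n$, at which point every inversion not involving these two items factors out of both numerator and denominator, leaving $\Pr[1\succ n]/\Pr[n\succ 1]=\p{\sum_{k<\ell}\phi^{\ell-k}}/\p{\sum_{k>\ell}\phi^{\ell-k+1}}$; each numerator term has a strictly positive exponent and each denominator term a non-positive one, so the ratio is strictly increasing in $\phi$ term by term.
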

\def\Sij{S_{i \succ j}}
\def\Sji{S_{j \succ i}}
\def\Son{S_{1 \succ n}}
\def\Sno{S_{n \succ 1}}
\begin{proof}
  Let $\inv(\pi)$ be the number of inversions in a permutation $\pi$.
  Under the Mallows model, the probability of observing $\pi$ is proportional to
  $\phi^{-\inv(\pi)}$. Let $\Sij$ (resp. $\Sji$) be the set of
  permutations where $i$ is ranked before $j$ (resp. $j$ is ranked before $i$).
  Then, the probability $i$ is ranked before $j$ is
  \begin{align*}
    \Pr[i \succ j]
    &= \frac{\sum_{\pi \in \Sij} \phi^{-\inv(\pi)}}{\sum_{\pi \in \Sij}
    \phi^{-\inv(\pi)} + \sum_{\pi \in \Sji} \phi^{-\inv(\pi)}}.
  \end{align*}
  We will show that $\frac{d}{d\phi} \Pr[i \succ j] > 0$. Note that this is
  equivalent to showing $\frac{d}{d\phi} \frac{\Pr[i \succ j]}{\Pr[j \succ i]} >
  0$. Note that
  \begin{align*}
    \frac{\Pr[i \succ j]}{\Pr[j \succ i]}
    &= \frac{\sum_{\pi \in \Sij} \phi^{-\inv(\pi)}}{\sum_{\pi \in \Sji}
    \phi^{-\inv(\pi)}}.
  \end{align*}
  Let $\pi_{i:j}$ be the subsequence of $\pi$ containing elements $i$ through
  $j$. Then, we have
  \begin{align*}
    \frac{\Pr[i \succ j]}{\Pr[j \succ i]}
    &= \frac{\sum_{\pi \in \Sij} \phi^{-\inv(\pi)}}{\sum_{\pi \in \Sji}
    \phi^{-\inv(\pi)}} \\
    &= \frac{\sum_{\pi_{i:j} : \pi \in \Sij} \phi^{-\inv(\pi_{i:j})} \sum_{\pi':
        \pi'_{i:j} = \pi_{i:j}} \phi^{\inv(\pi_{i:j})-\inv(\pi')}}{\sum_{\pi_{i:j} :
      \pi \in \Sji} \phi^{-\inv(\pi_{i:j})} \sum_{\pi': \pi'_{i:j} = \pi_{i:j}}
    \phi^{\inv(\pi_{i:j})-\inv(\pi')}} \\
    &= \frac{\sum_{\pi_{i:j} : \pi \in \Sij}
      \phi^{-\inv(\pi_{i:j})}}{\sum_{\pi_{i:j} : \pi \in \Sji}
      \phi^{-\inv(\pi_{i:j})}}
  \end{align*}
  Intuitively, the term $\sum_{\pi': \pi'_{i:j} = \pi_{i:j}}
  \phi^{\inv(\pi_{i:j})-\inv(\pi')}$ does not depend on $\pi_{i:j}$ because for
  any $\pi_{i:j}$, if we fix the order and positions of the remaining elements,
  the number of inversions involving at least one element outside of $i:j$
  (i.e., $\inv(\pi') - \inv(\pi_{i:j})$) is a constant. For fixed $\pi_{i:j}$,
  there is a bijection between permutations $\pi': \pi'_{i:j} = \pi_{i:j}$ and a
  fixed order and position of the remaining elements (excluding $i:j$), meaning
  this sum does not depend on $\pi_{i:j}$. Thus, for the remainder of this
  proof, we can assume without loss of generality that $i=1$ and $j=n$. The
  quantity of interest becomes
  \begin{align*}
    \frac{\Pr[1 \succ n]}{\Pr[n \succ 1]}
    &= \frac{\sum_{\pi_{1:n} : \pi \in \Son}
      \phi^{-\inv(\pi_{1:n})}}{\sum_{\pi_{i:j} : \pi \in \Sno}
    \phi^{-\inv(\pi_{1:n})}} \\
    &= \frac{\sum_{\pi \in \Son} \phi^{-\inv(\pi)}}{\sum_{\pi \in \Sno}
    \phi^{-\inv(\pi)}}
  \end{align*}

  Next, we observe that we can similarly ignore inversions between two elements
  that are neither $1$ nor $n$. To see this, let $\inv_{1,n}(\pi)$ be the number
  of inversions involving at least one of $1$ and $n$. Then, if we fix the order
  and positions of $1$ and $n$, all possible permutations of the remaining
  elements $2$ through $n-1$ produce the same number of inversions
  $\inv_{1,n}(\pi)$. More formally, let $\pi_{(1)}$ and $\pi_{(n)}$ be the
  respective positions of elements $1$ and $n$. Then, this we have
  \begin{align*}
    \sum_{\pi \in \Son} \phi^{-\inv(\pi)}
    &= \sum_{k < \ell} \sum_{\pi : \pi_{(1)} = k, \pi_{(n)} = \ell}
    \phi^{-\inv(\pi)} \\
    &= \sum_{k < \ell} \sum_{\pi : \pi_{(1)} = k, \pi_{(n)} = \ell}
    \phi^{-\inv_{1,n}(\phi)} \cdot 
    \phi^{\inv_{1,n}(\phi)-\inv(\pi)} \\
    &= \sum_{k < \ell} \phi^{-(k-1) - (n-\ell)} \sum_{\pi : \pi_{(1)} = k,
    \pi_{(n)} = \ell} \phi^{\inv_{1,n}(\phi)-\inv(\pi)}
  \end{align*}
  As noted above, $\sum_{\pi : \pi_{(1)} = k, \pi_{(n)} = \ell}
  \phi^{\inv_{1,n}(\phi)-\inv(\pi)}$ does not depend on $k$ or $\ell$, since
  every permutation of the remaining elements yields the same number of
  inversions among them regardless of $k$ and $\ell$. A similar argument yields
  \begin{align*}
    \sum_{\pi \in \Sno} \phi^{-\inv(\pi)}
    &= \sum_{k > \ell} \phi^{-(k-1) - (n-\ell) + 1} \sum_{\pi : \pi_{(1)} = k,
    \pi_{(n)} = \ell} \phi^{\inv_{1,n}(\phi)-\inv(\pi)}
  \end{align*}
  Putting these together, we have
  \begin{align*}
    \frac{\Pr[1 \succ n]}{\Pr[n \succ 1]}
    &= \frac{\sum_{k < \ell} \phi^{-(k-1) - (n-\ell)} \sum_{\pi : \pi_{(1)} = k,
      \pi_{(n)} = \ell} \phi^{\inv_{1,n}(\phi)-\inv(\pi)}}{\sum_{k > \ell}
      \phi^{-(k-1) - (n-\ell)+1} \sum_{\pi : \pi_{(1)} = k, \pi_{(n)} = \ell}
    \phi^{\inv_{1,n}(\phi)-\inv(\pi)}} \\
    &= \frac{\sum_{k < \ell} \phi^{-(k-1) - (n-\ell)}}{\sum_{k > \ell}
    \phi^{-(k-1) - (n-\ell)+1}} \\
    &= \frac{\sum_{k < \ell} \phi^{-(k-1) - (n-\ell)}}{\sum_{k > \ell}
    \phi^{-(k-1) - (n-\ell)+1}} \cdot \frac{\phi^{n-1}}{\phi^{n-1}} \\
    &= \frac{\sum_{k < \ell} \phi^{\ell-k}}{\sum_{k > \ell}
    \phi^{\ell-k+1}}
  \end{align*}
  Note that each term in the numerator is strictly increasing in $\phi$, while
  each term in the denominator is weakly decreasing in $\phi$. As a result,
  $\frac{d}{d\phi} \frac{\Pr[1 \succ n]}{\Pr[n \succ 1]} > 0$, meaning for any $i
  < j$, $\frac{d}{d\phi} \Pr[i \succ j] > 0$.
\end{proof}

With this, we proceed inductively, showing that when $\phi_H \ge \phi_A$, each
firm rationally chooses to use $H$. For the first firm, by
Lemma~\ref{lem:monotone}, $H$ is more likely than $A$ to correctly order any
pair of candidates, meaning it produces higher expected utility. Similarly, for
any subsequent firm, conditioned on the remaining candidates, $H$ is still more
likely to correctly order any pair of remaining candidates, meaning it leads to
higher expected utility. A similar argument shows that all firms strictly prefer
$H$ when $\phi_H > \phi_A$.

\end{document}